\documentclass{article}
\usepackage[utf8]{inputenc}
\usepackage[english]{babel}
\usepackage{ dsfont }
\usepackage{ stmaryrd }
\usepackage{url}
\usepackage{hyperref}
\usepackage[affil-it]{authblk}

\usepackage{amsmath,amsthm,amssymb}
\usepackage{todonotes}
\usepackage{bm}
\usepackage{enumerate}
\usepackage{textcomp}
\usepackage{array}
\usepackage{wrapfig}
\usepackage{multirow}
\usepackage{algorithm, algpseudocode}
\usepackage{algorithmicx}  
\usepackage{tabularx}
\usepackage{csquotes}
\usepackage[english]{babel}
\usepackage{graphicx}
\usepackage[margin=3cm]{geometry}
\usepackage{latexsym}
\usepackage{hyperref}
\usepackage{amssymb}
\usepackage{subfigure}

\hypersetup{colorlinks=true,citecolor=blue,linkcolor=blue,filecolor=blue,urlcolor=blue}

\DeclareMathOperator{\rank}{rank}
\theoremstyle{plain} 
\newtheorem{theorem}{Theorem}[section]

\usepackage{url}
\newtheorem{lemma}[theorem]{Lemma}

\newtheorem{proposition}[theorem]{Proposition}

\newtheorem{example}{Example}
\theoremstyle{definition}
\newtheorem{definition}[theorem]{Definition}
\newtheorem{remark}[theorem]{Remark}

%\setuptodonotes{ color=green!40, shadow}

\DeclareMathOperator{\unif}{Unif}

\newcommand{\csys}[1]{\bC^{#1}} 
\newcommand{\qsys}[1]{\mathrm{L}(#1)}

\newcommand*{\bid}{\mathbf{1}}
\newcommand*{\bU}{\mathbb{U}}

\newcommand*{\bC}{\mathbb{C}}
\newcommand*{\bP}{\mathbf{P}}
\newcommand*{\cl}{\mathbf{Cl}}
\newcommand*{\cllaw}{\cL_{\bf{Clifford}}}
\newcommand*{\plaw}{\cL_{\bf{Pauli}}}
\newcommand*{\Haar}{\cL_{\bf{Haar}}}
\newcommand*{\cA}{\mathcal{A}}
\newcommand*{\cB}{\mathcal{B}}
\newcommand*{\cC}{\mathcal{C}}

\newcommand*{\cD}{\mathcal{D}}

\newcommand*{\cG}{\mathcal{G}}

\newcommand*{\cI}{\mathcal{I}}
\newcommand*{\dI}{\mathbb{I}}
 
\newcommand*{\cN}{\mathcal{N}}
\newcommand*{\cM}{\mathcal{M}}
\newcommand*{\cL}{\mathcal{L}}
\newcommand*{\cO}{\mathcal{O}}
\newcommand*{\cP}{\mathcal{P}}

\newcommand*{\cR}{\mathcal{R}}

\newcommand*{\fS}{\mathfrak{S}}

\newcommand*{\cX}{\mathcal{X}}

\DeclareMathOperator{\median}{median}

\DeclareMathOperator{\KL}{KL}
\newcommand*{\eps}{\varepsilon}

\newcommand*\diff{\mathop{}\!\mathrm{d}}

\newcommand*{\id}{\mathrm{id}}
\newcommand*{\tr}[1]{\mathrm{Tr}\left[#1\right]}
\newcommand*{\ptr}[2]{\mathrm{Tr}_{#1}\left[#2\right]}
\newcommand*{\ket}[1]{| #1 \rangle}
\newcommand*{\bra}[1]{\langle #1 |}

\newcommand*{\proj}[1]{|#1\rangle\!\langle #1|}

\newcommand*{\mge}{\succcurlyeq}
\newcommand*{\mle}{\preccurlyeq}

\newcommand*{\pr}[1]{\mathbb{P}\left[#1 \right]}

\newcommand*{\ex}[1]{\mathbb{E}\left[#1 \right]}
\newcommand*{\exs}[2]{\mathbb{E}_{#1}\left[#2 \right]}
\newcommand*{\prs}[2]{\mathbb{P}_{#1}\left[#2 \right]}

\newcommand{\ceil}[1]	{\left\lceil #1 \right\rceil}

\newcommand*{\comment}[1] {}

\pretolerance=1000
\usepackage[
backend=biber,
style=numeric,
sorting=ynt, maxbibnames=99 
]{biblatex}

\addbibresource{bib.bib}

\title{Learning Properties of Quantum States \\
Without the IID Assumption}
\author[1]{Omar Fawzi}
\author[2]{Richard Kueng}
\author[3]{Damian Markham}
\author[4,1]{Aadil Oufkir\footnote{email: \href{oufkir@physik.rwth-aachen.de}{oufkir@physik.rwth-aachen.de}}}

\affil[1]{\small{Univ Lyon, Inria, ENS Lyon, UCBL, LIP, Lyon, France}
}
\affil[2]{\small{Institute for Integrated Circuits, Johannes Kepler University Linz, Altenberger Straße 69, Austria}}
\affil[3]{\small{Sorbonne Université, CNRS, LIP6, F-75005 Paris, France}}
\affil[4]{\small{Institute for Quantum Information,
  RWTH Aachen University,
  Aachen, Germany}}
\begin{document}
	\maketitle
	\begin{abstract}
{  We develop a framework for learning properties of quantum states beyond the assumption of independent and identically distributed (i.i.d.)\ input states. We prove that, given any learning problem (under reasonable assumptions), an algorithm designed for i.i.d.\ input states can be adapted to handle input states of any nature, albeit at the expense of a polynomial increase in training data size (aka sample complexity). 
  Importantly, this polynomial increase in sample complexity can be substantially improved to polylogarithmic if the learning algorithm in question only requires non-adaptive, single-copy measurements. Among other applications, this allows us to generalize the classical shadow framework to the non-i.i.d.\ setting while only incurring a comparatively small loss in sample efficiency. 
  We use rigorous quantum information theory to prove our main results. In particular, we leverage permutation invariance and randomized single-copy measurements to derive a new quantum de Finetti theorem that mainly addresses measurement outcome statistics and, in turn, scales much more favorably in Hilbert space dimension. }
	
	\end{abstract}

\section{Introduction}

The advent of quantum technologies has led to a notable amount of tools for quantum state and process learning. These are employed as tools within use cases, but also to test applications and devices themselves.  
However, almost all existing methods require the assumption that the devices or states being tested are prepared in the same way over time -- following an identical and independent distribution (i.i.d.) \cite{flammia2011direct, christandl2012reliable,o2016efficient,kueng2017low,pallister2018optimal, aaronson2019shadow,huang2020predicting,eisert2020quantum,buadescu2021improved, chen2022tight-mixedness}. 
In various situations, this assumption should not be taken for granted. For instance, 
in time correlated noise, states and devices change in time in a non-trivial way \cite{bylander2011noise,yan2013rotating,burnett2014evidence}.
Moreover, in settings where we cannot trust the devices or states -- for example, originating from an untrusted, possibly malicious manufacturer, or states that are distributed over untrusted channels -- the assumption of i.i.d.\ state preparations can be exploited by malicious parties to mimic good behaviour whilst corrupting the intended application. Avoiding this assumption is crucial for various applications such as verified quantum computation \cite{gheorghiu2019verification} or tasks using entangled states in networks \cite{markham2020simple}, such as authentication of quantum communication \cite{barnum2002authentication}, anonymous communication \cite{brassard2007anonymous} or distributed quantum sensing \cite{shettell2022cryptographic}. 
{At the core of the security for these applications is some verification procedure which does not assume i.i.d. resources, however they are all catering for particular states or processes, with independent proofs and with differing efficiencies}

The main contribution of this paper is to develop a framework to \emph{extend existing i.i.d.\  learning algorithms into a fully general (non-i.i.d.) setting while preserving rigorous performance guarantees.} 
See Theorem~\ref{thm:general} and  Theorem~\ref{thm-intro: iid-meas} for the  type of  results we provide. The main technical ingredient is a \emph{new variant of the quantum de Finetti theorem} for randomized permutation invariant measurements (See Theorem~\ref{thm:rdm-local-definetti}). 
As a concrete example, we apply our findings to the task of 
feature prediction with randomized measurements (`\emph{classical shadows}')~\cite{huang2020predicting,paini2019approximate,elben2023randomized} (See Proposition~\ref{cor:classical-shadows}). 
We then apply these results to the problem of state verification, allowing us to find the first explicit protocol for verifying an arbitrary {multipartite} state, showing the power of these techniques.

\section{Results}
In the following, we start by showing how to evaluate an algorithm in the non-i.i.d.\ setting. Then, we show  that, in principle, general algorithms can be adapted to encompass non-i.i.d.\ input states at the expense of an overhead in the copy complexity. Next, we reduce significantly this overhead for incoherent non-adaptive algorithms using a new {quantum} de Finetti theorem. Finally, we apply this extension to the problems of classical shadows and verification of pure states in the non-i.i.d.\ setting.
\begin{figure}
\centering
\includegraphics[height=5cm]{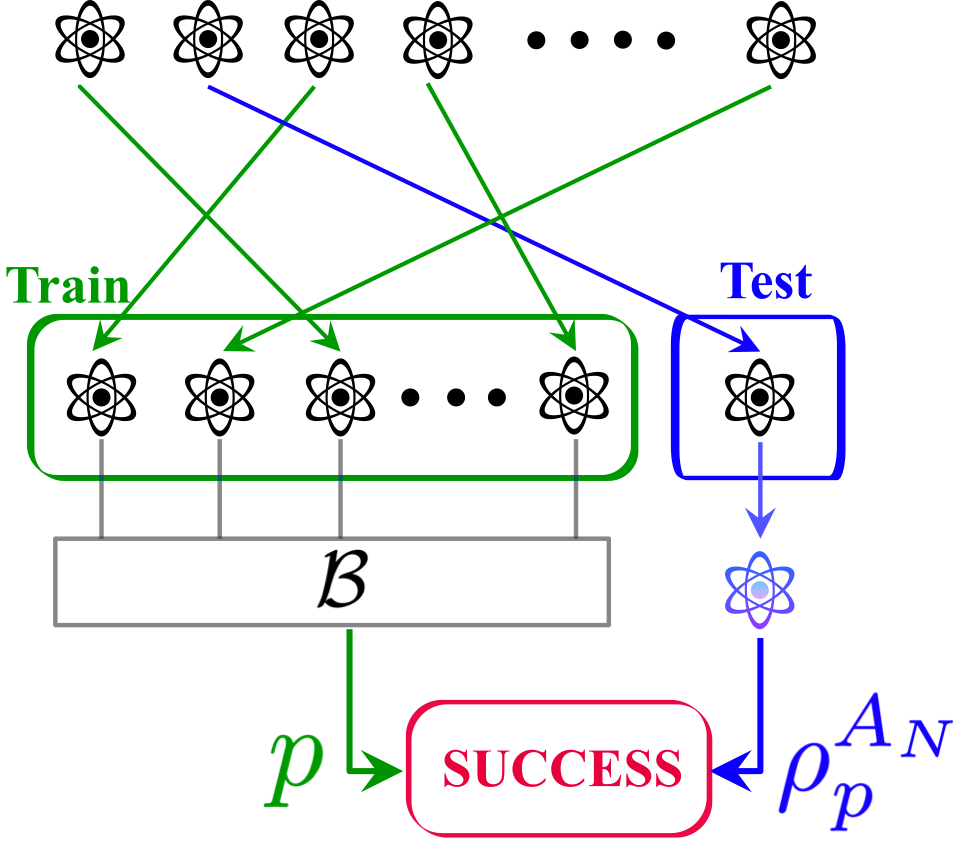}
\caption{
\emph{Illustration of a general state learning algorithm:}
 A learning algorithm consumes $(N-1)$ copies of $\rho$ to construct a prediction $p$. Success occurs if $p$ is (approximately) compatible with the remaining  post-measurement test copy $\rho_p^{A_N}$. 
}
\label{fig:intro-problem-form}
\end{figure}

\subsection{Evaluating a learning algorithm}

The first difficulty we face is to define what it means for a learning algorithm to achieve some learning task on a non-i.i.d.\ state. 
In the i.i.d.\ setting, a learning algorithm requests $N$ copies of an unknown quantum state and is provided with the quantum state $\rho = \sigma^{\otimes N} \in \left(\mathbb{C}^{d \times d}\right)^{\otimes N}$. Subsequently, the learning algorithm makes predictions about a property of the quantum state $\sigma$. This algorithm is evaluated by contrasting its predictions with the actual property of the quantum state $\sigma$. 
{To motivate our general definition, we imagine a black box from which we can request ``copies''. On the first query, we receive a system that we call $A_1$ and on the $k$-th query, we receive the system $A_k$. Learning means making a statement about some of the outputs of the black box (e.g., the state is close to $\ket{0}$).
 With the i.i.d. assumption, the black box always outputs the same state. Removing the i.i.d. assumption, the learning algorithm is presented with a general quantum state $\rho \in \left(\mathbb{C}^{d \times d}\right)^{\otimes N}$ where $N$ is the number of requested copies.
In this case, we have to specify the system about which we make the statement (this is the system that would be used for a later application for example). The most natural choice is to take a system at random among the ones that were requested. 
In other words, we use the common idea in machine learning of separating the data set (here the $N$ systems that we denote $A_1, \dots, A_N$) into a \emph{training set} used for estimation and a \emph{test set} used for evaluation.} We refer to Figure~\ref{fig:intro-problem-form} for a visual illustration. This idea was previously used in the context of quantum tomography~\cite{christandl2012reliable}, verification~\cite{zhu2019general}  and generalization bounds~\cite{caro2023information} .

The choice of which systems are used for training and which are used for testing is random. More specifically, we apply a random permutation (that the learner does not have access to) to the systems $A_1 \dots A_N$ and we fix the training set to be the first $N-1$ systems and the test set is composed of the last system. Thus, starting with the general state $\rho^{A_1 \dots A_N}$, we obtain after the random permutation a state that we denote $\overline{\rho}^{A_1 \dots A_N}$. Written explicitly 
\[
\overline{\rho}^{A_1 \dots A_N} = \frac{1}{N!} \sum_{\pi \in \mathfrak{S}_N} \rho^{\pi},
\]
where $\mathfrak{S}_N$ denotes the set of permutations and $\rho^{\pi}$ is obtained by permuting the systems $A_1 \dots A_N$ of $\rho$ according to $\pi$. The learning algorithm $\mathcal{B}$ is applied to the training set $A_1 \dots A_{N-1}$ and makes a prediction that we denote $p$ and we test this prediction against the system $A_N$.  
The learning task will be described by a family of sets $\textup{SUCCESS}_\eps$ where $\eps$ should be seen as a precision parameter. The pair $(p, \sigma) \in \textup{SUCCESS}_\eps$ if prediction $p$ is correct for the state $\sigma$ with precision $\eps$. As an example, {for the task of predicting $M$ observables} $O_1, \dots, O_{M}$ (`shadow tomography'), we would have $\bm{p}=(p_1, \dots, p_M) \in [0,1]^M$ and 
\[
\textup{SUCCESS}_\eps = 
\{ ((p_1, \dots, p_M), \sigma) : \forall i \in [M], |p_i- \tr{O_i \sigma} | \leq \eps \}.
\]
{Note that this is precisely the learning task which has motivated (i.i.d.) classical shadows~\cite{huang2020predicting, elben2023randomized}.}

We evaluate a learning algorithm $\cB$ for the task described by $\textup{SUCCESS}_\eps$ on the input state $\rho^{A_1 \dots A_N}$ as follows. The algorithm $\cB$ takes as input the systems $A_1 \dots A_{N-1}$ and outputs a prediction $p \in \cP$ {and a calibration information $c\in \cC$. The role of the calibration information is to determine the  reduced state of $A_N$ and can range from trivial $\emptyset$ to all measurement outcomes. }
{In other words, $(c,p)$ follows the distribution $\cB(\overline{\rho}^{A_1 \dots A_{N-1}})$, which {we denote by}: $(c,p) \sim \cB(\overline{\rho}^{A_1 \dots A_{N-1}})$. For an outcome $(c,p)$, we write $\rho_{c,p}^{A_N}$ for the reduced state of $A_N$ of the state $(\cB \otimes \id)(\overline{\rho}^{A_1 \dots A_N})$ conditioned on the outcome of $\cB$ being $(c,p)$. {Finally, we define}
\begin{align}
 \label{eq:def-error-prob}
	\delta_{\cB}(N, \rho^{A_1 \cdots A_N},\eps)= \prs{(c,p) \sim \cB(\overline{\rho})}{\left(p, \overline{\rho}_{c,p}^{A_N}\right)\notin  \textup{SUCCESS}_\eps}.
\end{align}
}

We make a few remarks about this definition {assuming $c=\emptyset$ for simplicity}. First, in the i.i.d.\  setting we have that $\rho^{A_N}_p = \sigma$ for any $p\in \cP$ and we recover the usual definition of error probability. Second, note that it is essential to consider the state of $A_N$ conditioned on the outcome $p$. One might be tempted to replace $\overline{\rho}^{A_N}_{p}$ with the marginal $\overline{\rho}^{A_N}$ but this would be both unachievable and undesirable. In fact, consider the simple example $\rho = \frac{1}{2}(\proj{0}^{\otimes N} + \proj{1}^{\otimes N})$ and we would like to estimate the value of the observable $O = \proj{1}$. Note that $\prs{p \sim \cB(\overline{\rho})}{\cdot} = \frac{1}{2} \prs{p \sim \cB(\proj{0}^{\otimes N})}{\cdot} + \frac{1}{2} \prs{p \sim \cB(\proj{1}^{\otimes N})}{\cdot}$. As such, with the naive definition using the marginal $\overline{\rho}^{A_N}$ which is $\dI/2$ in this case, the error probability would be given by
$\frac{1}{2} \prs{p \sim \cB(\proj{0}^{\otimes N})}{|\frac{1}{2} - p| > \eps} + \frac{1}{2} \prs{p \sim \cB(\proj{1}^{\otimes N})}{|\frac{1}{2} - p| > \eps}$.
Clearly any good learning algorithm should work for the i.i.d.\ states $\proj{0}^{\otimes N}$, $\proj{1}^{\otimes N}$ and this implies that the error probability $\delta_{\cB}(N, \rho, \eps)$ is close to $1$ for this choice of $\rho$. For this example, it is desirable that the learning algorithm {first} detects which of the two states $\proj{0}^{\otimes N}$ or $\proj{1}^{\otimes N}$ {has been prepared and then learns the state consistently}. This is captured by the definition~\eqref{eq:def-error-prob}.

A third remark about the definition we use is that the error probability is evaluated for the averaged state $\overline{\rho}$, or in other words the learner does not have access to the randomly chosen permutation $\pi$. Another possibility would be to define the error probability as an average over permutations $\pi$ of the error probability evaluated for the permuted state $\rho^{\pi}$, i.e.,
\begin{align}
\label{eq:def-error-prob-prime}
	\delta'_{\cB}(N, \rho^{A_1 \cdots A_N},\eps)= \exs{\pi}{\prs{(c,p) \sim \cB(\rho^{\pi})}{\left(p, (\rho^{\pi})_{c,p}^{A_N}\right)\notin  \textup{SUCCESS}_\eps}}.
\end{align}
It turns out that this definition renders learning impossible in many cases. In fact, we show in Appendix~\ref{sec:hide-permutation} that for the simplest possible classical task of estimating the expectation of a binary random variable, it is not possible to achieve $\delta'_{\cB} < 1/4$ for all states. This shows that requiring $\delta'_{\cB}$ to be small cannot be achieved in general and it justifies our choice in~\eqref{eq:def-error-prob}. 
We also remark that for verification problems, where the prediction is of the form Accept/Reject and we only want to express the soundness condition for all states in expectation, then the expression for the error probability is linear in the state (see Section~\ref{sec:application-verification} and Appendix~\ref{App: verification of pure states in expectation}). As such, in this case, whether the permutation is available to the learner or not does not make a difference. 
With our definition, we have $\delta_{\cB}(N, \rho,\eps)= \delta_{\cB}(N, \overline{\rho},\eps)$, so to make the notation lighter, we assume in the rest of the paper that $\rho$ is permutation invariant, i.e., $\overline{\rho} = \rho$. 

\begin{figure}
\centering
\begin{tabular}{lccr}
\includegraphics[height=5cm]{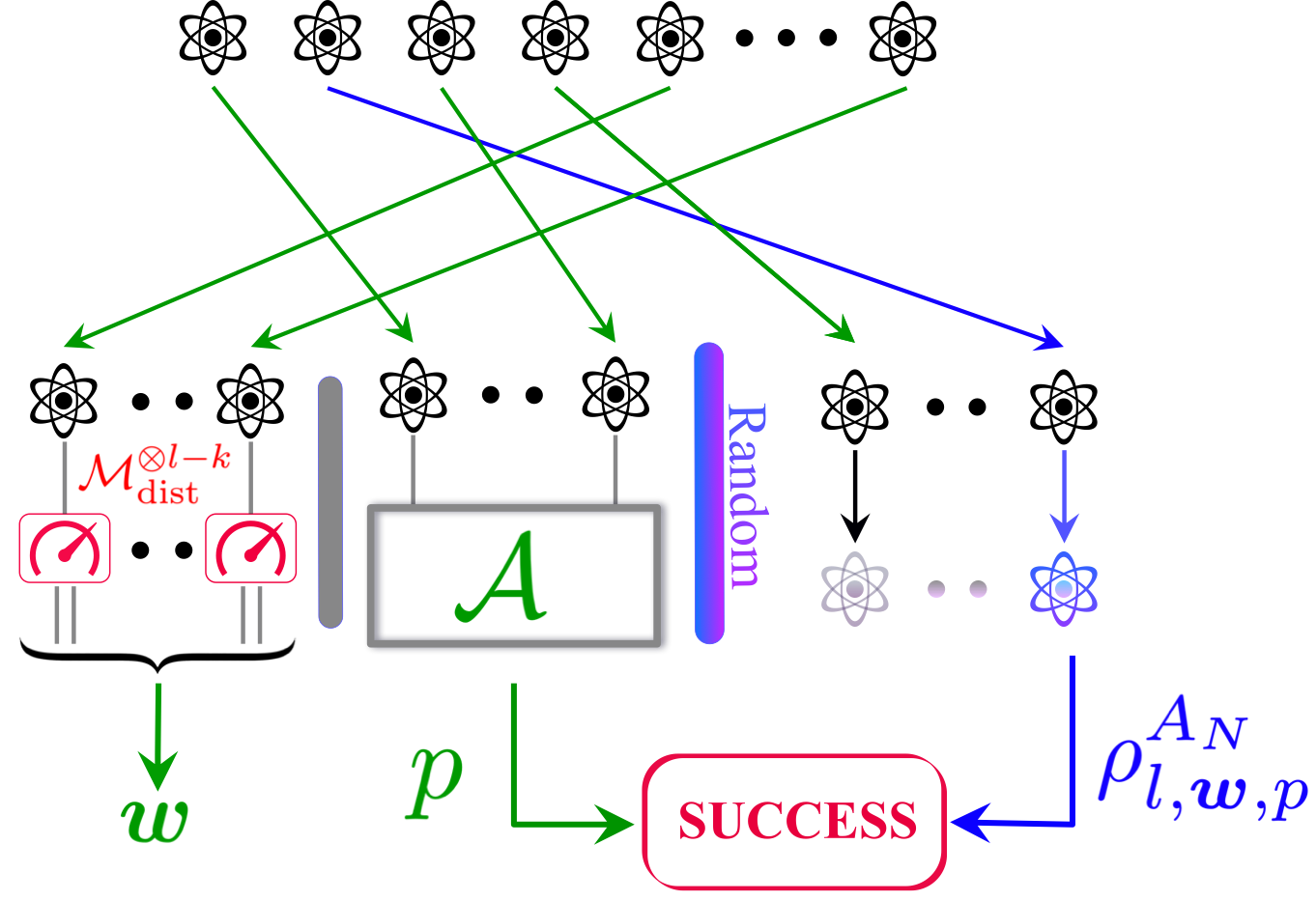}
&
\hspace{-0.4cm}
& 	
\includegraphics[height=5cm]{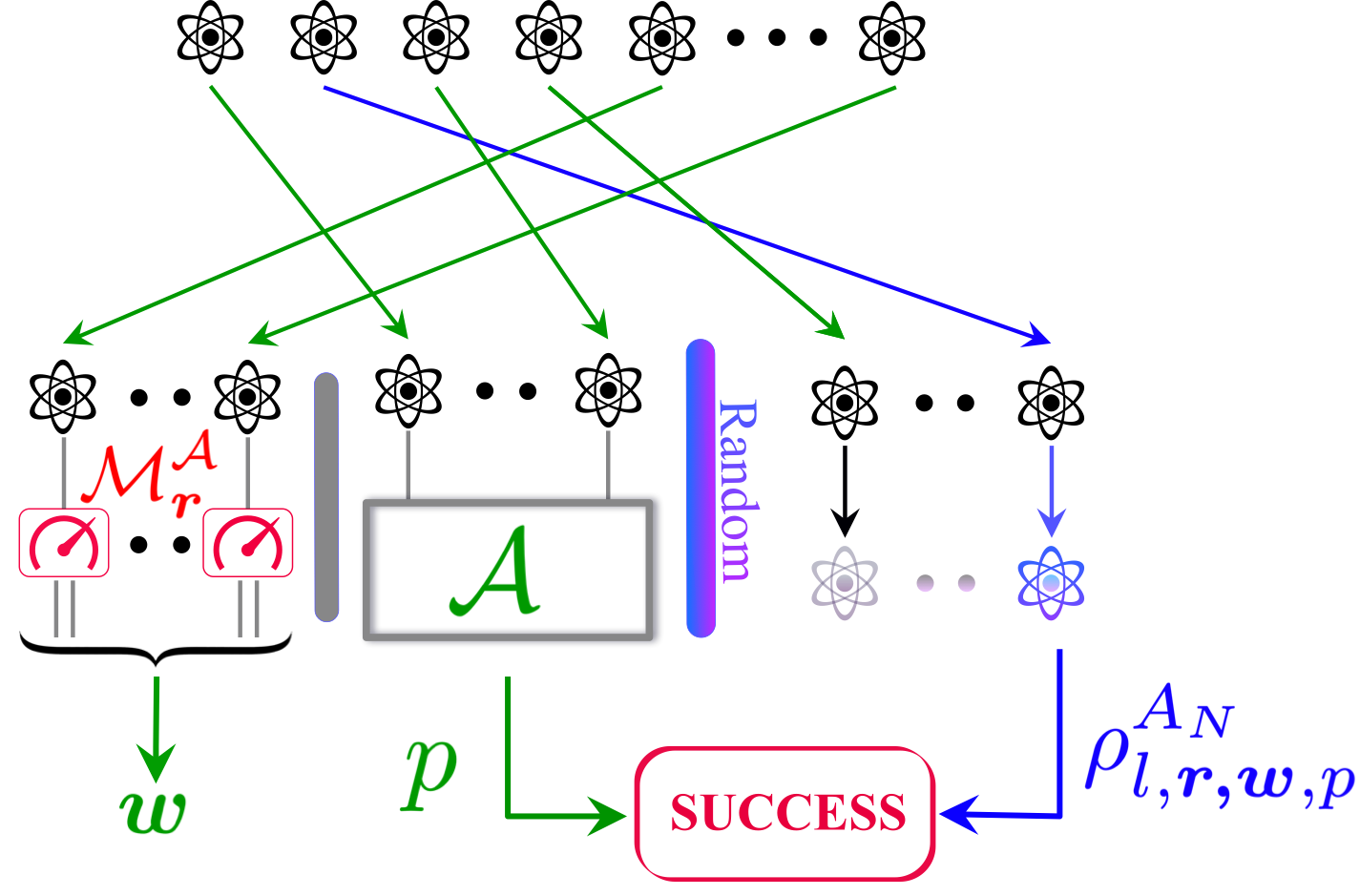}
\end{tabular}
\caption{
Caricature of main results: how to lift an i.i.d. learning algorithm $\mathcal{A}$ beyond the i.i.d.~setting. Left: the performance of general learning algorithms is covered by our first main result (Theorem~\ref{thm:general}). Right: the performance of non-adaptive and incoherent learning algorithms is covered by our second main result (Theorem~\ref{thm-intro: iid-meas}). Restricting to non-adaptive and incoherent measurement $\cM_{\bm{r}}$ leads to much better theoretical performance guarantees. $\cM_{\rm{dist}}$ is a measurement device with low distortion, $\bm{w}$ is calibration, $p$ is prediction, $\cA$ is the data processing of the i.i.d.\ algorithm and $\cM_{\bm{r}}^{\cA}$ is a measurement device uniformly chosen from $\cA$'s set of measurements. Success occurs if $p$ is (approximately) compatible with the remaining post-measurement test copies $\rho^{A_N}_{l, \bm{w},p}$ or $\rho^{A_N}_{l, \bm{r,w},p}$.}
\label{fig:intro-alg-non-iid}
\end{figure}

\subsection{Adapting a learning algorithm designed for i.i.d.\  inputs}

Our first result transforms any learning algorithm $\cA$ for the task $\textup{SUCCESS}_{\eps}$ {designed for i.i.d.\  input states} to a learning algorithm $\cB$ for the same task without requiring the i.i.d.\  assumption at the cost of an increased number of queries.
	\begin{theorem}[General algorithms in the non-i.i.d.\ setting]\label{thm:general}
		{Let $\eps>0$, $1\le k< N/2$ and $d$ be the dimension of the Hilbert spaces $A_1, \dots, A_N$}. Let $\cA$ be a learning algorithm designed for i.i.d.\ input states. There exists a learning algorithm $\cB$ taking arbitrary inputs on $N$ system and having an error probability~\eqref{eq:def-error-prob} satisfying
	\[
	\delta_\cB\left(N, \rho^{A_1 \cdots A_N},2\eps\right) \le 	\sup_{\sigma\;:\; \rm{state}}\delta_\cA\left(k, \sigma^{\otimes k}, \eps \right)
	+ \cO\left(\sqrt{\frac{k^{3} d^{2} \log(d)^{} }{N^{} \eps^2}}\right).
	\]
\end{theorem}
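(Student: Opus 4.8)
The plan is to build $\cB$ out of $\cA$ by feeding only a size-$k$ block of the (permutation-symmetric) input into the i.i.d.\ algorithm while using the remaining training systems to calibrate the test copy. Concretely, since we may assume $\rho=\overline{\rho}$ is permutation invariant, I would let $\cB$ run $\cA$ on the block $A_1\cdots A_k$ to produce the prediction $p$, perform an informationally rich measurement on the remaining training systems $A_{k+1}\cdots A_{N-1}$ to produce the calibration record $c$, and output $(c,p)$, leaving $A_N$ untouched. The object to control is then the conditional reduced state $\rho^{A_N}_{c,p}$ appearing in~\eqref{eq:def-error-prob}, and the goal is to certify that with high probability $(p,\rho^{A_N}_{c,p})\in\textup{SUCCESS}_{2\eps}$.

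The first reduction replaces the genuinely non-i.i.d.\ input by an i.i.d.\ mixture. Applying a finite quantum de Finetti theorem to the $(k+1)$-system marginal $\rho^{A_1\cdots A_k A_N}$ yields a measure $\mu$ with $\bigl\|\rho^{A_1\cdots A_k A_N}-\int \sigma^{\otimes(k+1)}\,\mathrm{d}\mu(\sigma)\bigr\|_1=\cO(k d^2/N)$; since $\cB$'s processing of $A_1\cdots A_{N-1}$ is a quantum channel, data processing transports this bound to the classical--quantum state carrying $(c,p)$ and $A_N$, at the cost of this trace error in the final failure probability. On an idealized product component $\sigma^{\otimes(k+1)}$, measuring the first $k$ systems leaves $A_N$ in the exact state $\sigma$, and $\cA$'s guarantee gives $(p,\sigma)\in\textup{SUCCESS}_\eps$ except with probability $\delta_\cA(k,\sigma^{\otimes k},\eps)\le\sup_\sigma\delta_\cA(k,\sigma^{\otimes k},\eps)$; integrating over $\mu$ produces the leading term.

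The main obstacle is that conditioning on the learner's \emph{output} $(c,p)$, as~\eqref{eq:def-error-prob} requires, does not leave $A_N$ in the latent component $\sigma$ but in the \emph{posterior mean} $\int \sigma\,\mathrm{d}\mu(\sigma\mid c,p)$; a crude Markov bound on the gap between this mean and the true $\sigma$ costs a spurious factor $1/\eps$ and destroys the unit coefficient in front of $\sup_\sigma\delta_\cA$. This is exactly the role of the calibration record $c$: I would argue that conditioning on $c$ concentrates the posterior around the component $\cA$ effectively saw. The cleanest route is a martingale/information-gain argument. Writing $\tau_j$ for the reduced state of $A_N$ conditioned on the first $j$ calibration outcomes, $(\tau_j)_j$ is a state-valued martingale by the tower property, and the accumulated fluctuation $\sum_j \mathbb{E}\|\tau_j-\tau_{j-1}\|_1^2$ is capped, via quantum Pinsker and a telescoping entropy chain rule, by the single-copy entropy $\cO(\log d)$. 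Spreading this budget over the $\Theta(N)$ calibration systems forces the typical conditional state to settle to within trace distance $\cO(\sqrt{\log d/N})$ of that component, so that $\|\rho^{A_N}_{c,p}-\sigma\|_1$ is small with high probability.

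Finally I would assemble the pieces. Invoking the (reasonable) robustness of the task — if $(p,\sigma)\in\textup{SUCCESS}_\eps$ and $\|\sigma-\sigma'\|_1\le\eps$ then $(p,\sigma')\in\textup{SUCCESS}_{2\eps}$ — the event that both $\cA$ succeeds and the posterior has $\eps$-concentrated implies $(p,\rho^{A_N}_{c,p})\in\textup{SUCCESS}_{2\eps}$. A union bound over the three failure modes gives $\delta_\cB(N,\rho,2\eps)\le\sup_\sigma\delta_\cA(k,\sigma^{\otimes k},\eps)+\epsilon_{\mathrm{dF}}+\Pr[\text{posterior not }\eps\text{-concentrated}]$. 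The de Finetti step contributes the $d^2$, the concentration step contributes the $\log d$ and the square root, and converting a typical $\cO(\sqrt{\log d/N})$ state deviation into an $\eps$-level failure probability (Chebyshev on the martingale increments) produces the $1/\eps^2$ under the root; tracking the $k$-dependence — the $k$ copies consumed by $\cA$ together with the blocking needed to match $\cA$'s $k$-copy input format — accounts for the $k^3$, while $k<N/2$ ensures enough calibration systems remain. The step I expect to be most delicate is making this posterior-concentration bound quantitative with precisely the stated dependence on $k$, $d$, and $\eps$.
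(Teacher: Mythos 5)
Your overall architecture (run $\cA$ on a size-$k$ block, measure the remaining training systems with an informationally complete measurement to produce calibration $c$, invoke robustness, union-bound the failure modes) coincides with the paper's Algorithm~\ref{alg-non-iid-general} and Theorem~\ref{thm:general-meas}. But the technical core of your argument has a genuine gap, located exactly where you predicted delicacy: the posterior-concentration step. Two distinct problems arise. First, your martingale argument controls the wrong object. The entropy-budget/Pinsker bound on $\sum_j \exs{}{\|\tau_j-\tau_{j-1}\|_1^2}$ shows that the conditional \emph{mean} state of $A_N$ stabilizes under further calibration measurements; it does not show that the posterior $\mu(\cdot\mid c)$ \emph{concentrates} around any single component $\sigma$, and these are not equivalent (a posterior can remain a balanced mixture of far-apart states while its mean is perfectly stable). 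Concentration is what you actually need to kill the $1/\eps$ loss you identified, and to get it in trace norm you must pay for the informational completeness of the calibration measurement: distinguishing states at trace distance $\eps$ through a fixed single-copy measurement costs a distortion factor (the $\cM_{\text{dist}}$ of \cite{jee2020quasi} loses a factor $2d$), so a dimension-free rate $\cO(\sqrt{\log d/N})$ in trace distance is not achievable. Indeed, the example of \cite{christandl2007one} cited in the paper shows trace-norm de Finetti-type statements must carry dimension factors; your claimed rate would improve the theorem's $k^3d^2\log d$ to $k^3\log d$, which the paper explicitly obtains \emph{only} in the measured (randomized LOCC) norm of Theorem~\ref{thm:rdm-local-definetti}, where the identity is kept on a single system. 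Second, even granting single-system stability of $\rho^{A_N}_c$, this does not control the effect of conditioning on $p$: since $\cA$ is a \emph{general} algorithm, $p$ is the outcome of an arbitrary, possibly entangled POVM on $A_1\cdots A_k$, and bounding $\|\rho^{A_N}_{c,p}-\rho^{A_N}_c\|_1$ requires that the conditional state on the whole block $A_1\cdots A_k A_N$ be close to a product $(\rho^{A_N}_c)^{\otimes(k+1)}$ in genuine trace norm — a $(k+1)$-system statement your single-system martingale does not deliver.

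The paper's proof closes both holes with one tool and thereby avoids your mixture-plus-posterior scaffolding entirely: the measured de Finetti theorem of \cite{SDP} (Lemma~\ref{lem:general definetti}), which says that after measuring a \emph{random} number $l-k$ of systems with $\cM_{\text{dist}}$ and conditioning on the outcome $\bm{w}$, the remaining block satisfies $\exs{l,\bm{w}}{\|\rho^{A_1\cdots A_k}_{\bm w}-(\rho^{A_N}_{\bm w})^{\otimes k}\|_1}\le 2\sqrt{2k^3d^2\log(d)/N}$ — closeness to an \emph{exactly} i.i.d.\ state, with no latent variable and no posterior. The $d^2$ arises from the distortion factor $2d$ in a telescoping conversion to trace norm, and the extra $k$ (giving $k^3$ rather than your bookkeeping) from that same telescope; the randomization of $l$ plays the role of your "typical increment" trick. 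Given this, the first error term follows by feeding $\sigma=\rho^{A_N}_{l,\bm w}$ into $\sup_\sigma\delta_\cA(k,\sigma^{\otimes k},\eps)$, and the conditioning-on-$p$ shift is bounded by a short lemma (the analogue of your missing step): if the $(k+1)$-block were exactly product, conditioning on $p$ would not move $A_N$ at all, so the shift is bounded by the same trace-norm de Finetti error, and Markov's inequality converts it into a failure probability $\cO(\sqrt{k^3d^2\log(d)/(N\eps^2)})$ — which is the true origin of the $1/\eps^2$ under the root, not a Chebyshev bound on martingale increments. If you want to repair your write-up rather than adopt the paper's route wholesale, the minimal fix is to replace both your CKMR-mixture step and your concentration claim with this conditional exact-i.i.d.\ approximation.
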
 
Note that the evaluation of a learning algorithm is defined by first randomly permuting the systems $A_1 \dots A_N$ so we may assume that $\rho^{A_1 \dots A_N}$ is invariant under permutations {and the systems are identically distributed}. 
The first term in the bound of Theorem~\ref{thm:general} is the worst case error probability in the i.i.d.\ setting. So, we can regard the parameter $k$ as the copy complexity within the i.i.d.\ setting. Hence, in order to attain a low total error probability in the non-i.i.d.\ setting, it is sufficient to take a total number of copies $N= \Omega\left(k^3d^2 \log(d) \right)$. This result shows in principle that any learning algorithm designed for i.i.d.\  states can be transformed into one for general states at an additional cost that is polynomial in the dimension $d$. 

{A possible algorithm $\cB$ achieving the performance of Theorem \ref{thm:general}}, illustrated in Figure~\ref{fig:intro-alg-non-iid}\,(Left) { and formally described in Algorithm \ref{alg-non-iid-general} }, partitions the training data into 3 parts. For that we choose a random number ($l\sim \unif\{k+1,\dots,k+\frac{N}{2}\}$). The first part has size $l-k$ and each system is measured using some fixed measurement $\cM_{\text{dist}}$ leading to an output string $\bm{w}$. The second part is of size $k$ and we apply the learning algorithm $\cA$ and return this prediction. The third part consists of $N-l-1$ systems that are not used by the learning algorithm.

To control the error probability of Algorithm~$\cB$, we use the de Finetti theorem of \cite[proof of Theorem 2.4]{SDP} to obtain the approximation for all $1\le k< N/2$:
\begin{align}\label{general deFinetti}
	\exs{l\sim \unif\{k+1,\dots,k+\frac{N}{2}\} , \; \bm{w} \sim \cM_{\text{dist}}^{\otimes (l-k)}\left(\rho^{A_{k+1} \cdots A_{l}}\right) }{\left\|  \rho_{\bm{w}}^{A_1 \cdots A_k} -  \left(\rho_{\bm{w}}^{A_1}\right)^{\otimes k}   \right\|_1}\le 2 \sqrt{\frac{2k^3 d^2 \log(d)}{N}}, \tag{gF}
\end{align}
where $\rho_{\bm{w}}^{A_1 \cdots A_k}$ is the state conditioned on observing the outcome $\bm{w}$ after measuring the quantum state $\rho^{A_{k+1} \cdots A_{l}}$ with a fixed measurement device $\cM_{\text{dist}}^{\otimes (l-k)}$ ({which should be an informationally-complete} measurement satisfying a low-distortion property) and  $\rho_{\bm{w}}^{A_1} $ denotes the reduced quantum state derived by tracing out the systems $A_t$ for $t>1$ from the quantum state $\rho_{\bm{w}}^{A_1 \cdots A_k}$. {This theorem shows that when measuring a sufficiently large number of systems of a permutation invariant state, the remaining systems become approximately independent.}
Crucially, {in \eqref{general deFinetti}} the approximation of the state $ \rho_{\bm{w}}^{A_1 \cdots A_k} $ by the i.i.d.\ state $ \left(\rho_{\bm{w}}^{A_1}\right)^{\otimes k} $ is conducted using the {trace-norm}. This implies that any algorithm utilizing arbitrary measurement strategies that necessitate i.i.d.\ input states can be generalized to the non-i.i.d.\ setting at the cost of a new error probability  bounded as in Theorem~\ref{thm:general}. Unfortunately, for some tasks, the additional cost in Theorem~\ref{thm:general} is prohibitive. For example, for classical shadows, we expect that the dependence on the dimension $d$ be at most logarithmic.

An example of \cite{christandl2007one} shows that the dependency in the dimension can not be lifted for a general de Finetti theorem with the trace-norm approximation. On the other hand, \cite{brandao2013quantum} reduced the dependency in the dimension for the LOCC norm. Specifically, it is shown in \cite{brandao2013quantum} that for a permutation invariant state $\rho^{A_1\cdots A_N}$ and  $1\le k < N$, there exists a probability measure denoted as $\nu$, such that  the following inequality holds:
\begin{align}\label{local de Finetti}
{\sup_{\Lambda_2, \dots, \Lambda_k } \left\|\id \otimes \Lambda_2 \otimes \cdots \otimes \Lambda_k\left( \rho^{A_1\cdots A_{k}} -  \int \diff\nu(\sigma) \sigma^{\otimes k}  \right) \right\|_1\le  \sqrt{\frac{2k^2\log(d)}{N-k}},}  \tag{lF}
\end{align}
where the maximization is over measurements channels {(a measurement channel corresponding to a measurement device $\cM=\{M_x\}_{x\in \cX}$ is the quantum channel $\Lambda(\rho)= \sum_{x\in \cX} \tr{M_x\rho} \proj{x}$ where $\{\ket{x}\}_{x\in \cX}$ is an orthonormal basis)}.
Initially, this might appear adequate for relaxing the assumption of i.i.d.\ state preparations with a low overhead. However, the process of extending algorithms from i.i.d.\ inputs to a mixture of i.i.d.\ states (not to mention permutation-invariant states) is far from straightforward, particularly when dealing with statements that require a correctness with high probability. To address this difficulty,  we use the same techniques from \cite{brandao2013quantum} and show a new ‘randomized' local quantum de Finetti theorem. 
\begin{theorem}[Randomized local quantum de Finetti theorem] \label{thm:rdm-local-definetti}
	Let $\rho^{A_1\cdots A_N}$  be a permutation invariant quantum state, $\{\Lambda_r\}_{r \in \mathcal{R}}$ be a set of measurement channels and $q$  be a probability measure on $\mathcal{R}$. For all $1\le k< N/2$, the following inequality holds:
	\[
	\mathbb{E}_{(r_1, \dots, r_N) \sim q^{\otimes N}, l\sim \unif\{k+1,\dots,k+\frac{N}{2}\}} \exs{\bm{w}}{ \left\| \id \otimes \Lambda_{r_2} \otimes \dots \otimes \Lambda_{r_k} \left(\rho_{\bm{w}}^{A_1\cdots A_{k}} -\left(\rho_{\bm{w}}^{A_1}\right)^{\otimes {k}} \right)\right\|_1}  \le \sqrt{\frac{4 {k^2}\log(d)} {N}},
	\]
 where $\bm{w}$ is obtained by applying the channel $\Lambda_{r_{k+1}} \otimes \cdots \otimes \Lambda_{r_{l}}$ to the systems $A_{k+1} \cdots A_{l}$ of $\rho$.
\end{theorem}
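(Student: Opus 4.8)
The plan is to reduce the left-hand side to a statement about classical–quantum correlations and then bound those by an information-theoretic telescoping argument in the spirit of \cite{brandao2013quantum}.

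First I would fix the measurement labels $r$ and the cut $l$ and apply the channels: after measuring $A_2, \dots, A_k$ with $\Lambda_{r_2}, \dots, \Lambda_{r_k}$ we obtain classical registers $W_2, \dots, W_k$, while $A_1$ stays quantum, and $\bm{w}$ records the outcomes of $A_{k+1}, \dots, A_l$. Because $\rho^{A_1\cdots A_N}$ is permutation invariant, the conditional state $\rho_{\bm{w}}^{A_1\cdots A_k}$ is invariant under permutations of $\{1, \dots, k\}$, so all of its single-system marginals coincide with $\rho_{\bm{w}}^{A_1}$. Consequently the target $\id \otimes \Lambda_{r_2}\otimes\cdots\otimes\Lambda_{r_k}\big((\rho_{\bm{w}}^{A_1})^{\otimes k}\big)$ is exactly the product of the marginals of the measured state $\tau := \id\otimes\Lambda_{r_2}\otimes\cdots\otimes\Lambda_{r_k}(\rho_{\bm{w}}^{A_1\cdots A_k})$, and the operator inside the norm is $\tau$ minus the product of its own marginals. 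Quantum Pinsker then gives $\|\tau - \bigotimes_j \tau_j\|_1 \le \sqrt{2\,T_{\bm{w}}}$, where $T_{\bm{w}}$ is the total correlation (the relative entropy to the product of marginals) of $\tau$. Using concavity of the square root to move all the expectations (over $r$, $l$, and $\bm{w}$) inside, it suffices to show $\exs{r,l,\bm{w}}{T_{\bm{w}}} \le \tfrac{2k^2\log d}{N}$.

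Next I would expand the total correlation with the chain rule, $T_{\bm{w}} = \sum_{j=2}^k I(W_j : A_1 W_2\cdots W_{j-1} \mid \bm{w})$, and expand each summand once more into $j-1$ single-system conditional mutual informations of the form $I(W_j : A_1\mid \bm{w})$ and $I(W_j : W_i \mid A_1 W_2\cdots W_{i-1}, \bm{w})$. The key observation is that averaging over $r\sim q^{\otimes N}$ makes the combined (state, measurement) ensemble exchangeable on the systems that have not yet been singled out, so each relevant conditional entropy depends only on the \emph{number} of conditioning systems drawn from the pool $A_{k+1}, \dots, A_N$, not on their identity. Writing such a conditional mutual information as a difference $\psi(l-k) - \psi(l-k+1)$ of a monotone function of that number (adding $W_j$ acts as one extra conditioner from the same pool), the uniform average over $l \in \{k+1, \dots, k+N/2\}$ telescopes: $\exs{l}{\psi(l-k) - \psi(l-k+1)} = \tfrac{2}{N}\big(\psi(1) - \psi(N/2+1)\big) \le \tfrac{2}{N}\cdot c\log d$, where the last step recognizes $\psi(1)-\psi(N/2+1)$ as a single conditional mutual information bounded by the entropy budget of a $d$-dimensional system ($c = 1$ when $A_1$ is the probe, and $c\le 2$ when a measured $W_i$ is, via data processing from $A_i$).

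Finally I would sum the $j-1$ telescoped terms to get $\exs{r,l,\bm{w}}{I(W_j : A_1 W_2\cdots W_{j-1}\mid\bm{w})} \le \tfrac{4(j-1)\log d}{N}$, and then sum over $j$, using $\sum_{j=2}^k (j-1) = \binom{k}{2}\le k^2/2$, to obtain $\exs{r,l,\bm{w}}{T_{\bm{w}}} \le \tfrac{2k^2\log d}{N}$; plugging into Pinsker yields the stated $\sqrt{4k^2\log d/N}$. The hard part is not the final arithmetic but setting up the third paragraph rigorously: one has to verify that the $r$-average really produces exchangeability of the joint classical–quantum state so that the conditional entropies are functions of $|S|$ alone, that adjoining an examined register $W_j$ genuinely corresponds to incrementing that count (which needs $j\le k< l$, i.e. $k<N/2$), and that the asymmetry between the quantum probe $A_1$ and the classical outcomes $W_i$ costs only a harmless constant. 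This is exactly where the two averages in the statement — over the measurement labels and over the cut position $l$ — do the work, converting the $\log d$ information budget of a single system into a $1/N$ factor.
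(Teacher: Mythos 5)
Your proposal is correct and takes essentially the same route as the paper's proof of Theorem~\ref{thm:gen-de finetti}: both rest on Pinsker's inequality plus Jensen, the chain rule for (conditional) mutual information, the exchangeability supplied by the permutation-invariant state together with the i.i.d.\ measure $q^{\otimes N}$, and the uniform average over the cut $l$ that converts an $O(k\log d)$ entropy budget into a $1/N$ factor. The only difference is bookkeeping: the paper telescopes the global quantity $\cI\left(A_1X_2\cdots X_k : X_{k+1}\cdots X_{k+N/2}\right) \le k\log(d)$ over $l$ and then distributes it to the $k$ registers via data processing and the chain rule, whereas you expand the total correlation into pairwise conditional mutual informations first and telescope each over $l$ with a per-term $2\log(d)$ budget (using $k<N/2$ to have a spare system for the swap), which yields the same bound $\sqrt{4k^{2}\log(d)/N}$.
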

The result we establish in Theorem~\ref{thm:gen-de finetti} is actually slightly stronger: we do not need $\rho^{A_1 \dots A_N}$ to be permutation invariant, it suffices to choose permutation $\bm{j}$ at random and the result above holds in expectation over the choice of $\bm{j}$.

Observe that our de Finetti theorem {requires stronger assumptions} than the local de Finetti theorem \eqref{local de Finetti}\cite{brandao2013quantum}: the distribution of the  measurement channels should be permutation invariant (as opposed to arbitrary). However, the implications of our de Finetti theorem are also stronger than the local de Finetti theorem \eqref{local de Finetti} in that it approximates the projection of the  permutation invariant state to \emph{exactly  i.i.d.\ states} (instead of mixtures of i.i.d.\ states).
\\It is worth noting that the approximation error in Theorem~\ref{thm:rdm-local-definetti}  is significantly smaller than the previous approximation error \eqref{general deFinetti}. Notably, the dependence on the local dimension $d$ is logarithmic, which implies that the total number of copies $N$ only needs to scale as $\Omega(k^2\log(d))$, as opposed to the more demanding $\Omega(k^3d^2\log(d))$. However, the approximation of the state $ \rho_{\bm{w}}^{A_1 \cdots A_k} $ by the i.i.d.\ state $ \left(\rho_{\bm{w}}^{A_1}\right)^{\otimes k} $ in the general trace-norm is no longer guaranteed. This assertion now holds only when applying independent local measurement channels drawn from $\{\Lambda_r\}_{r \in \cR}$ according to the distribution $q$ on the quantum state $\rho_{\bm{w}}^{A_1 \cdots A_k}$.  
For  learning algorithms that are non-adaptive and incoherent {(performing single copy measurements using a set of measurement devices chosen before starting the learning procedure)}, this is enough to bound their error probability and leads to the following theorem.  
\begin{theorem}[Non-adaptive algorithms in the non-i.i.d.\ setting]\label{thm-intro: iid-meas}
Let $\eps>0$ and $1\le k< N/2$.  Let $\cA$ be a learning algorithm designed for i.i.d.\ input states and performing non-adaptive incoherent measurements. There is an algorithm $\cB$ that takes as input an arbitrary state on $N$ systems and possessing an error probability:
	\[
	\delta_\cB\left(N, \rho^{A_1 \cdots A_N},2\eps\right) \le \sup_{\sigma\;:\; \rm{state}}\delta_\cA\left(k, \sigma^{\otimes k}, \eps \right)
	+ \cO\left(\sqrt{\frac{k^{2}\log^2(N) \log(d)^{} }{N^{} \eps^2}}\right).
	\]
\end{theorem}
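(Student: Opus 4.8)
The plan is to reuse the architecture behind Theorem~\ref{thm:general}, but to replace the trace-norm de Finetti approximation \eqref{general deFinetti} by the randomized local de Finetti theorem (Theorem~\ref{thm:rdm-local-definetti}), which is tailored to the incoherent, non-adaptive structure of $\cA$. As usual I may assume $\rho^{A_1\cdots A_N}$ is permutation invariant. Let $q$ be the distribution over the measurement channels $\{\Lambda_r\}_{r\in\cR}$ that $\cA$ uses on a single copy; because $\cA$ is non-adaptive and incoherent, one run of $\cA$ is nothing but applying a product channel $\Lambda_{r_1}\otimes\cdots\otimes\Lambda_{r_k}$, with settings drawn from $q$, followed by a classical post-processing map that outputs the prediction $p$. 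The algorithm $\cB$ then draws $l\sim\unif\{k+1,\dots,k+\frac N2\}$ and settings $\bm{r}\sim q^{\otimes N}$, measures the calibration block with the corresponding channels to obtain $\bm{w}$, keeps $A_1$ \emph{unmeasured} as the test copy, and feeds the outcomes of the next $k$ systems into $\cA$'s post-processing to produce $p$. (Formally I invoke Theorem~\ref{thm:rdm-local-definetti} with $k+1$ held-out systems so that $\cA$ still receives $k$ measured training copies and $A_1$ is the test copy; this only affects constants.)

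The core of the argument is a comparison of two classical–quantum states on $(\text{prediction})\otimes A_1$: the real one $\xi_\cB=\sum_p \pr{p}\,\proj{p}\otimes\rho_{\bm{w},p}^{A_1}$ coming from $\rho_{\bm{w}}^{A_1\cdots A_{k+1}}$, and the idealized one $\xi_\cA$ coming from the i.i.d.\ state $\left(\rho_{\bm{w}}^{A_1}\right)^{\otimes(k+1)}$. Since classical post-processing is trace-norm contractive, Theorem~\ref{thm:rdm-local-definetti} bounds $\ex{\|\xi_\cB-\xi_\cA\|_1}$ (expectation over $l,\bm{r},\bm{w}$) by $\sqrt{4(k+1)^2\log(d)/N}$. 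Two features are essential. First, keeping $A_1$ under the identity channel in Theorem~\ref{thm:rdm-local-definetti} is exactly what lets us retain the test system as a quantum state rather than prematurely measuring it. Second, on the i.i.d.\ state the test copy is independent of the measurement outcomes, so $\xi_\cA=\mu_\cA\otimes\sigma$ with $\sigma:=\rho_{\bm{w}}^{A_1}$ and $\mu_\cA$ the distribution of $\cA$ run on $\sigma^{\otimes k}$; hence testing $p$ against $\sigma$ reproduces the i.i.d.\ error, bounded by $\sup_{\sigma}\delta_\cA(k,\sigma^{\otimes k},\eps)$.

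Next I turn closeness of $\xi_\cB$ and $\xi_\cA$ into a bound on $\delta_\cB(N,\rho,2\eps)$. Under the (reasonable) assumption that $\textup{SUCCESS}_\eps$ is stable in trace norm on the test argument — as for shadow tomography, where $\tr{O_i\,\cdot}$ moves by at most $\|\cdot\|_1$ when $\|O_i\|_\infty\le1$ — one has the inclusion: if $(p,\sigma)\in\textup{SUCCESS}_\eps$ and $\|\sigma-\rho_{\bm{w},p}^{A_1}\|_1\le\eps$, then $(p,\rho_{\bm{w},p}^{A_1})\in\textup{SUCCESS}_{2\eps}$. A union bound then gives
\[
\delta_\cB(N,\rho,2\eps)\le \pr{(p,\sigma)\notin\textup{SUCCESS}_\eps} + \pr{\|\sigma-\rho_{\bm{w},p}^{A_1}\|_1>\eps}.
\]
The first probability differs from the i.i.d.\ error by at most the total variation $\tfrac12\ex{\|\xi_\cB-\xi_\cA\|_1}$. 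For the second, because $\sum_p\pr{p}\rho_{\bm{w},p}^{A_1}=\rho_{\bm{w}}^{A_1}=\sigma$, the average deviation $\sum_p\pr{p}\|\rho_{\bm{w},p}^{A_1}-\sigma\|_1$ equals $\|\xi_\cB-\mu_\cB\otimes\sigma\|_1\le 2\|\xi_\cB-\xi_\cA\|_1$ by the triangle inequality (using $\xi_\cA=\mu_\cA\otimes\sigma$), so Markov's inequality controls it by $\tfrac2\eps\ex{\|\xi_\cB-\xi_\cA\|_1}$. Plugging in Theorem~\ref{thm:rdm-local-definetti} yields an overhead of order $\tfrac1\eps\sqrt{k^2\log(d)/N}$.

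I expect the main obstacle to be upgrading this clean in-expectation reduction to the bound actually claimed, whose extra polylogarithmic factor in $N$ does \emph{not} originate from the de Finetti estimate. The two delicate accounting points are: (i) Theorem~\ref{thm:rdm-local-definetti} controls the approximation only in expectation over the randomly assigned, permutation-invariant settings $\bm{r}\sim q^{\otimes N}$ and over the random calibration length $l$, so faithfully simulating $\cA$'s intended (possibly copy-dependent) measurement schedule by the symmetrized i.i.d.\ assignment, and making the guarantee hold uniformly over $l$, requires a boosting/union-bound argument that is the source of the $\log(N)$ factors; and (ii) one must check that the joint behaviour of the prediction $p$ and the \emph{conditional} test state $\rho_{\bm{w},p}^{A_1}$ is genuinely captured by the single cq-state comparison above, since (as stressed after \eqref{eq:def-error-prob}) replacing the conditional test state by its marginal would make the statement false. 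Getting these two steps right — rather than the de Finetti bound itself — is where the real work lies.
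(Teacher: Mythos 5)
Your proposal follows essentially the same route as the paper's proof (Theorem~\ref{thm:gen to non iid - iid alg}): symmetrize, measure a random-length calibration block with i.i.d.\ sampled devices, invoke the randomized local de Finetti theorem (Theorem~\ref{thm:rdm-local-definetti}) with one system held out under the identity channel, and split the error via the robustness assumption (Definition~\ref{def:success}) into (a) the prediction tested against the conditional state $\rho^{A_N}_{l,\bm{r,w}}$ and (b) the deviation event $\|\rho^{A_N}_{l,\bm{r,w},p}-\rho^{A_N}_{l,\bm{r,w}}\|_1>\eps$, the latter controlled in expectation and then by Markov. Your single cq-state comparison $\|\xi_\cB-\xi_\cA\|_1$ is a compact repackaging of what the paper does in two separate places (the first-term bound and Lemma~\ref{lemma:equ of past and future states}); in particular your triangle/data-processing step $\sum_p \pr{p}\,\|\rho_{\bm{w},p}^{A_1}-\sigma\|_1\le 2\|\xi_\cB-\xi_\cA\|_1$ is correct and mirrors the factor $2$ appearing in the paper's lemma. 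Your worry (ii) is in fact already discharged by your own decomposition: conditioning on $p$ is a coarsening of conditioning on the raw outcomes $\bm{v}$ (convexity of the trace norm), so the single cq-state comparison does capture the joint behaviour of $p$ and the conditional test state, and no further work is needed there.

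The genuine gap is your step (i), which you correctly flag but do not carry out, and it is not mere bookkeeping: a non-adaptive algorithm $\cA$ has a \emph{fixed, copy-dependent} schedule $\cM_1^{\cA},\dots,\cM_{k_\cA}^{\cA}$, and this deterministic ordered assignment is \emph{not} a permutation-invariant product measure $q^{\otimes N}$, so Theorem~\ref{thm:rdm-local-definetti} cannot be applied to it directly; your opening claim that one run of $\cA$ is "a product channel with settings drawn from $q$" glosses over exactly this. The paper resolves it by taking $q=\unif\{1,\dots,k_{\cA}\}$, oversampling $k=k_\cA\log(k_\cA/\delta_\cA)$ training copies, introducing the coupon-collector event $\cG=\{[k_{\cA}]\subset\{r_t\}_{1\le t\le k}\}$ with $\pr{\cG^c}\le k_{\cA}e^{-k/k_{\cA}}\le\delta_\cA$, and, on $\cG$, feeding $\cA$'s post-processing the outcomes $v_{s(1)},\dots,v_{s(k_{\cA})}$ at the first occurrences $s(t)$ of each device; since $s(t)$ depends only on $\bm{r}$ and the comparison state is i.i.d., these outcomes have exactly the law of $\cA$ run on $\sigma^{\otimes k_{\cA}}$. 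This oversampling — replacing $k_{\cA}$ by $k_{\cA}\log(k_{\cA}/\delta_\cA)$, with $k_{\cA}\lesssim N$ — is the \emph{sole} source of the $\log^2(N)$ factor in the statement; contrary to your diagnosis, no uniformity over $l$ or union bound over the calibration length is required, as the de Finetti bound is used in expectation over $l$ throughout. With this step inserted, your argument reproduces the paper's proof.
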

In terms of copy complexity, to ensure an error probability $\delta$, the number of copies in the non-i.i.d.\ setting should be 
\begin{equation*}
N_{\rm{non-iid}}= \Omega\left(\frac{\log(d)}{\delta^2\eps^2}\cdot k_{\,\rm{iid}}(\eps, \delta)^2\log^2(k_{\,\rm{iid}}/\delta)\right),
\end{equation*}
where $k_{\,\rm{iid}}(\eps, \delta)$ is a sufficient number of copies needed to achieve $\delta/2$ correctness in the i.i.d.\ setting with a precision parameter $\eps/2$.

{To prove Theorem \ref{thm-intro: iid-meas}, we provide an algorithm $\cB$, illustrated in Figure~\ref{fig:intro-alg-non-iid}\,(Right) and formally described in Algorithm \ref{alg-non-iid}}. Note that as $\cA$ is assumed to be incoherent and non-adaptive, it is described by some measurements $\{\cM_r\}_{r \in \cR}$. The algorithm $\cB$ partitions the training data into 3 parts. We choose a random number ($l\sim \unif\{k+1,\dots,k+\frac{N}{2}\}$). The first part has size $l-k$ and each system is measured using some measurement $\cM_r$, where $r$ is chosen at random. This step gives an output string that we denote $\bm{w}$. The second part is of size $k$ and we apply the learning algorithm $\cA$ and return this prediction. The third part consists of $N-l-1$ systems that are not used by the learning algorithm. Besides this, Algorithm~$\cB$ returns also the outcomes $\bm{w}$ as calibration data.

Many problems of learning properties of quantum states can be solved using  algorithms that perform non-adaptive incoherent measurements - that is, measurements which are local on copies and chosen non-adaptively (see Definition~\ref{def:non-adaptive-alg} for a formal definition). This includes state tomography~\cite{guctua2020fast}, shadow tomography using classical shadows~\cite{huang2020predicting}, testing mixedness~\cite{bubeck2020entanglement}, fidelity estimation~\cite{flammia2011direct}, verification of pure states~\cite{pallister2018optimal} among others. For all these problems, we can apply Theorem~\ref{thm-intro: iid-meas} to extend these algorithms so that they can operate even for non-i.i.d.\ input states (see Section~\ref{sec:Applications}). 
 Here, we present this extension for observable prediction via classical shadows. The learning task is to $\eps$-approximate $M$ target observables $\mathrm{tr}(O_i \rho)$ in an unknown $d$-dimensional state $\rho$.
\begin{proposition}[Classical shadows in the non i.i.d.\ setting]\label{cor:classical-shadows}
Fix a collection of $M$ observables $O_i$ on an $n$-qubit system that are also $k$-local. Then, we can use (global or local) Clifford measurements to successfully $\eps$-approximate all target observables in the reduced test state with probability at least $2/3$. 
The number of copies required depends on the measurement process (global/local Clifford) and scales as
	\begin{align*}
		N&=\tilde{\cO}\left( \frac{n^3 \max_{ i\in [M]} \|O_i\|_2^4 \log^2(3M)}{\eps^6}\right) && \text{(global Clifford)},
  \\N&=\tilde{\cO}\left( \frac{nk^2 16^{k} \max_{i\in [M]} \|O_i\|^4_\infty \log^2(3M)}{\eps^6  }\right) && \text{(local Clifford)},
  \end{align*}
  where $\tilde{\cO}$ hides $\log\log(M)$ and $\log(1/\eps)$ factors.
  \end{proposition}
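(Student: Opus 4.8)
The plan is to view the i.i.d.\ classical-shadows protocol as a non-adaptive, incoherent learning algorithm $\cA$ and then lift it via Theorem~\ref{thm-intro: iid-meas}. The shadows protocol measures each copy independently with a single random (global or local) Clifford unitary followed by a computational-basis readout, so it is specified by a family of single-copy measurement devices $\{\cM_r\}_{r\in\cR}$ fixed in advance; note that the global Clifford measurement, while it couples all $n$ qubits \emph{within} one copy, never acts jointly \emph{across} copies, so it is still incoherent and non-adaptive in the sense required by the theorem. Its i.i.d.\ guarantee is the standard shadow-norm bound: predicting all $M$ observables to precision $\eps$ with failure probability $\delta$ succeeds with
\[
k_{\mathrm{iid}}(\eps,\delta)=\cO\!\left(\frac{\log(M/\delta)\,\max_i\|O_i\|_{\mathrm{shadow}}^2}{\eps^2}\right)
\]
copies, where $\|O\|_{\mathrm{shadow}}^2=\cO(\|O\|_2^2)$ for global Clifford measurements and $\|O\|_{\mathrm{shadow}}^2=\cO(4^k\|O\|_\infty^2)$ for a $k$-local observable under local Clifford measurements.

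I would first feed these i.i.d.\ complexities into the copy-complexity consequence of Theorem~\ref{thm-intro: iid-meas}, fixing $\delta=1/3$ to meet the $2/3$ success requirement and using $\log(d)=\Theta(n)$ since each copy is an $n$-qubit register. The union bound over the $M$ target observables is already carried by the $\log(M/\delta)=\cO(\log(3M))$ term of the median-of-means shadow estimator, so only substitution remains. The $\eps^{-6}$ scaling is then transparent: one factor $\eps^{-2}$ comes from the de Finetti/Markov term of Theorem~\ref{thm-intro: iid-meas}, and squaring the i.i.d.\ copy complexity contributes $\eps^{-4}$; substituting the shadow norms converts $\max_i\|O_i\|_{\mathrm{shadow}}^4$ into $\max_i\|O_i\|_2^4$ (global) and $16^k\max_i\|O_i\|_\infty^4$ (local), leaving the advertised $\log^2(3M)$ and the $\log\log(M),\log(1/\eps)$ factors inside $\tilde{\cO}$.

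The delicate point --- and what I expect to be the main obstacle --- is that Theorem~\ref{thm-intro: iid-meas}, through the randomized de Finetti bound of Theorem~\ref{thm:rdm-local-definetti}, only controls the trace distance of the \emph{measurement-outcome distributions} produced by the local Clifford channels on $\rho_{\bm w}^{A_1\cdots A_k}$ versus $\left(\rho_{\bm w}^{A_1}\right)^{\otimes k}$, whereas the shadow estimator is an \emph{unbounded} linear functional of those outcomes (its single-shot range grows like $2^n$ globally and $3^k\|O\|_\infty$ locally). One therefore cannot transfer the de Finetti guarantee to the predicted expectation value simply by multiplying by the estimator's range; instead the estimator must be handled through its controlled second moment (the shadow variance) rather than its magnitude, and for the local case the analysis is restricted to the support of each $k$-local observable. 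Tracking exactly how the observable's locality propagates through this conversion is what produces the extra polynomial-in-locality factor on top of the naive substitution --- $n^{2}$ in the global case and $k^{2}$ in the local case --- yielding the stated $n^3$ and $nk^2$ dependencies. Boosting the lifted estimator by median-of-means and union-bounding over the $M$ observables then secures the $2/3$ success probability for all target observables on the reduced test state, completing the argument.
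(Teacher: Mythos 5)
Your overall route is the paper's route: view the Huang--Kueng--Preskill protocol as a non-adaptive incoherent algorithm $\cA$, take its i.i.d.\ guarantee $k_{\cA}=\cO\left(\max_i\|O_i\|_{\mathrm{shadow}}^2\log(M/\delta)/\eps^2\right)$ with shadow norm $\cO(\|O\|_2^2)$ (global Clifford) resp.\ $\cO(4^k\|O\|_\infty^2)$ (local Clifford), and substitute into Theorem~\ref{thm:gen to non iid - iid alg} with $\delta=1/3$ and $\log(d)=n$. This is exactly what the paper does in Propositions~\ref{prop:non iid classical shadows} and~\ref{prop:local shadows in non iid setting}, and your $\eps^{-6}$ accounting (one $\eps^{-2}$ from the de Finetti term, $\eps^{-4}$ from squaring $k_{\cA}$) is correct.

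However, your third paragraph misidentifies both the alleged obstacle and the origin of the extra $n^2$ (global) and $k^2$ (local) factors, and this is a genuine error rather than a stylistic difference. No handling of the unbounded single-shot shadow estimator is needed in the lifting step, because Theorem~\ref{thm:gen to non iid - iid alg} transfers guarantees at the level of \emph{failure-probability events}, not expectations of estimators: the prediction $p$ is a classical post-processing of the outcomes $\bm{v}$, and the randomized de Finetti bound controls the total-variation distance between the joint outcome distributions under $\rho^{A_1\cdots A_k}_{l,\bm{r,w}}$ and $\left(\rho^{A_N}_{l,\bm{r,w}}\right)^{\otimes k}$, so the probability of the event $\left(p,\rho^{A_N}_{l,\bm{r,w}}\right)\notin\textup{SUCCESS}_\eps$ shifts by at most that TV distance, regardless of the estimator's range; all variance and median-of-means considerations are already inside the black-box i.i.d.\ guarantee, and the residual conditioning issue ($\rho^{A_N}_{l,\bm{r,w},p}$ versus $\rho^{A_N}_{l,\bm{r,w}}$) is handled by the robustness assumption together with Lemma~\ref{lemma:equ of past and future states}, both packaged inside the theorem. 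The true source of $n^3$ and $nk^2$ is mundane bookkeeping: the theorem's error term carries a factor $\log^2(k_{\cA}/\delta_{\cA})$ coming from the coupon-collector sampling of $\cA$'s measurement set, and since $k_{\cA}$ contains $\max_i\|O_i\|_2^2\le 2^n$ (global) resp.\ $4^k$ (local), this squared logarithm contributes $\tilde{\cO}(n^2)$ resp.\ $\tilde{\cO}(k^2)$ on top of the single $\log(d)=n$. Your proposed ``second-moment propagation restricted to the support of each local observable'' is not a step that needs to be carried out, and as written it is also not a proof --- it names the desired output without a mechanism; replacing that paragraph with the direct substitution just described closes the argument.
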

Notably, taking classical shadows techniques allows us to perform verification in the non-i.i.d.\ setting \cite{zhu2019general} without even revealing  or making assumptions on the  verified target state.

\subsection{Application: verification of pure states}

{The verification of pure states plays an important role in quantum information, notably in the cryptographic setting, where devices, channels or parties are not trusted \cite{eisert2020quantum}. 
This stems from the view of quantum states as resources for certain tasks, which is the case for many applications in quantum information, where the most challenging part is the preparation (and/or distribution) of large entangled states, with which various applications can be carried out by easier, usually local, operations. In measurement based quantum computing, computation is carried out by single qubit measurements on a large entangled graph state \cite{raussendorf2001one}. In networks, many applications rely on the sharing of particular entangled resource states, such as anonymous communication \cite{christandl2005quantum}, secret sharing \cite{markham2008graph} and distributed sensing \cite{komar2014quantum}.
In these cases, what this means is that, once we can be sure we have the good resource state, we can confirm the application itself. 
The ability to verify the resource state is then very useful, especially, for example, if the resource state is issued by an untrusted server, or shared over an untrusted network. 
In these cases, we would clearly not like to make the assumption of an i.i.d. source since this would correspond to assuming i.i.d. attacks by the malicious party. 
In the simplest case the malicious party would behave well on some runs (in order to convince the user the state is a good resource), and badly on the others (potentially corrupting the application). We then require verification of pure resources states, without the i.i.d. assumption.
Once armed with this, for example, verified quantum computation, can be achieved by verifying the underlying resource graph state \cite{hayashi2015verifiable}. 
Similarly, verifying the underlying resource states provides security over untrusted networks for anonymous communication \cite{unnikrishnan2019anonymity}, secret sharing \cite{bell2014experimental} and distributed sensing \cite{shettell2022cryptographic}.}

As an application of Theorem~\ref{thm-intro: iid-meas} and Proposition~\ref{cor:classical-shadows}, we can show that \emph{any} $n$-qubit pure state can be verified  with  either $\tilde{\cO}\left( \frac{n}{\eps^6}\right)$ Clifford measurements (see Proposition~\ref{prop:verification of M pure states}) or  $\tilde{\cO}\left( \frac{n^3 16^n}{\eps^6}\right)$ Pauli measurements (see Proposition~\ref{prop:local verification of M pure states}). In words, a verification algorithm should accept only when  the test set (post-measurement state) is $\eps$-close to the ideal state in fidelity. Our proposed algorithm offers two significant advantages: (a) it does not rely on the assumption of i.i.d.\ state preparations, and (b) it does not demand prior knowledge of the target pure state during the data acquisition phase (that is, the measurements in the algorithm are independent of the state we wish to verify).

Notably, existing verification protocols in the  non-i.i.d.\ setting are \emph{state-dependent}, such as stabiliser states \cite{hayashi2015verifiable,markham2020simple,takeuchi2019resource}, weighted graph states, hypergraph states \cite{morimae2017verification}, and Dicke states \cite{liu2019efficient}. In contrast, our protocol is \emph{independent} of the state to be verified. This not only adds to its simplicity but also offers potential advantages in concealing information from the measurement devices regarding the purpose of the test.
 This ‘blindness' is a crucial aspect of many protocols for the verification of computation \cite{gheorghiu2019verification}, making this feature valuable in such contexts.  Moreover, in both network and computational settings, having a universal protocol  simplifies the  management of verification steps in broader scenarios where different states may be used for various applications.

\section{Discussion}

\subsection{Related work} 
	 \paragraph{De Finetti-style theorems.}The foundational de Finetti theorem, initially introduced by de Finetti~\cite{de1937foresight}, states that  exchangeable Bernoulli random variables behaves as a mixture of i.i.d.\ Bernoulli random variables. Subsequently, this statement was quantified and generalized to finite sample sizes and arbitrary alphabets by \cite{diaconis1980finite,diaconis1987dozen}.
	This theorem was further extended to quantum states. Initially, \cite{hudson1976locally,caves2002unknown} established asymptotic generalizations, while \cite{konig2005finetti,christandl2007one} presented finite approximations in terms of trace-norm. Later works improved these approximations for weaker norms: \cite{brandao2011faithful,brandao2013quantum} achieved exponential improvements in the dimension dependence using the one-way LOCC norm, initially for $k=2$ by \cite{brandao2011faithful}, and subsequently for general $k$ by \cite{brandao2013quantum}.
	In the mentioned works, the permutation-invariant state was approximated by a mixture of i.i.d.\ states. \cite{SDP} introduced an approximation to i.i.d.\ states in terms of the trace-norm. In this work, we improve the dimension dependence of this approximation, employing a randomized LOCC norm instead of the trace-norm.
	Lastly, it is worth noting that information-theoretic proofs for classical finite de Finetti theorems were provided by \cite{gavalakis2021information,gavalakis2022information,berta2023third}.
	\paragraph{State tomography.}In the i.i.d.\ setting, the copy complexity of the state tomography problem is well-established: $\Theta(d^2/\eps^2)$ with coherent measurements \cite{o2016efficient,MR3536624}, and $\Theta(d^3/\eps^2)$ with incoherent measurements \cite{kueng2017low,guctua2020fast,chen2023does}, where $\eps$ denotes the {approximation accuracy}. In the non-i.i.d.\ setting, \cite{christandl2012reliable} introduced a formulation for the state tomography problem and presented a result using confidence regions. This result pertains to the asymptotic regime, specifically when the state can be represented as a mixture of i.i.d.\  states. In this article, we build upon the formulation of  \cite{christandl2012reliable}, and we discern between algorithms {that return calibration information and those that do not}. Furthermore, we introduce a state tomography algorithm with a finite copy complexity (in the non asymptotic regime). Finally, \cite{cramer2010efficient} have also proposed non-i.i.d.\ tomography algorithms tailored for matrix product states.
	\paragraph{Shadow tomography.}The problem of shadow tomography is known to be solvable with a complexity that grows poly-logarithmically with respect to both the dimension and the number of observables, provided  (almost) all i.i.d.\ copies can be coherently measured \cite{aaronson2018online,aaronson2019shadow,buadescu2021improved}. However, if we seek to extend this result to the non-i.i.d.\ setting using our framework, the copy complexity would be polynomial in the dimension.
	In the case of incoherent measurements, classical shadows \cite{paini2019approximate,morris2019selective,huang2020predicting,bertoni2022shallow,akhtar2023scalable,helsen2022thrifty,wan2022matchgate,low2022classical} offer efficient algorithms for estimating properties of certain observable classes. Leveraging our findings, these algorithms can be adapted to the non-i.i.d.\ setting while maintaining comparable performance guarantees. Importantly, this extension retains efficiency for the same class of observables.
	Finally, \cite{Neven2021, fanizza2022learning, helsen2022thrifty} derived shadow tomography results assuming receipt of independent (though not necessarily identical) copies of states. However, it is worth noting that the assumption of independence, which we overcome in this article, is necessary for their analysis.
	\paragraph{Verification of pure states.}In scenarios where the verifier receives independent or product states, optimal and efficient protocols have been proposed by~\cite{pallister2018optimal,liu2019efficient,li2020optimal}.
	 Recently, considerable attention has been given to the verification of pure quantum states in the adversarial scenario, where the received states can be arbitrarily correlated and entangled \cite{morimae2017verification,takeuchi2018verification, takeuchi2019resource,zhu2019general, unnikrishnan2022verification,li2023robust}. For instance, \cite{takeuchi2018verification} proposed efficient protocols for verifying the ground states of Hamiltonians (subject to certain conditions) and polynomial-time-generated hypergraph states. Meanwhile, \cite{zhu2019general} introduced protocols to efficiently verify bipartite pure states, stabilizer states, and Dicke states. Noteworthy attention has also been directed towards the verification of graph states~\cite{morimae2017verification, zhu2019general, unnikrishnan2022verification, li2023robust}.
	 Furthermore, \cite{govcanin2022sample} studied device-independent verification of quantum states beyond the i.i.d.\ assumption. Lastly, the verification of continuous-variable quantum states in the adversarial scenario is studied in \cite{chabaud2019building,chabaud2021efficient,wu2021efficient}. Note that in all these cases the protocols depend explicitly on the state in question.
\subsection{Future directions} 
We have developed a framework for learning properties of quantum states in the non-i.i.d.\ setting. The only requirement we impose on the property we aim to learn is the robustness assumption (Definition~\ref{def:success}). It would be interesting to analyze the significance of this assumption in the context of the beyond i.i.d.\ generalizations we prove in the paper (Theorems~\ref{thm:gen to non iid - iid alg} and \ref{thm:general-meas}).  Furthermore, while only non-adaptive  algorithms that employ incoherent measurements are shown to be extended to encompass non-i.i.d.\ input states without a loss of efficiency, an open research direction is to investigate whether general algorithms can achieve a similar extension or if there exists an information-theoretic limit.

One of the applications of our results provides the first explicit protocol for verifying \emph{any} multiparty quantum state, accompanied by clear efficiency statements. However, our results have certain limitations: for local Pauli measurements, the scaling is exponential in the number of qubits, and while the scaling for Clifford measurements is close to optimal, they are non-local across each copy. In addition, the scaling in the error parameters is not optimal. Nevertheless, we see our results as a first proof-of-principle showing that beyond i.i.d.\ learning is feasible in many settings with performance guarantees that are comparable to the i.i.d.\ guarantees. We expect that further work will improve the bounds we obtain both for the general statements as well as using specificities of classes of learning tasks. In addition, we believe that this work will contribute to the transfer of techniques between the areas of learning theory and quantum verification.

\section{Methods}
 Here, we first provide the necessary notation and preliminaries in Section~\ref{sec:notation-preliminaries}. This section is essential for a complete understanding of the evaluation of an algorithm in the non-i.i.d.\ setting and the distinction we make between general and non-adaptive algorithms.
Next, we state and prove our new randomized local de Finetti theorem in Section~\ref{sec:de finetti}. 
Subsequently, in Section~\ref{sec: iid meas}, we apply our de Finetti theorem (Theorem~\ref{thm:gen-de finetti}) to extend non-adaptive algorithms to the non-i.i.d.\ setting.
In Section~\ref{sec:Applications}, we apply the results we obtained for non-adaptive algorithms (Theorem~\ref{thm:gen to non iid - iid alg}) to specific examples, including {observable prediction with classical shadows}  (Section~\ref{app:cl-shadows}), verification of pure states (Section~\ref{sec:application-verification}), fidelity estimation (Section~\ref{app:fidelity}), {quantum} state tomography (Section~\ref{app:tomography}), and testing the mixedness of states (Section~\ref{app:testing-mixedness}).
Finally, in Section~\ref{sec: general meas} we show how to generalize \emph{any} algorithm to the non-i.i.d.\ setting.

\subsection{Notation and preliminaries}\label{sec:notation-preliminaries}
Let $[d]$ denote the set of integers from $1$ to $d$ and $[t,s]$ denote the set of integers from $t$ to $s$. 
Hilbert spaces are denoted $A, B, \ldots$ 
and we will use these symbols for both the label of a quantum system and the system itself.  
We let $d_A$ be the dimension of the Hilbert space $A$. Let $\mathrm{L}(A)$ denote the set of linear maps from $A$ to itself. A quantum state on $A$ is defined as
\[
\rho\in \mathrm{L}(A):\; \rho \mge0 \;\text{and}\; \tr{\rho}=1,
\]
where $\rho \mge 0$ means that $\rho$ is positive semidefinite. The set of quantum states on $A$ is denoted by $\mathrm{D}(A)$. 
For an integer $N\ge 2$, we denote the $N$-partite composite system by $A_1A_2\cdots A_N = A_1\otimes A_2\otimes \dots \otimes A_N$. A classical-quantum state is a bipartite states that can be written in the form
\begin{equation*}
	\rho^{XB} = \sum_{x \in \cX} p_x \proj{x}^X \otimes \rho^B_x,
\end{equation*}
for some orthonormal basis $\{\ket{x}\}_{x \in \cX}$ of the {classical outcome} space $X$, where $\bm{p}=(p_x)_{x \in \cX}$ is a probability distribution and for $x \in \cX$, $\rho^B_x$ is a quantum state. It will also be useful to {interpret} a classical quantum state as $\rho^{XB} \in \csys{\cX} \otimes \qsys{B}$, i.e., as a vector $(p_x \rho^B_x)_{x \in \cX}$ of operators acting on $B$. {This interpretation } is more appropriate when the classical system takes continuous values. In this case, technically $\csys{\cX}$ should be interpreted as the space $L_1(\cX,\mu)$ with some measure $\mu$ on $\cX$. 
\\Quantum channels are linear maps $\cN: \mathrm{L}(A)\rightarrow\mathrm{L}(B)$ that can be written in the form 
\[
	 \cN(\rho)= \sum_{x\in \cX} K_x \rho K_x^\dagger \quad \text{for all $\rho \in \mathrm{L}(A)$}.
\]
Here, the Kraus operators $\{K_x\}_{x\in \cX}$ are linear maps from $A$ to $B$ and satisfy $\sum_{x\in \cX}  K_x^\dagger K_x= \dI_{d_A}$, {where $
\dI_{d_A}$ is the identity matrix in $d_A$ dimensions ($\left[ 
\dI_{d_A} \right]_{i,j} = \delta_{i,j}$)}. Equivalently, $\cN$ is trace preserving and completely positive. The partial trace $\ptr{B}{.}$ is a quantum channel from $AB$ to $A$ defined as 
\[
\ptr{B}{\rho}= \sum_{i=1}^{d_B} \left(\dI^A_{d_A} \otimes \bra{i}^B \right) \,\rho\, \left(\dI^A_{d_A} \otimes \ket{i}^B \right) \quad \text{for all $\rho \in \mathrm{L}(A)\otimes \mathrm{L}(B)$}.
\]
For bipartite state $\rho^{AB}$, we denote  the reduced state on $A$ by  $\rho^A= \ptr{B}{\rho^{AB}}$. 
In general, for an $N$-partite state $\rho^{A_1\cdots A_N}$ and for two integers  $t\le s\in [N]$, we denote by $\rho^{A_t \cdots  A_s}$ the quantum state obtained by tracing out the systems {$A_i$ for $i<t$, as well as $i>s$. In formulas:}
\[
    \rho^{A_t \cdots A_s}= 
    \ptr{A_1 \cdots A_{t-1} A_{s+1} \cdots A_{N}}{\rho^{A_1\cdots A_N}}.
\]
In the situation where all systems except one ($A_t$ for $t\in [N]$) are traced out, we use the notation 
\[
\ptr{-A_t}{\rho}= \ptr{A_1 \cdots A_{t-1}  A_{t+1} \cdots  A_N}{\rho^{A_1\cdots A_N}}.
\]
A quantum channel $\Lambda$ with classical output system is called a measurement channel, and is described by a POVM (positive operator-valued measure) $\{M_x\}_{x \in \cX} \in \left(\mathrm{L}(A)\right)^{\cX}$ where the measurement operators satisfy $M_x \mge 0$ and $\sum_{x\in \cX} M_x= \dI_{d_A}$. After performing the measurement on a quantum state $\rho\in \mathrm{D}(A)$ we observe the outcome $x\in \cX$ with probability $\tr{M_x\rho}$. The measurement channel $\Lambda$ should be viewed as a linear map $\Lambda : \qsys{A} \to \csys{\cX}$ (preserving positivity and normalization) defined by:
\[
\forall \rho\in \mathrm{L}(A) : \; \Lambda(\rho)= (\tr{M_x \rho})_{x \in \cX}. 
\]
For a measurement operator $0 \mle M_{x} \mle \dI$ acting on $A$, we write $\rho$ conditioned on observing the outcome $x$ by:
\[
    \rho^{B}_{x}= \frac{1}{\tr{M_x \rho^{A}}}\;\ptr{A}{\left(M_x\otimes \dI^{B}\right) \cdot \rho^{AB}}.
\]
{Note that this display is only well-defined if $\tr{M_x \rho} > 0$. We extend it consistently to $\tr{M_x \rho} = 0$ by identifying $\rho^B_x$ with a single fixed density matrix, e.g. the maximally mixed state. }
The state $\rho$ and the measurement $\Lambda$ define a probability measure $\prs{\Lambda(\rho)}{.}$ on $\cX$ by $\prs{\Lambda(\rho)}{x} = \tr{M_x \rho^A}$ and we will usually write $x$ to be a random variable {associated with this measure} $\prs{\Lambda(\rho)}{.}$.

\paragraph{I.i.d.\ setting - input state.} A common assumption in the field of quantum learning is that the learning algorithm is provided with $N$ independent and identically distributed (i.i.d.) copies of the unknown quantum state.
\begin{definition}[I.i.d.\  states]\label{def:iid-state}Let $A_1 \cong A_2 \cong \cdots \cong A_N$ be $N$ isomorphic quantum systems. An i.i.d.\ state refers to an $N$-partite quantum state $\rho\in \mathrm{D}(A_1\cdots A_N)$ that can be  expressed  as $\rho= \sigma^{\otimes N}$ where $\sigma\in \mathrm{D}(A_1)$ is a quantum state. 
\end{definition}
An i.i.d.\ state possesses the characteristic of permutation invariance: if we permute the arrangement of the constituent states $\sigma$, the overall state $\rho=\sigma^{\otimes N}$ remains unchanged. 
For a formal definition of permutation invariance, let  $\fS_N$ be   the permutation group of $N$ elements.
\begin{definition}[Permutation invariant states]\label{def:per-inv-state}
	For  $\pi\in \fS_N$, let $C_\pi$ be the permutation operator corresponding to the permutation $\pi$, that is: 
	\[
	C_\pi\ket{i_1}\otimes \dots \otimes \ket{i_N}= \ket{i_{\pi^{-1}(1)}}\otimes \dots \otimes \ket{i_{\pi^{-1}(N)}}\;, \quad \forall i_1, \dots, i_N\in [d]. 
	\]
	A  state $\rho\in \mathrm{D}(A_1\cdots A_N)$ is permutation invariant if  for all $\pi\in \fS_N$ we have $$\rho^{A_1 \cdots A_N}= C_\pi\,\rho^{A_1 \cdots A_N}\, C_\pi^\dagger .$$
\end{definition}

{Note that every i.i.d.\ state $\rho = \sigma^{\otimes N}$ is permutation-invariant. The converse is not necessarily true, however. Take, for example a $N$-qubit GHZ state: $\rho = |\mathrm{GHZ}_N \rangle \! \langle \mathrm{GHZ}_N|$ with $|\mathrm{GHZ}_N \rangle = \left( |0 \cdots 0 \rangle + |1 \cdots 1 \rangle\right)/\sqrt{2}$. This state is unaffected under permutation operators, but it is very far from an i.i.d.\ tensor product.  }
It is worthwhile to point out that permutation invariance plays nicely with partial measurements. 
If $\rho$ is permutation invariant then for an operator $0\mle M_x \mle \dI$  acting on $A_1\cdots A_t$, the post-measurement state $\rho^{A_{t+1} \cdots A_N}_{x}$ is also permutation invariant. So we can define the reduced state conditioned on observing $x$ as:
\[
\rho_{x}^{A_N} = \ptr{-A_N}{\rho^{A_{t+1} \cdots A_N}_{x}}=\ptr{-A_j}{\rho^{A_{t+1} \cdots A_N}_{x}}= \rho_{x}^{A_j} , \quad \forall j\in [t+1, N].
\]

\paragraph{Problems/tasks.} In this article, we consider  problems of learning  quantum states' properties. These problems can be formulated using a SUCCESS event: 
\begin{definition}[Success formulation of learning properties of quantum states]\label{def:success}
A quantum learning problem for states on the system $A$ is defined by: a set $\cP$ of possible predictions together with a set of successful predictions SUCCESS $\subseteq \cP \times \mathrm{D}(A)$. If $(p, \sigma) \in \textup{SUCCESS}$, then $p$ is considered a correct prediction for $\sigma$. Otherwise, it is considered incorrect.

Many problems have a precision parameter $\eps$, we write in this case $\textup{SUCCESS}_\eps$ for the pairs $(p, \sigma)$ for which $p$ is a correct prediction for $\sigma$ within precision $\eps$.

We say that the property $\textup{SUCCESS}_{\eps}$ satisfies the \emph{robustness assumption} whenever
\begin{align*}
	\forall (\sigma,\xi) : \|\sigma-\xi\|_1\le \eps' \;:\quad  (p, \sigma) \in \textup{SUCCESS}_\eps  &\Rightarrow (p, \xi) \in \textup{SUCCESS}_{\eps+\eps'}. 
\end{align*}
\end{definition}
\begin{example}\label{exmples:learning problems}
	We illustrate the \textup{SUCCESS} set for the shadow tomography, full state tomography, verification of a pure state, and testing mixedness  problems:  
	\begin{itemize}
		\item \textbf{Shadow tomography:} For some family of $M$ observables $O_1, \dots, O_M$ satisfying $0 \mle O_i \mle \dI$, the objective is to estimate all their expectation values within an additive error  $\eps$. In this case, a prediction is an $M$-tuple of numbers in $[0,1]$, i.e., $\cP = [0,1]^M$ and the correct pairs are given by
		\begin{align*} 
			\textup{SUCCESS}_\eps= \{(\{\mu_1, \dots, \mu_M\}, \sigma) \;|\; \forall 1\le i\le M  : |\mu_i-\tr{O_i\sigma}|_1\le \eps \}\subset  [0,1]^M \times \mathrm{D}(A).
		\end{align*} 
  		\item \textbf{State tomography:} The objective is to obtain a description of the full state. In this case, a prediction is a description of a density operator, i.e., $\cP = \mathrm{D}(A)$ and we have 
		\begin{align*}
			\textup{SUCCESS}_\eps= \{(\rho, \sigma) \;|\; \|\rho-\sigma\|_1\le \eps \}\subset \mathrm{D}(A) \times \mathrm{D}(A).
		\end{align*}
		\item \textbf{(Tolerant) verification of pure states:} In this problem, the objective is to output ‘$0$' if the state we have is $\eps$-close to $\ket{\Psi}$ and output ‘$1$' if it is $2\eps$-far from $\ket{\Psi}$. In this case, the prediction is a bit, i.e., $\cP = \{0,1\}$ and notice that this is a promise problem in the sense that there are inputs for which any output is valid. For this reason, it is simpler to define the incorrect prediction pairs:
		\begin{align*}
		(\textup{SUCCESS}_\eps)^c = \left\{\left(1, \sigma \right) | \bra{\Psi}\sigma \ket{\Psi}\ge 1-\eps  \right\} \cup \left\{\left(0, \sigma\right) \;|\; \bra{\Psi}\sigma \ket{\Psi}\le 1-2\eps \right\} \subset\{0,1\}\times \mathrm{D}(A).
		\end{align*}
  
		\item \textbf{(Tolerant) testing mixedness of quantum states:} This problem is similar to the previous one, except that we are testing if the state is maximally mixed or not. In this case, we have
		\begin{align*}
	(\textup{SUCCESS}_\eps)^c = \left\{\left(1, \sigma \right) | \left\|\sigma- \tfrac{\dI}{d}\right\|_1\le \eps  \right\} \cup \left\{\left(0, \sigma\right) \;|\; \left\|\sigma- \tfrac{\dI}{d}\right\|_1\ge 2\eps \right\} 			\subset\{0,1\}\times \mathrm{D}(A).
		\end{align*}
	\end{itemize}
	Observe that all these problems, by the triangle inequality, satisfy the robustness assumption. 
\end{example}

Before specifying the algorithms we consider, let us first recall how one could formulate a problem when the input state is non-i.i.d.\ 
\paragraph{Non-i.i.d.\ setting - input state.}
Given a learning problem defined by $\textup{SUCCESS}_\eps$, in the usual setting,  
an algorithm takes as an input an i.i.d.\ state $\rho^{A_1 \cdots A_N}= \sigma^{\otimes N}$ and outputs a prediction $p$. Then, we say that this algorithm succeeds if $(p, \sigma) $ belongs to the 
SUCCESS set.  In the setting where the input state $\rho^{A_1 \cdots A_N}$ is no longer an i.i.d.\ state, it is not clear when the algorithm succeeds. In what follows, we follow \cite{christandl2012reliable,zhu2019general} and present a way to evaluate algorithms with possibly non-i.i.d.\ input states. 

Consider a collection of $N$ finite dimensional quantum systems  $A_1\cong \cdots \cong A_N$. We denote the dimension of $A_1$ by $d$ (for an $n$-qubit system $A_1$, we have $d=2^n$). This collection is shuffled uniformly at random so that the state $\rho^{A_1 \cdots A_N}\in \mathrm{D}(A_1\cdots A_N)$ is permutation invariant. We need to form two sets: 
\begin{itemize}
	\item The \textbf{train set} which consists  of the first $N-1$ copies of the state. Some of these copies are  measured in  order to construct the estimations necessary for the learning task, and
	\item The \textbf{test set} which consists of the last copy (the state on $A_N$) that is  used to ‘test' the accuracy of the estimations deduced from the train set. This copy should not be measured.
\end{itemize}
{Since the state $\rho^{A_1 \cdots A_N}$ can now be entangled, it is possible that train and test set cannot be separated from each other. 
In particular, the measurements we perform on the train set may affect the test set. In addition, the choice of measuring a copy or not can also affect the test set. }
At the end, we compare the estimations from the train set with only {\textit{a single}} copy of the test set (see Figure~\ref{Fig:problem formulation} for an illustration). 
\begin{figure}[t!]
	\centering
		\centering
		\includegraphics[width=0.4\linewidth]{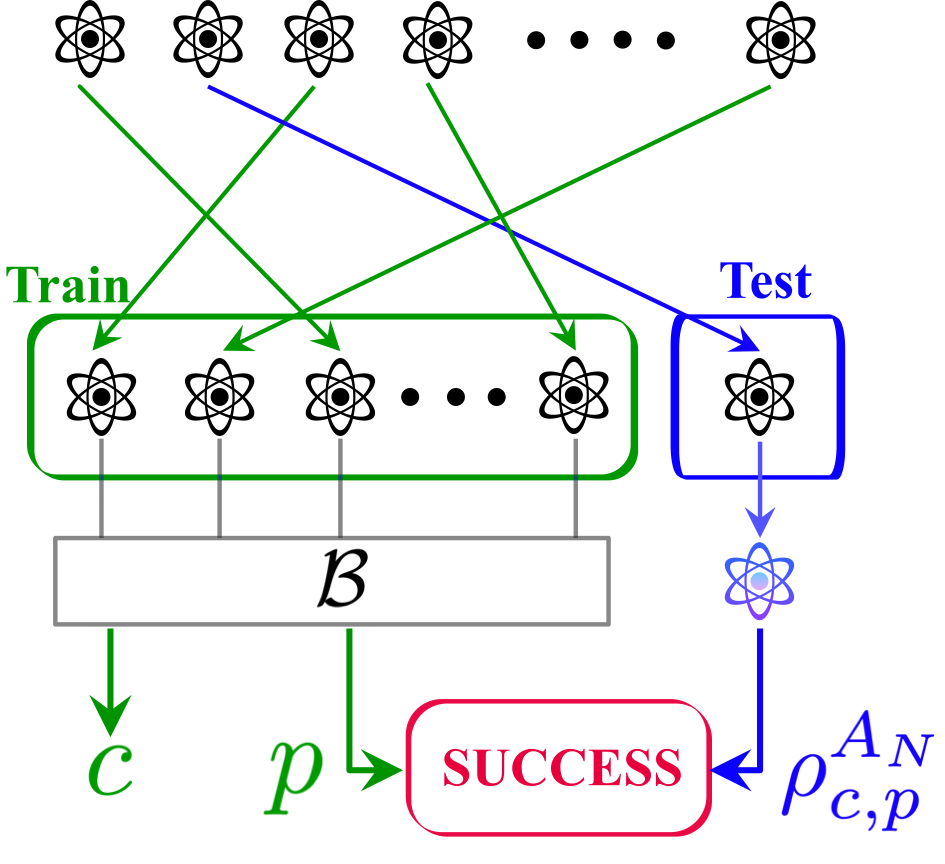}
	\caption{A general algorithm for learning properties of quantum states in the non-i.i.d.\ setting.
     A learning algorithm $\cB$ takes as input the $N-1$ copies of the train set and returns a  prediction $p$ and  a calibration $c$.  Success occurs if $p$ is (approximately) compatible with the remaining post-measurement test copy  $\rho^{A_N}_{c,p}$.} 
	\label{Fig:problem formulation}
\end{figure}
Note that in the i.i.d.\ setting, i.e., $\rho= \sigma^{\otimes N}$, the train set will be of the form $\sigma^{\otimes N-1}$ and the test set of the form  $\sigma$ where we compare the estimations deduced from measuring the state $\sigma$ with the test state $\sigma$. Thus we recover the usual setting. The following example illustrates the importance of choosing the test state as the post-measurement state. 
\begin{example}
    Consider the following permutation invariant state 
    \begin{align*}
        \rho^{A_1\cdots A_N}= \frac{1}{d}\sum_{i=1}^{d} \proj{i}^{\otimes N}. 
    \end{align*}
    If we measure the first system $A_1$ with the canonical basis $\cM= \{\proj{i}\}_{i\in [d]}$, we observe $m\in [d]$ with probability $1/d$ and the state collapses to:
    \begin{align*}
        \rho^{A_2\cdots A_N}_{m}= \proj{m}^{\otimes N-1}.
    \end{align*}
   After this initial measurement, the state of the last system $A_N$ is always equal to $\proj{m}$. Therefore, it is more appropriate to compare the prediction to $\rho^{A_N}_m = \proj{m}$ rather than the reduced measurement state $\rho^{A_N} = \frac{1}{d}\sum_{i=1}^d \proj{i}=\frac{\dI}{d}.$
\end{example}

\paragraph{Algorithms.}
In a general algorithm, the prediction can be an arbitrary quantum channel from the train set to a prediction. 
\begin{definition}[General algorithm]\label{def:gen-alg} Let $A_1 \cong A_2 \cong \cdots \cong A_N$ be $N$ isomorphic quantum systems.  An algorithm for a learning problem with prediction set $\cP$  is simply a measurement channel $\cB:\mathrm{L}(A_1\cdots A_{N-1})\rightarrow \csys{\cP}$.
\end{definition}
We will also be interested in a special class of learning algorithms: non-adaptive incoherent algorithms that can only measure each system separately and then apply an arbitrary classical post-processing function.
\begin{definition}[Non-adaptive algorithm using $N$ copies]\label{def:non-adaptive-alg} Let $A_1 \cong A_2 \cong \cdots \cong A_N$ be $N$ isomorphic quantum systems.
	For a non-adaptive algorithm, the prediction channel $\cB$ should be of the form $\cB = \cD \circ (\cM_1 \otimes \cdots \otimes \cM_{N-1})$ where   $\cM_i : \mathrm{L}(A_i) \to \csys{\cX_i}$ are measurement channels, and  $\cD : \csys{\cX_1} \otimes \cdots \otimes \csys{\cX_{N-1}} \to \csys{\cP}$ is an arbitrary post-processing channel {(aka a classical data processing algorithm)}.
\end{definition}

\paragraph{Error probability.}
We can assess an algorithm based on its probability of error, which represents the likelihood that its outcomes do not satisfy the desired property for a given test set or state.  
Note that if a learning algorithm outputs more information than simply the prediction $p$, this may influence the post-measurement state that we are comparing against and influence the error probability.
This leads us to the following definition which allows the learning algorithm to output auxiliary information, which we refer to as ‘calibration'. See Figure~\ref{Fig:problem formulation} for an illustration of algorithms  with calibration information.

\begin{definition}[Error probability in the non-i.i.d.\ setting with calibration]\label{def:error proba non-iid}
	Let $N\ge 1$ be a positive integer, $A_1 \cong A_2 \cong \cdots \cong A_N$ be $N$ isomorphic quantum systems.  
	Let $\rho^{A_1 \cdots A_N}\in \mathrm{D}(A_1\cdots A_N)$ be permutation invariant. A learning algorithm with calibration is given by a quantum channel $\cB : \qsys{A_1 \dots A_{N-1}} \to \csys{\cC} \otimes \csys{\cP}$. 
	 The error probability of the algorithm on input $\rho$ is:
  	\begin{align*}
		\delta_{\cB}(N, \rho^{A_1 \cdots A_N},\eps)= \prs{(c,p)\sim\cB(\rho)}{\left(p, \rho_{ c,p}^{A_N}\right)\notin  \textup{SUCCESS}_\eps},
	\end{align*}
 where $(c,p)$ is a  random variable having distribution $\cB(\rho^{A_1 \dots A_{N-1}})$.
\end{definition}
Note that, if $\rho$ is i.i.d., the conditioning on $c, p$ does not have any effect on the post-measurement state and  Definition~\ref{def:error proba non-iid} coincides with the usual definition of the error probability.

{We  refer to Appendix~\ref{App:w/ side information}  for the distinction between error probabilities with and without calibration. In particular,  we are able to extend algorithms to the non-i.i.d.\ setting  without calibration  for a wide range of learning problems that can be formulated using a function with reasonable assumptions.}  

We first state and prove a randomized local de Finetti theorem in Section~\ref{sec:de finetti}. 
We then concentrate on non-adaptive algorithms employing incoherent measurements and illustrate how to extend their applicability to handle non-i.i.d.\ input states. Later, in Section~\ref{sec: general meas}, we detail the process of adapting \textit{any} algorithm to function within the non-i.i.d.\ framework.

\subsection{Randomized local de Finetti Theorem}\label{sec:de finetti}
In this section we state and prove a randomized local de Finetti theorem. Note that the statement does not need the state to be permutation invariant, but we show that for most choices of permutations $\bm{j}$, the conditional state is close to product.

{
\begin{theorem}[Randomized local de Finetti]\label{thm:gen-de finetti}
	Let {$1\le k < \sqrt{\frac{N}{\log(d)}}$}. Let $\rho^{A_1\cdots A_N}$ be a state and let $q^N$ be a permutation-invariant measure on $\cR^N$. Let $\{\Lambda_r\}_{r \in \cR}$ be a set of   measurement channels with input system $A$ and output system $X$.
 Let $\bm{j} = (j_1, \dots, j_N)$ be a random permutation of $\{1, \dots, N\}$, $l\sim \unif\{k+1,\dots,k+\frac{N}{2}\}$, $\bm{r} =(r_1, \dots, r_N) \sim q^N$ and  $\bm{w} =(w_{l+1},\dots, w_{k+N/2})$ be the outcomes of measuring the systems $A_{j_{l+1}}, \dots, A_{j_{k+N/2}}$ using the measurements $\Lambda_{r_{l+1}}, \dots,  \Lambda_{r_{k+N/2}}$. The following inequality holds:
	\begin{align*}
  &\exs{\mathbf{j}, l, \bm{r}  \sim q^N}{\sum_{\bm{w} } \bra{\bm{w}} \Bigg(\bigotimes_{i=l+1}^{k+N/2}\Lambda_{r_{i}}\Bigg)(\rho^{A_{j_1} \cdots A_{j_N}}) \ket{\bm{w}}  \left\| \id \otimes \Bigg(\bigotimes_{i=2}^{k+1}\Lambda_{r_{i}}\Bigg)\left( \rho_{l,\bm{r},\bm{w}}^{A_{j_1}\cdots A_{j_{k+1}}} -\bigotimes_{i=1}^{k+1} \rho_{l,\bm{r,w}}^{A_{j_{i}}} \right)  \right\|_1} \\
  &\qquad \le  \sqrt{\frac{4 {k^2}\log(d)} {N}},
	\end{align*}
   where we defined the conditional state $\rho_{l,\bm{r,w}}$ as
   \[
   \rho^{A_{j_1} \cdots A_{j_{k+1}} A_{j_{N}}}_{l,\bm{r},\bm{w}} = \frac{\ptr{A_{j_{k+2}} \cdots A_{j_{N-1}}}{\bra{\bm{w}}(\Lambda_{r_{l+1}} \otimes \cdots \otimes \Lambda_{r_{k+N/2}})(\rho^{A_{j_1} \dots A_{j_N}})\ket{\bm{w}}}}{\tr{\bra{\bm{w}}(\Lambda_{r_{l+1}} \otimes \cdots\otimes  \Lambda_{r_{k+N/2}})(\rho^{A_{j_1} \dots A_{j_N}})\ket{\bm{w}}}}.
   \]
   Note that if $\rho^{A_1 \dots A_N}$ is permutation invariant, the random permutation $\bm{j}$ is not needed and we can replace $j_i$ by $i$ and $\bigotimes_{i=1}^{k+1} \rho_{l,\bm{r,w}}^{A_{j_{i}}}$ by $\left( \rho_{l,\bm{r,w}}^{A_{{N}}}\right)^{\otimes k+1}$ in the above expressions.
\end{theorem}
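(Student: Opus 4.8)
The plan is to turn the trace-norm distance into an information-theoretic quantity and then bound it by a chain-rule/telescoping argument, in the spirit of \cite{brandao2013quantum}. Since $\bm j$ is a uniformly random permutation and $q^N$ is permutation invariant, the physical systems together with their attached measurement channels are exchangeable once we average over $(\bm j,\bm r)$ (and if $\rho$ is already permutation invariant we may simply fix $\bm j=\mathrm{id}$). The test system $A_{j_N}$ does not appear in the displayed norm, so I trace it out and write the left-hand side as $\Delta=\mathbb{E}_{\bm j,l,\bm r,\bm w}\|\tilde\rho-\bigotimes_i\tilde\rho^{A_{j_i}}\|_1$, where $\tilde\rho$ is the measured conditional state $\mathrm{id}\otimes\Lambda_{r_2}\otimes\cdots\otimes\Lambda_{r_{k+1}}(\rho_{l,\bm r,\bm w}^{A_{j_1}\cdots A_{j_{k+1}}})$, a state on one quantum system $A_{j_1}$ and $k$ classical registers. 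The distance between a state and the product of its marginals is controlled by the multipartite Pinsker inequality, $\|\sigma-\bigotimes_i\sigma^i\|_1\le\sqrt{2\,C(\sigma)}$, where $C(\sigma)=\sum_iS(\sigma^i)-S(\sigma)$ is the total correlation; combining with Jensen gives
\[
\Delta\le\sqrt{2\,\mathbb{E}_{\bm j,l,\bm r,\bm w}\big[C(\tilde\rho)\big]}.
\]

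So it remains to bound the expected total correlation. Expanding by the chain rule, $C(\tilde\rho)=\sum_{b=2}^{k+1}I(Z_b:Z_1\cdots Z_{b-1})=\sum_{1\le a<b\le k+1}I(Z_b:Z_a\mid Z_1\cdots Z_{a-1})$, all conditioned on the measurement record $W_{>l}$, where $Z_1=A_{j_1}$ and $Z_2,\dots,Z_{k+1}$ are the outcomes. This is a sum of $\binom{k+1}{2}=\tfrac{k(k+1)}{2}$ terms, so it suffices to prove that each term satisfies $\mathbb{E}_{\bm j,l,\bm r}\,I(Z_b:Z_a\mid Z_{<a},W_{>l})\le \tfrac{2\log d}{N}$.

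The crux is this single-term bound. Fix a pair $(a,b)$. Using exchangeability I would swap the active partner $Z_b$ with a freshly measured external system: averaged over $(\bm j,\bm r)$, the term equals $\mathbb{E}\,I(X_{l+1}:Z_a\mid Z_{<a},Z_b,W_{>l+1})$, where now $X_{l+1}$ is the outcome of the system that the telescoping adds as $l$ decreases. Summing over the $N/2$ values of $l$ turns these conditional mutual informations into a single telescoping chain,
\[
\sum_{l}I(X_{l+1}:Z_a\mid Z_{<a},Z_b,W_{>l+1})=I\big(Z_a:X_{k+2}\cdots X_{k+N/2+1}\mid Z_{<a},Z_b\big),
\]
which uses $k<N/2$ to guarantee that the $N/2$ added systems are genuinely distinct and disjoint from the data block. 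Finally, by data processing I may treat the anchor $Z_a$ as the underlying $d$-dimensional quantum system $A_{j_a}$ (un-measuring only increases the mutual information), so the right-hand side is at most $S(A_{j_a})\le\log d$; this is where the bound $\log d$ rather than $2\log d$ comes from, since the other argument is classical. Dividing by the $N/2$ values of $l$ yields $\tfrac{2\log d}{N}$.

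Summing the $\binom{k+1}{2}$ pairwise contributions gives $\mathbb{E}[C]\le\tfrac{k(k+1)}{2}\cdot\tfrac{2\log d}{N}=\tfrac{k(k+1)\log d}{N}\le\tfrac{2k^2\log d}{N}$, and feeding this into the Pinsker bound yields $\Delta\le\sqrt{4k^2\log d/N}$. The main obstacle is making the single-term lemma rigorous: the exchangeability relabeling of $Z_b$ into a freshly measured external system has to be justified on a common symmetric state so that the chain rule over $l$ actually telescopes, and one must simultaneously keep track of the hybrid structure of the data block (the retained quantum system $A_{j_1}$ versus the $k$ measured registers) and of the internal-versus-external conditioning, while ensuring the final single-system mutual information stays bounded by $\log d$.
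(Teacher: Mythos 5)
Your proposal is, at bottom, the paper's own argument reorganized: the same ingredients (symmetrization over $\bm{j}$ and the permutation-invariant $q^N$, exchangeability swaps of the active system with a freshly revealed record slot, a chain-rule telescope over $l$, a dimension bound on a single mutual information, and finally multipartite Pinsker plus Jensen) appear in the same roles. The only structural difference is bookkeeping: the paper telescopes the block-versus-record quantity $\cI(A_{j_1}X_{j_2}\cdots X_{j_k}:X_{j_{k+1}}\cdots X_{j_{k+N/2}})\le k\log d$ over $l$ and compares each summand to the conditional total correlation at a cost of a factor $1/k$ (its inequalities \eqref{proof:definetti-l-k} and \eqref{proof:definetti-k-i}), whereas you split the total correlation into $\binom{k+1}{2}$ pairwise conditional mutual informations $I(Z_b:Z_a\mid Z_{<a},W_{>l})$ and telescope each pair separately. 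Your exchangeability swap and the chain rule over $l$ with fixed context $(Z_{<a},Z_b)$ are both legitimate, modulo harmless boundary bookkeeping (your extended record slot $k+N/2+1$ exists since $k<N/2$).

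There is, however, a genuine error in the step you yourself flag as the crux: the per-pair bound ``$\le\log d$ since the other argument is classical'' fails for every pair with $a\ge 2$, because the conditioning set $Z_{<a}$ then contains $Z_1=A_{j_1}$, a \emph{quantum} system. The branch-by-branch reduction $I(A:X\mid C)=\sum_c p_c\, I(A:X)_{c}$, which is what delivers $I\le S(A)\le\log d$ for classical $X$, is available only for classical $C$. With quantum conditioning the best generic bound is
\begin{align*}
I\bigl(A_{j_a}:X_{\mathrm{rec}}\mid Z_1,\,\text{classical}\bigr)\;\le\; I\bigl(A_{j_1}A_{j_a}:X_{\mathrm{rec}}\mid \text{classical}\bigr)\;\le\; S\bigl(A_{j_1}A_{j_a}\bigr)\;\le\; 2\log d,
\end{align*}
using $I(A:X\mid C)\le I(CA:X)$ and that $X_{\mathrm{rec}}$ is classical. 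Using $2\log d$ for all pairs gives only $\mathbb{E}[C]\le 2k(k+1)\log(d)/N$, i.e.\ the weaker bound $\sqrt{4k(k+1)\log(d)/N}$. The repair is easy and recovers the theorem's constant exactly: bound the $k$ pairs with $a=1$ (purely classical conditioning) by $\log d$ and the $\binom{k}{2}$ pairs with $a\ge 2$ by $2\log d$, so that $\mathbb{E}_l[\mathrm{TC}]\le \frac{2}{N}\bigl(k+k(k-1)\bigr)\log d=\frac{2k^2\log d}{N}$, and Pinsker then yields $\sqrt{4k^2\log(d)/N}$. So your route is sound and essentially the paper's, but the single-term lemma as stated is false and must be corrected as above.
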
}
The proof is inspired by \cite{brandao2013quantum} and \cite{SDP}.
\begin{proof}
 The mutual information is defined as follows:
	\begin{align*}
		\cI(A_1:A_2:\dots:A_N)_\rho = S(\rho^{A_1})+\dots+S(\rho^{A_N})   -S(\rho^{A_1\cdots A_{N}})  
	\end{align*}
	where $S(\rho)=-\tr{\rho\log(\rho)}$ is the Von Neumann entropy of $\rho$.  The mutual information of quantum-classical state $\xi^{A_{j_1}\cdots A_{j_k} C}= \sum_{m}p_m \rho_m^{A_{j_1}\cdots A_{j_k}}\otimes \proj{m}^C $ is defined as follows:
	\begin{align*}
		\cI(A_{j_1}:\dots:A_{j_k}|C)_\xi =\sum_{m}p_m \cI(A_{j_1}:\dots:A_{j_k})_{\rho_m}.
	\end{align*}
	The chain rule implies:
	\begin{align*}
		\cI(A_{j_1}:\dots:A_{j_k}|C)_\xi= \cI(A_{j_1}:A_{j_2} |C)_\xi + \cI(A_{j_1} A_{j_2}:A_{j_3} |C)_\xi+\dots + \cI(A_{j_1}\cdots A_{j_{k-1}}: A_{j_k}|C)_\xi.
	\end{align*}
	Moreover,  we can apply the data-processing inequality locally, for all quantum channels ${\Gamma_i : \mathrm{L}(A_{j_i}) \to \mathrm{L}(X_{j_i})}$, let $\zeta= \Gamma_1\otimes  \cdots\otimes  \Gamma_k \otimes \id  \left(\xi^{A_{j_1}\cdots A_{j_k} C} \right)$ we have:
	\begin{align*}
		\cI(X_{j_1}:\dots:X_{j_k}|C)_\zeta \le \cI(A_{j_1}:\dots:A_{j_k}|C)_\xi. 
	\end{align*}
	For every $\bm{r} = (r_1, \dots, r_{N})$ define the state: 
	\begin{align*}
		\pi^{A_{j_1} X_{j_2} \dots X_{j_N}}_{\bm{r}} =  \id \otimes \Lambda_{r_2} \otimes \dots \otimes \Lambda_{r_{N}} \left(\rho^{A_{j_1} \cdots A_{j_N}}\right).
	\end{align*}
 We have by the chain rule:
	\begin{align}\label{proof:definetti-chain-rule}
		&\exs{\bm{r} \sim q^N}{\cI\left(A_{j_1} X_{j_2}\cdots X_{j_k}:X_{j_{k+1}}\cdots X_{j_{k+N/2}}\right)_{\pi_{\bm{r}}}}\notag  
  \\&=  \exs{\bm{r} \sim q^N}{\sum_{l=k+1}^{k+N/2} \cI(A_{j_1} X_{j_2} \cdots X_{j_k} :X_{j_l}|X_{j_{l+1}}\cdots X_{j_{k+N/2}})_{\pi_{\bm{r}}}}.
	\end{align}
 By taking the average over the random permutation $\bm{j}$ and using the fact that the distribution $q^N$ is invariant under the permutation of the systems $k+1$ and $l$ we have
for all $k+1\le l\le k+N/2$:
\begin{align*}
    &\exs{\bm{j}, \bm{r} \sim q^N}{ \cI(A_{j_1} X_{j_2} \cdots X_{j_k} :X_{j_l}|X_{j_{l+1}}\cdots X_{j_{k+N/2}})_{\pi_{\bm{r}}}}
    \\&= \exs{\bm{j,r} \sim q^N}{ \cI(A_{j_1} X_{j_2} \cdots X_{j_k} :X_{j_{k+1}}|X_{j_{l+1}}\cdots X_{j_{k+N/2}})_{\pi_{\bm{r}}}}
\end{align*}
hence:
\begin{align}\label{proof:definetti-l-k}
		&\exs{\bm{j}, \bm{r} \sim q^N}{\sum_{l=k+1}^{k+N/2} \cI(A_{j_1} X_{j_2} \cdots X_{j_k} :X_{j_l}|X_{j_{l+1}}\cdots X_{j_{k+N/2}})_{\pi_{\bm{r}}}}\notag 
  \\&=\exs{\bm{j}, \bm{r} \sim q^N}{ \sum_{l=k+1}^{k+N/2} \cI(A_{j_1} X_{j_2} \cdots X_{j_k} :X_{j_{k+1}}|X_{j_{l+1}}\dots X_{j_{k+N/2}})_{\pi_{\bm{r}}}}.  
	\end{align}
	Now using the data-processing inequality for the partial trace channel and the fact that $q^N$ is permutation invariant and averaging over $\bm{j}$, we obtain for all $2\le i \le k$ and $ k+1\le l\le k+N/2$:
	\begin{align}\label{proof:definetti-k-i}
		&\exs{\bm{j}, \bm{r} \sim q^N}{\cI(A_{j_1} X_{j_2} \cdots X_{j_k} : X_{j_{k+1}}|X_{j_{l+1}}\cdots X_{j_{k+N/2}})_{\pi_{\bm{r}}}} \notag 
  \\&\ge \exs{\bm{j},\bm{r} \sim q}{\cI(A_{j_1} X_{j_2} \cdots X_{j_i} :X_{j_{k+1}}|X_{j_{l+1}}\cdots X_{j_{k+N/2}})_{\pi_{\bm{r}}} }\notag
		\\&= \exs{\bm{j},\bm{r} \sim q^N}{\cI(A_{j_1} X_{j_2} \cdots X_{j_i} :X_{j_{i+1}}|X_{j_{l+1}}\cdots X_{j_{k+N/2}})_{\pi_{\bm{r}}} }.
	\end{align}
	Then we can apply the chain rule to get for all $k+1\le l\le k+N/2$:
	\begin{align}\label{proof:definetti-chain-rule-2}
		&\sum_{i=2}^{k+1}\exs{\bm{j},\bm{r} \sim q^N}{\cI(A_{j_1} X_{j_2} \cdots X_{j_{i-1}} :X_{j_i}|X_{j_{l+1}}\dots X_{j_{k+N/2}})_{\pi_{\bm{r}}} }\notag 
  \\&=\exs{\bm{j}, \bm{r} \sim q^N}{\cI(A_{j_1} : X_{j_2}  : \cdots :X_{j_{k+1}}|X_{j_{l+1}}\cdots X_{j_{k+N/2}})_{\pi_{\bm{r}}}}.
	\end{align}
	Now, for each $k+1\le l\le k+N/2$, we introduce the notations $\pi_{\bm{r},\bm{w}}$ {for the states  conditioned} on the systems $(X_{j_{l+1}}, \dots, X_{j_{k+N/2}})$ taking the value $\bm{w}$, and $p_{\bm{r}}(\bm{w})$ for the probability of obtaining outcome $\bm{w}$. 
 Hence using Pinsker's inequality then Cauchy Schwarz's inequality, we obtain:
	\begin{align}\label{proof:definetti-pinsker}
		&\exs{\bm{j}, \bm{r} \sim q^N}{ \cI(A_{j_1} : X_{j_2} :\cdots : X_{j_{k+1}}  | X_{j_{l+1}}\cdots X_{j_{k+N/2}})_{\pi_{\bm{r}}}} \notag 
  \\
  &=  \exs{\bm{j}, \bm{r} \sim q^N}{\sum_{\bm{w}} p_{\bm{r}}(\bm{w})  \cI(A_{j_1} : X_{j_2} : \cdots : X_{j_{k+1}})_{\pi_{\bm{r},\bm{w}}}}\notag 
    		\\&\ge \frac{1}{2}  \exs{\bm{j}, \bm{r} \sim q^N}{\sum_{\bm{w}} p_{\bm{r}}(\bm{w})  \left\| \pi_{\bm{r},\bm{w}}^{A_{j_1} X_{j_2} \cdots X_{j_{k+1}}} - \pi_{\bm{r},\bm{w}}^{A_{j_1}} \otimes \pi_{\bm{r},\bm{w}}^{X_{j_2}} \otimes \cdots \otimes \pi_{\bm{r},\bm{w}}^{X_{j_{k+1}}}  \right\|_1^2}\notag 
		\\&={ \frac{1}{2}  \exs{\bm{j}, \bm{r} \sim q^N}{\sum_{\bm{w}} p_{\bm{r}}(\bm{w})  \left\| \Big(\id \otimes \bigotimes_{i=2}^{k+1} \Lambda_{r_i}\Big)(\rho_{l,\bm{r,w}}^{A_{j_1} A_{j_2} \cdots A_{j_{k+1}}}) - \rho_{l,\bm{r,w}}^{A_{j_1}} \otimes \bigotimes_{i=2}^{k+1} \Lambda_{r_i}(\rho_{l,\bm{r,w}}^{A_{j_i}})  \right\|_1^2} }\notag 
		\\&\ge \frac{1}{2}  \left( \exs{\bm{j}, \bm{r} \sim q^N}{\sum_{\bm{w}} p_{\bm{r}}(\bm{w})  \left\| \Big(\id \otimes \bigotimes_{i=2}^{k+1} \Lambda_{r_i}\Big)  \left(\rho_{l,\bm{r,w}}^{A_{j_1}\cdots A_{j_{k+1}}} -\rho_{l,\bm{r,w}}^{A_{j_1}}\otimes \dots \otimes \rho_{l,\bm{r,w}}^{A_{j_{k+1}}}\right) \right\|_1}\right)^2.
  \end{align}
 Combining the (In)equalities \eqref{proof:definetti-chain-rule}, \eqref{proof:definetti-l-k}, \eqref{proof:definetti-k-i}, \eqref{proof:definetti-chain-rule-2} and \eqref{proof:definetti-pinsker} we obtain:
	\begin{align*}
		&\exs{\bm{j}, \bm{r} \sim q^N}{\cI\left(A_{j_1} X_{j_2} \cdots X_{j_k} : X_{j_{k+1}}\cdots X_{j_{k+N/2}} \right)_{\pi_{\bm{r}}}} 
		\\&\ge\frac{1}{k} \sum_{l=k+1}^{k+N/2}\sum_{i=2}^{k+1} \exs{\bm{j}, \bm{r} \sim q^N}{\cI(A_{j_1} X_{j_2} \cdots X_{j_{i-1}} :X_{j_i}|X_{j_{l+1}}\cdots X_{j_{k+N/2}})_{\pi_{\bm{r}}}}
		\\&= \frac{N}{2k}\exs{\bm{j} l, \bm{r} \sim q^N}{\cI(A_{j_1}: X_{j_2}:\cdots: X_{j_{k+1}}|X_{j_{l+1}}\cdots X_{j_{k+N/2}})_{\pi_{\bm{r}}}}
		\\&\ge  \frac{N}{4k} \left( \exs{\bm{j}, l, \bm{r} \sim q^N}{\sum_{\bm{w}} p_{\bm{r}}(\bm{w})  \left\| \Big(\id \otimes \bigotimes_{i=2}^{k+1} \Lambda_{r_i}\Big)  \left(\rho_{l,\bm{r,w}}^{A_{j_1}\cdots A_{j_{k+1}}} -\rho_{l,\bm{r,w}}^{A_{j_1}}\otimes \dots \otimes \rho_{l,\bm{r,w}}^{A_{j_{k+1}}}\right) \right\|_1}\right)^2.
	\end{align*}
	Since $\cI(A_{j_1} X_{j_2} \cdots X_{j_k}: X_{j_{k+1}}\cdots X_{j_{k+N/2}})_{\pi_{\bm{r}}} \le \log(d^{k})= k\log(d)$ for all $\bm{r}\in \cR^N$,  
	we obtain finally the desired inequality:
	\begin{align}\label{ineq:de finetti with mutual info}
		&\exs{\bm{j}, l, \bm{r} \sim q^N}{\sum_{\bm{w}} p_{\bm{r}}(\bm{w})  \left\| \Big(\id \otimes \bigotimes_{i=2}^{k+1} \Lambda_{r_i}\Big)  \left(\rho_{l,\bm{r,w}}^{A_{j_1} \cdots A_{j_{k+1}}} -\rho_{l,\bm{r,w}}^{A_{j_1}}\otimes \dots \otimes \rho_{l,\bm{r,w}}^{A_{j_{k+1}}}\right) \right\|_1} \notag 
		\\&\quad \le  \sqrt{\frac{4k }{N}\cdot \exs{\bm{j}, \bm{r} \sim q^N}{\cI\left(A_{j_1} X_{j_2}\cdots X_{j_k}:X_{j_{k+1}}\cdots X_{j_{k+N/2}}\right)_{\pi_{\bm{r}}}}}\notag
		 \\& \quad\le \sqrt{\frac{4k }{N}\cdot \sup_{\bm{r}\in \cR^N}{\cI\left(A_{j_1} X_{j_2}\cdots X_{j_k}:X_{j_{k+1}}\cdots X_{j_{k+N/2}}\right)_{\pi_{\bm{r}}}}}
		\\& \quad\le \sqrt{\frac{4k^2\log(d)}{N}}.\notag
	\end{align}
\end{proof}

{We refer to Appendix \ref{app-illustration} for an illustration of Theorem~\ref{thm:gen-de finetti} for a specific permutation invariant state and a specific distribution of measurements.}

\subsection{Non-adaptive algorithms in the non-i.i.d.\ setting}\label{sec: iid meas}

In this section, our emphasis is on problems related to learning properties of quantum states (as defined in Definition~\ref{def:success}) and algorithms that operate through non-adaptive incoherent measurements (as defined in Definition~\ref{def:non-adaptive-alg}). We present a method to extend the applicability of these algorithms beyond the constraint of i.i.d.\ input states.

Let $\text{SUCCESS}_{\eps}$ define a property of quantum states. We consider a fixed  non-adaptive algorithm $\mathcal{A}$ that performs non-adaptive measurements on the systems which make up the train set. Our approach introduces a strategy $\mathcal{B}$ outlined in Algorithm~\ref{alg-non-iid} and illustrated in  Figure~\ref{fig:illustration of general strategy}, which extends the functionality of the algorithm $\mathcal{A}$ to encompass non-i.i.d.\ states. The input state, denoted as $\rho^{A_1 \cdots A_N}\in \mathrm{D}(A_1\cdots A_N)$, is now an $N$-partite state that can  be entangled. 

{ In words, given a non-adaptive incoherent algorithm $\mathcal{A}$ that uses a set of measurement devices $\{\mathcal{M}_t\}_t$, Algorithm~\ref{alg-non-iid} measures a large number of the state's subsystems using measurement devices uniformly chosen from $\{\mathcal{M}_t\}_t$ (see Figure \ref{fig:illustration of general strategy}, red and green parts). This ensures that the (small) portion of measured subsystems intended for the learning algorithm approximately behave like i.i.d.\ copies (see Figure \ref{fig:illustration of general strategy}, green part). Then, in order to predict the property, Algorithm~\ref{alg-non-iid} applies the data processing of Algorithm $\mathcal{A}$ to the outcomes of these subsystems.}

{More precisely, since $\cA$ is a non-adaptive algorithm, it performs measurements using the measurements devices $\{\cM_t^{\cA}\}_{1\le t \le k_{\cA}}$.  We  sample at each time a POVM uniformly at random from the set $\{\cM_t^{\cA}\}_{1\le t \le k_{\cA}}$ so we need slightly more copies $k_\cA\log (k_\cA/\delta_\cA)$ to span $\{\cM_t^{\cA}\}_{1\le t \le k_{\cA}}$. }

{Let $l\sim\unif\left\{k_\cA\log (k_\cA/\delta_\cA)+1, \dots, k_\cA\log (k_\cA/\delta_\cA)+\frac{N}{2}\right\}$. 
For each $i \in [l]$, we choose $r_i \in \unif\{1,\dots, k_{\cA}\}$ and we measure the system $A_i$ using the measurement $\cM^{\cA}_{r_i}$.}

{To compute the prediction, Algorithm $\cB$ considers the $k_\cA\log (k_\cA/\delta_\cA)$ outcomes $\bm{v}$ of measurements $\cM_{r_1}, \dots, \cM_{r_{k_\cA\log (k_\cA/\delta_\cA)}}$. Provided $r_1, \dots, r_{k_\cA\log (k_\cA/\delta_\cA)}$ span the set $\{1,\dots,k_{\cA}\}$, the prediction algorithm of $\cA$ is applied to the relevant systems (as described in Algorithm~\ref{alg-non-iid}).
The coupon collector's problem ensures that $r_1, \dots, r_{k_\cA\log (k_\cA/\delta_\cA)}$ spans all elements in $\{1, \dots, k_{\cA}\}$   with high probability.  }

\begin{algorithm}
	\caption{Predicting  properties of quantum states in the   non-i.i.d.\ setting - Non-adaptive algorithms}\label{alg-non-iid} 
	\begin{algorithmic}[t!]
		\Require  The  measurements $\{\cM_t^{\cA}\}_{1\le t \le k_{\cA}}$ of algorithm $\cA$. 
         \\ 	A permutation invariant state $\rho^{A_1 \cdots A_N}$. 
		\Ensure Adapt the algorithm $\cA$ to non-i.i.d.\  inputs $\rho^{A_1 \cdots A_N}$. 
  		\State Sample $l\sim \unif\{k_\cA\log (k_\cA/\delta_\cA)+1,\dots,k_\cA\log (k_\cA/\delta_\cA)+\frac{N}{2}\}$ and $(r_1, \dots, r_l)\overset{iid}{\sim} \unif\{1, \dots, k_{\cA}\} $ 
  \State For $t=k_\cA\log (k_\cA/\delta_\cA)+1, \dots, l$, apply $\cM_{r_t}^{\cA}$ to  system $A_{t}$  and obtain outcome
  $\bm{w} \leftarrow \bigotimes_{t=k+1}^l\cM_{r_t}^{\cA}(\rho)$
   \State For $t=1, \dots, k_\cA\log (k_\cA/\delta_\cA)$, apply $\cM_{r_t}^{\cA}$ to  system $A_{t}$ and obtain outcome
  $\bm{v} \leftarrow \bigotimes_{t=1}^k\cM_{r_t}^{\cA}(\rho_{\bm{w}})$
        \State For $t=1, \dots, k_{\cA}$, let $s(t)\in[k_\cA\log (k_\cA/\delta_\cA)]$ be the first integer such that $r_{s(t)} = t$  
		 \State Run the prediction of algorithm $\cA$ to the measurement outcomes $v_{s(1)}, \dots, v_{s(k_{\cA})}$ and obtain $p$
  \\
		\Return $ \left(l,  \bm{r, w}, p\right)$.
	\end{algorithmic}
 
\end{algorithm}

\begin{figure}[t!]
	\centering
	\includegraphics[width=0.6\columnwidth]{Alg1.png}
	\caption{Illustration of Algorithm~\ref{alg-non-iid}. Algorithm~\ref{alg-non-iid} measures a large number of the state's subsystems using $\cM^{\cA}_{\bm{r}}$ that represents measurement devices  uniformly chosen from the i.i.d.\ algorithm's set of measurements (red and green parts). 
    Then, Algorithm~\ref{alg-non-iid} applies the data processing of Algorithm $\mathcal{A}$ to the outcomes of  a part of these subsystems (green part), leading to a prediction $p$. Algorithm~\ref{alg-non-iid} returns the remaining outcomes as calibration $\bm{w}$.  Success occurs if $p$ is (approximately) compatible with the remaining  post-measurement test copy $\rho_{l, \bm{r,w},p}^{A_N}$.}
	\label{fig:illustration of general strategy}
\end{figure}

{We can support this algorithm with the following rigorous bound on the failure probability that only depends on problem-specific parameters, as well as the performance of an ideal i.i.d.\ learning algorithm.}

{
\begin{theorem}[Non-adaptive algorithms in the non-i.i.d.\ setting]\label{thm:gen to non iid - iid alg}
	Let $\eps>0$ and  $k_\cA\le N/\log(N)$. 
	Let $\cA$ be a non-adaptive algorithm suitable for i.i.d.\ input states  and performing  measurements  with $\{\cM_t^{\cA}\}_{1\le t \le k_{\cA}}$. 
 Algorithm~\ref{alg-non-iid} has an error probability satisfying:
    \begin{align*}
        \delta_\cB\left(N, \rho^{A_1 \cdots A_N},2\eps\right) \le  2\sup_{l,\bm{r,w} }\delta_\cA\left(k_{\cA}, \left(\rho_{l, \bm{r,w}}^{A_N}\right)^{\otimes k_{\cA}}, \eps\right)  +6 \sqrt{\frac{k^2_\cA\log^2 (k_\cA/\delta_\cA)\log(d)}{N\eps^2}}.
    \end{align*}
\end{theorem}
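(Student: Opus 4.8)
The plan is to compare the genuine (correlated) experiment run by Algorithm~\ref{alg-non-iid} against an idealized i.i.d.\ experiment and to control the discrepancy with the randomized local de Finetti theorem (Theorem~\ref{thm:gen-de finetti}). Since $\rho^{A_1\cdots A_N}$ is permutation invariant I may drop the auxiliary permutation and identify $j_i$ with $i$; write $K:=k_\cA\log(k_\cA/\delta_\cA)$ for the spanning length and, for fixed calibration $(l,\bm{r,w})$, abbreviate $\sigma:=\rho^{A_N}_{l,\bm{r,w}}$. I would package the prediction and the test copy into a single classical--quantum state on $\cP\otimes A_N$. The \emph{real} state $\omega_{\mathrm{real}}$ is produced by applying $\cA$'s chosen measurement channels $\cM^\cA_{r_2},\dots,\cM^\cA_{r_{K+1}}$ to the learning systems $A_2,\dots,A_{K+1}$ of the conditional state $\rho^{A_1\cdots A_{K+1}}_{l,\bm{r,w}}$, post-processing the selected outcomes into $p$ through $\cA$, and leaving $A_N=A_1$ untouched; by construction it encodes the law of $(p,\rho^{A_N}_{l,\bm{r,w},p})$. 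The \emph{ideal} state $\omega_{\mathrm{ideal}}$ applies the same post-processing to $\sigma^{\otimes(K+1)}$, so its test factor is $\sigma$ in tensor product with the learning factors, i.e.\ $\omega_{\mathrm{ideal}}=\big(\sum_p q_{\mathrm{ideal}}(p)\proj{p}\big)\otimes\sigma$, and---on the coupon-collector event that $r_1,\dots,r_K$ cover every index of $[k_\cA]$---the law of $p$ coincides with $\cA$ run on $\sigma^{\otimes k_\cA}$. As the channel producing $\omega$ is $\id_{A_N}\otimes\big(\bigotimes_{i=2}^{K+1}\cM^\cA_{r_i}\big)$ followed by classical data-processing, monotonicity of the trace norm bounds $\|\omega_{\mathrm{real}}-\omega_{\mathrm{ideal}}\|_1$ by exactly the quantity that Theorem~\ref{thm:gen-de finetti} (with $k\mapsto K$) controls in expectation.

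Next I would turn closeness of these cq-states into a failure-probability bound using the robustness assumption (Definition~\ref{def:success}). Let $B:=\{p:(p,\sigma)\notin\textup{SUCCESS}_\eps\}$ be the ideal-bad predictions, so that $\Pr_{\mathrm{ideal}}[p\in B]$ is exactly $\delta_{\mathrm{ideal}}(l,\bm{r,w})$, the i.i.d.\ error of $\cA$ at precision $\eps$. Robustness with $\eps'=\eps$ guarantees that if $p\notin B$ and $\|\rho^{A_N}_{l,\bm{r,w},p}-\sigma\|_1\le\eps$ then $(p,\rho^{A_N}_{l,\bm{r,w},p})\in\textup{SUCCESS}_{2\eps}$; consequently the real failure event at precision $2\eps$ forces either $p\in B$ or $\|\rho^{A_N}_{l,\bm{r,w},p}-\sigma\|_1>\eps$. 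The first probability is at most $\delta_{\mathrm{ideal}}(l,\bm{r,w})+\tfrac12\|\omega_{\mathrm{real}}-\omega_{\mathrm{ideal}}\|_1$ because the $\cP$-marginals differ by at most the joint trace distance; the second, by Markov's inequality together with the triangle-inequality estimate $\sum_p q_{\mathrm{real}}(p)\|\rho^{A_N}_{\cdots,p}-\sigma\|_1\le 2\|\omega_{\mathrm{real}}-\omega_{\mathrm{ideal}}\|_1$, is at most $\tfrac2\eps\|\omega_{\mathrm{real}}-\omega_{\mathrm{ideal}}\|_1$. Hence for each fixed $(l,\bm{r,w})$ the real failure probability is at most $\delta_{\mathrm{ideal}}(l,\bm{r,w})+(\tfrac12+\tfrac2\eps)\|\omega_{\mathrm{real}}-\omega_{\mathrm{ideal}}\|_1\le\delta_{\mathrm{ideal}}(l,\bm{r,w})+\tfrac3\eps\|\omega_{\mathrm{real}}-\omega_{\mathrm{ideal}}\|_1$ (absorbing constants for $\eps\le1$).

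Finally I would assemble the two contributions after averaging over $(l,\bm{r,w})$. For the first term I use $\delta_{\mathrm{ideal}}(l,\bm{r,w})\le\delta_\cA(k_\cA,\sigma^{\otimes k_\cA},\eps)+\Pr[\text{coupon collector fails}]$, and the choice $K=k_\cA\log(k_\cA/\delta_\cA)$ makes the coupon-collector failure probability at most $\delta_\cA:=\sup_{l,\bm{r,w}}\delta_\cA(k_\cA,(\rho^{A_N}_{l,\bm{r,w}})^{\otimes k_\cA},\eps)$, giving $\exs{l,\bm{r,w}}{\delta_{\mathrm{ideal}}}\le 2\sup_{l,\bm{r,w}}\delta_\cA(k_\cA,(\rho^{A_N}_{l,\bm{r,w}})^{\otimes k_\cA},\eps)$. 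For the second term, Theorem~\ref{thm:gen-de finetti} (whose weighted sum over $\bm{w}$ is precisely the expectation over the calibration outcomes) yields $\exs{l,\bm{r,w}}{\|\omega_{\mathrm{real}}-\omega_{\mathrm{ideal}}\|_1}\le 2K\sqrt{\log(d)/N}$. Combining, the de Finetti contribution is $\tfrac3\eps\cdot2K\sqrt{\log(d)/N}=6\sqrt{k_\cA^2\log^2(k_\cA/\delta_\cA)\log(d)/(N\eps^2)}$, which is the claimed second term.

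I expect the main obstacle to be the nonlinearity introduced by conditioning the test state on the prediction $p$: the failure probability is not a linear functional of the measured state, so the de Finetti trace-distance estimate cannot be applied directly. The resolution is the cq-state packaging of $(p,A_N)$ combined with the robustness assumption and the Markov step, which trades the uncontrolled conditional drift $\|\rho^{A_N}_{\cdots,p}-\sigma\|_1$ against the $\tfrac1\eps$ loss; keeping the two successive conditionings (first on the calibration $\bm{w}$, then on $p$) consistent is the delicate bookkeeping. A secondary point is verifying that the coupon-collector step genuinely reproduces $\cA$'s i.i.d.\ input law---i.e.\ that the selected outcomes $v_{s(1)},\dots,v_{s(k_\cA)}$ originate from distinct systems and are therefore independent copies measured by the intended POVMs.
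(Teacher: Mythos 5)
Your proposal is correct and takes essentially the same route as the paper's proof: the same robustness-based split of the failure event at precision $2\eps$ into an ``ideal-bad prediction'' part and a conditional-drift part $\|\rho^{A_N}_{l,\bm{r,w},p}-\rho^{A_N}_{l,\bm{r,w}}\|_1>\eps$, the same coupon-collector bound $k_\cA e^{-K/k_\cA}\le \delta_\cA$ with $K=k_\cA\log(k_\cA/\delta_\cA)$, the same Markov step, and the same invocation of Theorem~\ref{thm:gen-de finetti} with $k=K$, reproducing the constant $6$ exactly. Your classical--quantum packaging $\|\omega_{\mathrm{real}}-\omega_{\mathrm{ideal}}\|_1$ is only a compact reformulation of the two trace-norm estimates the paper derives separately (the bound on the law of $p$ and, via data processing and the triangle inequality, the estimate $\sum_p q_{\mathrm{real}}(p)\|\rho^{A_N}_{l,\bm{r,w},p}-\rho^{A_N}_{l,\bm{r,w}}\|_1\le 2\|\omega_{\mathrm{real}}-\omega_{\mathrm{ideal}}\|_1$, which is the content of Lemma~\ref{lemma:equ of past and future states}).
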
}
\begin{remark}
The first component of this upper bound essentially represents the error probability of algorithm $\cA$ when applied to an i.i.d.\ input state $\sigma^{\otimes k_{\cA}}$, where $\sigma \in \{\rho_{l,\bm{r,w}}^{A_N}\}_{l,\bm{r,w}}$. Note that here we are not required to control this error probability over \textit{all}  states but only over the post-measurement states $ \{\rho^{A_N}_{l,\bm{r,w}}\}_{l,\bm{r,w}}$.  
The second  component consists of an error term that accounts for the possibility of the input state $\rho^{A_1 \cdots A_N}$ being non i.i.d..  
\end{remark}
\begin{remark}
	To achieve an error probability of at most $\delta$, one could start by determining a value for $k_{\cA}=k(\cA, \delta, \eps)$ such that for all $l, \bm{r,w}$, $\delta_\cA\left(k_{\cA}, (\rho^{A_N}_{l,\bm{r,w}})^{\otimes k_{\cA}}, \eps/2\right)\le \delta/6$. Subsequently,  the total number of copies can be set to
 $$N_{\rm{non-iid}}= \frac{18^2\log(d)}{\delta^2\eps^2}\cdot k_{\cA}^2\log^2(6k_{\cA}/\delta).$$ 
{This choice of training data size ensures} we  ensure that the overall probability of failure obeys  $	\delta_\cB(N_{\rm{non-iid}}, \rho^{A_1 \cdots A_N}, \eps)\le \delta $, as desired.
\end{remark}
\begin{remark}
	 The  second error term of this upper bound can be improved to $ 6\sqrt{\frac{ k \sup_{\bm{r}}  \cI(\pi_{\bm{r}})}{N\eps^2}}$ through the same proof outlined in Theorem~\ref{thm:gen-de finetti} (see Inequality~\eqref{ineq:de finetti with mutual info}). When the state $\rho^{A_1 \cdots A_N}=\sigma^{\otimes N}$ is i.i.d., the mutual information $\cI(\pi_{\bm{r}})=\cI\left(A_1 X_2\cdots X_{k}:X_{k+1}\cdots X_{k+N/2}\right)_{\pi_{\bm{r}}}$ becomes zero for all local quantum channels $\Lambda_{\bm{r}}= \id \otimes\Lambda_{r_2}\otimes  \cdots \otimes \Lambda_{r_{k+N/2}}$.  Consequently, the second error term vanishes in the i.i.d.\ setting and we recover the i.i.d.\ error probability, albeit with a minor loss: substituting $\eps$ with $2\eps$ and $k_{\cA}$ with $k_{\cA}\log(2k_{\cA}/\delta)$.
\end{remark}
\begin{remark} 
	In Algorithm~\ref{alg-non-iid}, the initial stage of measuring systems $A_{k+1} \cdots A_l$ (corresponding to outcomes $\bm{w}$) can be thought as a projection phase, while the subsequent stage involving measuring the systems $A_{1} \cdots A_k$ (corresponding to outcomes $\bm{v}$) can be regarded as a learning phase. Note that we utilize only the outcomes $\bm{v}$ for the prediction component $p$; however, the outcomes $\bm{w}$ hold significance in enabling the application of the randomized de Finetti Theorem~\ref{thm:gen-de finetti}.
\end{remark}
\begin{remark}
    Algorithm~\ref{alg-non-iid} extends only non-adaptive incoherent algorithms to the non-i.i.d.\ setting as it applies the measurements of the i.i.d.\ algorithm chosen uniformly at random. Adaptive algorithms are shown to outperform their non-adaptive counterparts for some learning~\cite{chen2023does,flammia2023quantum} and testing~\cite{fawzi2023on} problems. We leave the question of extending adaptive incoherent algorithms for future work.
\end{remark}
The remaining of this section is dedicated to the proof of Theorem~\ref{thm:gen to non iid - iid alg}.
\begin{proof}[Proof of Theorem~\ref{thm:gen to non iid - iid alg}]
{In this proof we differentiate between $k_{\cA}$ and $k$}. The former is the copy complexity of the non-adaptive algorithm $\cA$ while  the latter is a parameter we use for the proof to ensure that all the measurement devices used by the non-adaptive algorithm $\cA$ are sampled. Let $l\sim \unif\{k+1,\dots, k+N/2\}$ and $\bm{r}=(r_1, \dots, r_l)\overset{iid}{\sim}\unif\{1, \dots, k_{\cA}\}$.  
Algorithm $\cB$ applies measurement  $\cM^{\cA}_{r_i}$ to system $A_i$ for all $i \in [l]$.

Our proof strategy will be to approximate the reduced post-measurement state $\rho_{l,\bm{r,w},p}^{A_N}$ by the reduced post-measurement state  $ \rho_{l,\bm{r,w}}^{A_N}$. Then, we approximate the state $\rho_{l,\bm{r,w}}^{A_1\cdots A_k}$ by the i.i.d.\ state $ \left(\rho_{l,\bm{r,w}}^{A_N}\right)^{\otimes k}$ by using the de Finetti Theorem~\ref{thm:gen-de finetti}.

More precisely, we write the error probability: 
\begin{align}
    \delta_\cB(N, \rho^{A_1 \cdots A_N}, \eps+\eps') &= \prs{l, \bm{r, w},p}{(p, \rho_{l,\bm{r, w},p}^{A_N})\notin  \textup{SUCCESS}_{\eps+\eps'} }  \notag
    \\&= \prs{l, \bm{r, w},p}{(p, \rho_{l,\bm{r, w},p}^{A_N})\notin  \textup{SUCCESS}_{\eps+\eps'}  \;,\; \left\|\rho_{l,\bm{r, w},p}^{A_N}- \rho_{l,\bm{r,w}}^{A_N}\right\|_1\le \eps'}  \notag
    \\&\quad+ \prs{l, \bm{r, w},p}{(p, \rho_{l,\bm{r, w},p}^{A_N})\notin  \textup{SUCCESS}_{\eps+\eps'}  \;, \; \left\|\rho_{l,\bm{r, w},p}^{A_N}- \rho_{l,\bm{r,w}}^{A_N}\right\|_1>\eps'}  \notag
   \\&\le   \prs{l, \bm{r, w},p}{(p, \rho_{l,\bm{r,w}}^{A_N})\notin  \textup{SUCCESS}_{\eps}}+\prs{l, \bm{r, w},p}{\left\|\rho_{l, \bm{r,w},p}^{A_N}- \rho_{l,\bm{r,w}}^{A_N}\right\|_1>\eps' }  \label{eq:ineq-theorem-non-adaptive-main}
\end{align}
where we used the robustness condition for the problem defined by $\text{SUCCESS}_{\eps}$. 

Let us start with the second term by relating the reduced post-measurement state $\rho_{l,\bm{r,w},p}^{A_N}$ with $\rho_{l,\bm{r,w}}^{A_N}$. Note that as $p$ is a function of $\bm{v}$, it suffices to bound the distance between $\rho_{l,\bm{r,w}}^{A_N}$ and $\rho_{l,\bm{r,w,v}}^{A_N}$, which is done in the following lemma.
    \begin{lemma}\label{lemma:equ of past and future states}
We have for all $\eps'>0$:
    \begin{align*}
    \prs{l,\bm{r}, \bm{w,v}}{\left\|\rho_{l,\bm{r,w,v}}^{A_N}- \rho_{l,\bm{r,w}}^{A_N}\right\|_1>\eps'}\le \sqrt{\frac{16 {k^2}\log(d)} {N\eps'^2}}.
\end{align*}
\end{lemma}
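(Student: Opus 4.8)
The plan is to average over the learning-phase outcomes $\bm{v}$ and to recognize the resulting quantity as the trace distance between a classical--quantum state and a product state, which is precisely what the randomized local de Finetti theorem (Theorem~\ref{thm:gen-de finetti}) controls. Fix $l,\bm{r},\bm{w}$ and set $\tau^{A_{j_1}\bm{V}}_{l,\bm{r,w}} = \id\otimes\bigotimes_{i=2}^{k+1}\Lambda_{r_i}\big(\rho_{l,\bm{r,w}}^{A_{j_1}\cdots A_{j_{k+1}}}\big)$, the cq-state obtained by keeping the test system $A_{j_1}$ quantum while measuring the $k$ learning systems $A_{j_2},\dots,A_{j_{k+1}}$ to produce $\bm{v}$. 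Tracing out the register $\bm{V}$ returns the state of the test system conditioned only on $\bm{w}$, which by permutation invariance equals $\rho_{l,\bm{r,w}}^{A_N}$; conditioning further on $\bm{v}$ gives $\rho_{l,\bm{r,w,v}}^{A_N}$. Since $\tau^{A_{j_1}\bm{V}}_{l,\bm{r,w}}$ is block diagonal in the classical register, its trace distance to $\rho_{l,\bm{r,w}}^{A_N}\otimes\tau^{\bm{V}}_{l,\bm{r,w}}$ decomposes exactly as $\sum_{\bm{v}}p(\bm{v}\mid\bm{w})\big\|\rho_{l,\bm{r,w,v}}^{A_N}-\rho_{l,\bm{r,w}}^{A_N}\big\|_1$, where $\tau^{\bm{V}}_{l,\bm{r,w}}$ is the \emph{true} marginal law of $\bm{v}$.

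The single subtlety is that Theorem~\ref{thm:gen-de finetti} does not bound the distance to $\rho_{l,\bm{r,w}}^{A_N}\otimes\tau^{\bm{V}}_{l,\bm{r,w}}$, but rather to the fully factorized state $\gamma := \rho_{l,\bm{r,w}}^{A_N}\otimes\mu^{\bm{V}}_{l,\bm{r,w}}$ in which the classical marginal $\mu^{\bm{V}}_{l,\bm{r,w}}=\bigotimes_{i=2}^{k+1}\Lambda_{r_i}(\rho_{l,\bm{r,w}}^{A_{j_i}})$ is the product of the local marginals, which in general differs from $\tau^{\bm{V}}_{l,\bm{r,w}}$. I would bridge this gap with a triangle inequality: writing $\alpha=\tau^{A_{j_1}\bm{V}}_{l,\bm{r,w}}$ and $\beta=\rho_{l,\bm{r,w}}^{A_N}\otimes\tau^{\bm{V}}_{l,\bm{r,w}}$, one has $\|\alpha-\beta\|_1\le\|\alpha-\gamma\|_1+\|\gamma-\beta\|_1$, and since $\|\gamma-\beta\|_1=\|\mu^{\bm{V}}_{l,\bm{r,w}}-\tau^{\bm{V}}_{l,\bm{r,w}}\|_1=\big\|\ptr{A_{j_1}}{\gamma-\alpha}\big\|_1\le\|\alpha-\gamma\|_1$ by monotonicity of the trace norm under the partial trace, we obtain $\|\alpha-\beta\|_1\le 2\|\alpha-\gamma\|_1$. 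This factor of two is exactly what turns the constant $4$ of the de Finetti bound into the $16$ in the statement.

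Now $\|\alpha-\gamma\|_1$ is precisely the integrand of Theorem~\ref{thm:gen-de finetti}, which reads $\exs{l,\bm{r}}{\sum_{\bm{w}}p(\bm{w})\|\alpha-\gamma\|_1}\le\sqrt{4k^2\log(d)/N}$; combined with $\|\alpha-\beta\|_1\le 2\|\alpha-\gamma\|_1$ this gives $\exs{l,\bm{r},\bm{w},\bm{v}}{\big\|\rho_{l,\bm{r,w,v}}^{A_N}-\rho_{l,\bm{r,w}}^{A_N}\big\|_1}\le\sqrt{16k^2\log(d)/N}$, where I used $p(\bm{w})\,p(\bm{v}\mid\bm{w})=p(\bm{w},\bm{v})$ together with the block-diagonal decomposition above. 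Markov's inequality then converts this first-moment bound into the claimed tail bound $\prs{l,\bm{r},\bm{w,v}}{\big\|\rho_{l,\bm{r,w,v}}^{A_N}-\rho_{l,\bm{r,w}}^{A_N}\big\|_1>\eps'}\le\frac{1}{\eps'}\sqrt{\frac{16k^2\log(d)}{N}}=\sqrt{\frac{16k^2\log(d)}{N\eps'^2}}$.

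I expect the only genuine obstacle to be the bookkeeping in the first step: correctly identifying the product state that the de Finetti theorem actually produces (a product with the product-of-local-marginals on $\bm{v}$, not the true joint marginal), and confirming via permutation invariance that the marginal of the kept system is $\rho_{l,\bm{r,w}}^{A_N}$ so that the cq trace-norm decomposition applies verbatim. Once those identifications are pinned down, the remaining ingredients---the cq-state trace-norm decomposition, partial-trace monotonicity, and Markov's inequality---are routine.
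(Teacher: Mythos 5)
Your proposal is correct and follows essentially the same route as the paper's proof of Lemma~\ref{lemma:equ of past and future states}: the paper likewise decomposes the classical--quantum trace norm over outcomes $\bm{v}$, uses a triangle inequality to insert the i.i.d.\ product $\left(\rho_{l,\bm{r,w}}^{A_N}\right)^{\otimes k+1}$, and controls the mismatch between the true law of $\bm{v}$ and the i.i.d.\ law by tracing out $A_N$ (your partial-trace monotonicity step is exactly the paper's data-processing step), yielding the same factor of $2$ and hence the constant $16$, before applying Theorem~\ref{thm:gen-de finetti} and Markov's inequality. Your $\alpha,\beta,\gamma$ formulation is a cleaner abstract repackaging of the paper's explicit operator computation, but the argument is identical in substance.
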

\begin{proof}[Proof of Lemma~\ref{lemma:equ of past and future states}]
We use the notation $M_{\bm{w}} = \bigotimes_{t=k+1}^l M^t_{w_t} $ and  $M_{\bm{v}}= \bigotimes_{t=1}^{k} M^t_{v_t} $ where  
$\cM_{r_t}^{\cA}=\{M_x^t\}_{x\in \cX} $ for   $t\in [N]$. We have:
 \begin{align*}
&\left\| \cM_{r_1}\otimes \dots \otimes \cM_{r_k}\otimes \id \left(\rho_{l,\bm{r,w}}^{A_1\cdots A_{k} A_N} -\left(\rho_{l,\bm{r,w}}^{A_N}\right)^{\otimes {k+1}}\right)    \right\|_1 
\\&= \sum_{\bm{v}} \left\|\ptr{A_1\cdots A_{k}}{(M_{\bm{v}}\otimes \dI) \left(\rho_{l,\bm{r,w}}^{A_1\cdots A_{k} A_N} -\left(\rho_{l,\bm{r,w}}^{A_N}\right)^{\otimes {k+1}}\right) } \right\|_1
\\&=\sum_{\bm{v}} \left\| \ptr{A_1\cdots A_{k}}{(M_{\bm{v}}\otimes \dI)\rho_{l,\bm{r,w}}^{A_1\cdots A_{k} A_N}} -\tr{M_{\bm{v}}\left(\rho_{l,\bm{r,w}}^{A_N}\right)^{\otimes {k}}} \rho_{l,\bm{r,w}}^{A_N} \right\|_1
    \end{align*}
and similarly by the data processing inequality we have 
\begin{align*}
&\left\| \cM_{r_1}\otimes \dots \otimes \cM_{r_k}\otimes \id \left(\rho_{l,\bm{r,w}}^{A_1\cdots A_{k} A_N} -\left(\rho_{l,\bm{r,w}}^{A_N}\right)^{\otimes {k+1}}\right)    \right\|_1 
\\&\ge\left\| \cM_{r_1}\otimes \cdots \otimes \cM_{r_k}\left(\rho_{l,\bm{r,w}}^{A_1 \cdots A_k}  -\left(\rho_{l,\bm{r,w}}^{A_N}\right)^{\otimes {k}}\right) \right\|_1
\\&=\sum_{\bm{v}}  \left|  \tr{M_{\bm{v}} \left( \left(\rho_{l,\bm{r,w}}^{A_N}\right)^{\otimes {k}}-\rho_{l,\bm{r,w}}^{A_1 \cdots A_k}\right) } \right|
	\\&=\sum_{\bm{v}}  \left\|  \tr{M_{\bm{v}}\left(\rho_{l,\bm{r,w}}^{A_N}\right)^{\otimes {k}}} \rho_{l,\bm{r,w}}^{A_N} - \tr{(M_{\bm{v}}\otimes \dI)\rho_{l,\bm{r,w}}}\rho_{l,\bm{r,w}}^{A_N}\right\|_1 \notag.
\end{align*}
So the triangle inequality implies:
\begin{align*}
    & \exs{l, \bm{r}}{  \sum_{\bm{v,w}} \tr{(M_{\bm{v}}\otimes M_{\bm{w}}\otimes \dI)\rho}  \left\| \rho_{l,\bm{r,w,v}}^{A_N} - \rho_{l,\bm{r,w}}^{A_N} \right\|_1 }
    \\&= \exs{l, \bm{r}}{  \sum_{\bm{v,w}}  \tr{(M_{\bm{w}}\otimes \dI)\rho} \tr{(M_{\bm{v}}\otimes \dI)\rho_{l,\bm{r,w}}}  \left\| \rho_{l,\bm{r,w,v}}^{A_N} - \rho_{l,\bm{r,w}}^{A_N} \right\|_1}
    \\&= \exs{l, \bm{r}}{  \sum_{\bm{v,w}} \tr{(M_{\bm{w}}\otimes \dI)\rho}  \left\| \ptr{A_1 \cdots A_k}{(M_{\bm{v}}\otimes \dI)\rho_{l,\bm{r,w}}^{A_1\cdots A_{k} A_N}} -\tr{(M_{\bm{v}}\otimes \dI)\rho_{l,\bm{r,w}}}\rho_{l,\bm{r,w}}^{A_N} \right\|_1}
    \\&\le \exs{l, \bm{r}}{  \sum_{\bm{v,w}} \tr{(M_{\bm{w}}\otimes \dI)\rho} \left\| \ptr{A_1\cdots A_{k}}{(M_{\bm{v}}\otimes \dI)\rho_{l,\bm{r,w}}^{A_1\cdots A_{k} A_N}} -\tr{M_{\bm{v}}\left(\rho_{l,\bm{r,w}}^{A_N}\right)^{\otimes {k}}} \rho_{l,\bm{r,w}}^{A_N} \right\|_1}
    \\&\quad+ \exs{l, \bm{r}}{  \sum_{\bm{v,w}} \tr{(M_{\bm{w}}\otimes \dI)\rho} \left\|  \tr{M_{\bm{v}}\left(\rho_{l,\bm{r,w}}^{A_N}\right)^{\otimes {k}}} \rho_{l,\bm{r,w}}^{A_N} - \tr{(M_{\bm{v}}\otimes \dI)\rho_{l,\bm{r,w}}}\rho_{l,\bm{r,w}}^{A_N}\right\|_1}
    \\&\le 2\,\exs{l, \bm{r}}{\sum_{\bm{w}} \tr{(M_{\bm{w}}\otimes \dI)\rho} \left\| \cM_{r_1}\otimes \dots \otimes \cM_{r_k}\otimes \id \left(\rho_{l,\bm{r,w}}^{A_1\cdots A_{k} A_N} -\left(\rho_{l,\bm{r,w}}^{A_N}\right)^{\otimes {k+1}}\right) \right\|_1}.
\end{align*}
On the other hand, we have by the randomized de Finetti Theorem~\ref{thm:gen-de finetti}: 
\begin{align*}
	\exs{l, \bm{r}}{
		\sum_{\bm{w}} \tr{(M_{\bm{w}}\otimes \dI)\rho} \left\|\cM_{r_1}\otimes \dots \otimes \cM_{r_k}\otimes \id \left(\rho_{l,\bm{r,w}}^{A_1\cdots A_{k} A_N} -\left(\rho_{l,\bm{r,w}}^{A_N}\right)^{\otimes {k+1}}\right)\right\|_1}\le \sqrt{\frac{4 {k^2}\log(d)}{N}}.
\end{align*}
Hence we can deduce the following inequality:
\begin{align*}
    \exs{l, \bm{r, w},p}{\|\rho_{l,\bm{r,w,v}}^{A_N}- \rho_{l,\bm{r,w}}^{A_N}\|_1}
    \le 2\sqrt{\frac{4 {k^2}\log(d)} {N}}.
    \end{align*}
Finally, the Markov's inequality implies:
\begin{align*}
    \prs{l, \bm{r, w,v}}{\|\rho_{l,\bm{r,w,v}}^{A_N}- \rho_{l,\bm{r,w}}^{A_N}\|_1>\eps'}\le \frac{  \exs{l, \bm{r,w,v}}{\|\rho_{l,\bm{r,w,v}}^{A_N}- \rho_{l,\bm{r,w}}^{A_N}\|_1} }{\eps'}\le \sqrt{\frac{16 {k^2}\log(d)} {N\eps'^2}}.
\end{align*}
\end{proof}
We now go back to~\eqref{eq:ineq-theorem-non-adaptive-main} and consider the first term. Let us denote $\cM_{\bm{r}} = \otimes_{i=1}^l \cM_{r_i}$ and $\cD$ for the channel mapping the outcomes $\bm{v}$ and outputting a prediction $p$ (as described in Algorithm~\ref{alg-non-iid}). We have
\begin{align*}
   &\prs{l, \bm{r, w}, p}{(p, \rho_{l,\bm{r,w}}^{A_N})\notin  \textup{SUCCESS}_{\eps}} 
   \\&= \exs{l, \bm{r}, \bm{w}}{\prs{p \sim \cD(\cM_{\bm{r}}(\rho^{A_1 \dots A_k}_{l,\bm{r,w}}))}{ \left(p, \rho_{l,\bm{r,w}}^{A_N}\right)\notin  \textup{SUCCESS}_{\eps} }}
     \\&\leq \exs{l, \bm{r}, \bm{w}}{\prs{p \sim \cD(\cM_{\bm{r}}((\rho^{A_N}_{l,\bm{r,w}})^{\otimes k}))}{ \left(p, \rho_{l,\bm{r,w}}^{A_N}\right)\notin  \textup{SUCCESS}_{\eps} }} + \sqrt{\frac{4 {k^2}\log(d)} {N}}. 
\end{align*}
using the randomized de Finetti Theorem~\ref{thm:gen-de finetti}. To relate
\begin{align*}
\exs{l, \bm{r}, \bm{w}}{\prs{p \sim \cD(\cM_{\bm{r}}((\rho^{A_N}_{l,\bm{r,w}})^{\otimes k}))}{ \left(p, \rho_{l,\bm{r,w}}^{A_N}\right)\notin  \textup{SUCCESS}_{\eps} }}
\end{align*}
to the behaviour of algorithm $\cA$, we introduce the event that  all the measurement devices that algorithm $\cA$ needs are sampled before  $k$:
\begin{align*}
    \cG= \left\{[k_{\cA}] \subset \{r_t\}_{1\le t\le k}\right\}.
\end{align*}
The union bound implies:
\begin{align*}
    \pr{\cG^c} &= \pr{ \exists 1\le s\le k_{\cA} : s \notin  \{r_t\}_{1\le t\le k}}
    \\&\le \sum_{s=1}^{k_{\cA}} \pr{ \forall 1\le t\le k :  r_t\neq  s}
    = k_{\cA}\left(1-\frac{1}{k_{\cA}}\right)^k\le k_\cA e^{-k/k_{\cA}}.
\end{align*}
Under $\cG$ we let $s(t) \in [k]$ be the smallest integer   such that $r_{s(t)}= t$ for $t=1,\dots, k_{\cA}$. Then
\begin{align*}
&\exs{l, \bm{r}, \bm{w}}{\prs{p \sim \cD(\cM_{\bm{r}}((\rho^{A_N}_{l,\bm{r,w}})^{\otimes k}))}{ \left(p, \rho_{l,\bm{r,w}}^{A_N}\right)\notin  \textup{SUCCESS}_{\eps} }}
\\&\leq \exs{l, \bm{r}, \bm{w}}{\prs{p \sim \cD(\cM_{\bm{r}}((\rho^{A_N}_{l,\bm{r,w}})^{\otimes k}))}{ \left(p, \rho_{l,\bm{r,w}}^{A_N}\right)\notin  \textup{SUCCESS}_{\eps} } \bid\{\cG\}}+ \pr{\cG^c}
\\&\leq \exs{l, \bm{r}, \bm{w}}{\prs{p \sim \cA\left((\rho^{A_N}_{l,\bm{r,w}})^{\otimes k_{\cA}}\right)}{ \left(p, \rho_{l,\bm{r,w}}^{A_N}\right)\notin  \textup{SUCCESS}_{\eps} } }+ k_{\cA} e^{-k/k_{\cA}}
\\&\leq \sup_{l,\bm{r,w} }\delta_\cA\left(k_{\cA}, \left(\rho_{l, \bm{r,w}}^{A_N}\right)^{\otimes k_{\cA}}, \eps\right)+ k_{\cA} e^{-k/k_{\cA}}.
\end{align*}
{
Choosing $k=k_\cA\log (k_\cA/\delta_\cA)$,  $\eps'= \eps$ and bounding $\sqrt{\frac{ {k^2}\log(d)} {N}}\le \sqrt{\frac{ {k^2}\log(d)} {N\eps'^2}}$ we obtain the desired bound on the error probability. }

\end{proof}

\subsection{Applications}\label{sec:Applications}
In this section, we apply the non i.i.d.\  framework that we have developed in Section~\ref{sec: iid meas} to address specific and concrete examples. These examples include classical shadows for shadow tomography, the verification of pure states, fidelity estimation, state tomography, and  testing  mixedness of states.  

\subsubsection{Classical shadows for shadow tomography}\label{app:cl-shadows}
In the shadow tomography problem,  we have $M\ge 1$ known observables denoted as $O_1, \dots, O_M$, with each observable satisfying $0 \mle O_i \mle \dI$, along with $N$ {i.i.d.} copies of an unknown quantum state $\sigma$. 
{The task is now to $\eps$-approximate \emph{all} $M$ observable values $\mathrm{tr} \left(O \sigma \right)$ with success probability (at least) $1-\delta$}.
In their work, \cite{huang2020predicting} have introduced two specific protocols known as classical shadows, which employ (global) Clifford and Pauli (or local Clifford) measurements to tackle this problem. In their analysis, the authors crucially rely on the assumption of input states being i.i.d., which is essential for the successful application (concentration) of the median of means technique (estimator). Given that both algorithms proposed by \cite{huang2020predicting} are non-adaptive (as defined in Definition~\ref{def:non-adaptive-alg}), we can leverage  Theorem~\ref{thm:gen to non iid - iid alg} to extend the applicability of these algorithms to encompass input states that are not i.i.d..

The initial algorithm employs measurements that follow either the Haar or Clifford distributions. The Haar probability measure stands as the unique invariant probability measure over the unitary (compact) group and is denoted $\Haar$.  For the Clifford distribution, certain definitions need to be introduced. We consider an $n$-qubit quantum system denoted as $A \cong \bC^d$ where $d=2^n$. First define the  set of Pauli matrices as follows:
\[\bP_n=\left\{ e^{\mathbf{i}\theta \pi/2 }\sigma_1\otimes \dots \otimes \sigma_n \;\big|\; \theta = 0,1,2,3 , \; \sigma_i \in \{\dI, X,Y,Z\} \right\}. \]
Subsequently, the Clifford group is defined as the centralizer of the aforementioned set of Pauli matrices:
\begin{align*}
	\cl(2^n)= \{U\in \bU_d: U\bP_n U^\dagger= \bP_n \}.
\end{align*}
It is known~\cite{gottesman1998theory,ozols2008clifford} that the  Clifford group  is generated by the Hadamard $(H)$, phase $(S)$ and CNOT gates:
\[H=\frac{1}{\sqrt{2}}\begin{pmatrix}
	1& 1  \\ 
	1& -1  
\end{pmatrix},\; S= \begin{pmatrix}
	1& 0  \\ 
	0& \mathbf{i}  
\end{pmatrix} \;\;\text{and}\;\; \rm{CNOT}= \begin{pmatrix}
	1&0&0&0\\ 
	0&1&0&0\\ 
	0&0&0&1\\
	0&0&1&0
\end{pmatrix}. \]
Moreover, the  Clifford group is finite (of order at most $\exp\left(\cO\left(n^2\right)\right) $)~\cite{ozols2008clifford}. Sampling a Clifford unitary matrix is given by selecting an element uniformly and randomly from the Clifford group  $\cl(2^n)$. We denote this distribution by $\cllaw$. Importantly, Clifford distribution is a $3$-design~\cite{webb2015clifford,kueng2016low,zhu2017multiqubit}, that is for all $s=0,1,2,3$:
\begin{align*}
	\exs{U\sim \cllaw}{U^{\otimes {s}} \otimes \Bar{U}^{\otimes {s}}}=\exs{U\sim \Haar}{U^{\otimes {s}} \otimes \Bar{U}^{\otimes {s}}}.
\end{align*}

This property of the Clifford distribution has a significant implication: unitaries distributed according to $\cllaw$ or $\Haar$ distributions yield identical performance for the classical shadows~\cite{huang2020predicting}. Now we can state the first  result of~\cite{huang2020predicting}:
\begin{theorem}[\cite{huang2020predicting}, rephrased]
		Let $\{ O_i \}_{i\in [M]}$ be $M$ observables. There is an algorithm for predicting the expected values of the observables  $\{ O_i \}_{i\in [M]}$ under the state $\sigma$ to within $\eps$ with an error probability $\delta$. This algorithm performs i.i.d.\ measurements following the distribution $\cllaw$ (or $\Haar$), and it requires a total of i.i.d.\ copies of the state $\sigma$ satisfying: 
	\begin{align*}
		N=\cO\left( \frac{ \max_{i\in [M]} \tr{O_i^2} \log(M/\delta)}{\eps^2  }\right).
	\end{align*}
\end{theorem}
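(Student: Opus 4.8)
The plan is to reconstruct the classical shadows protocol of~\cite{huang2020predicting} and analyze it in four movements: invert the measurement channel to obtain an unbiased single-shot estimator, control its variance via the $3$-design property, boost the confidence by median-of-means, and finally union bound over the $M$ observables.

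First I would build the estimator. For a unitary $U \sim \cllaw$ and a computational-basis outcome $b$, the associated measurement map is $\cM(\sigma) = \exs{U \sim \cllaw}{\sum_b \langle b| U \sigma U^\dagger |b\rangle\, U^\dagger \proj{b} U}$. This is a \emph{second}-moment object in $U$, so since $\cllaw$ is in particular a $2$-design I can evaluate it as if $U$ were Haar-distributed, which gives the depolarizing channel $\cM(\sigma) = \tfrac{1}{d+1}\big(\sigma + \tr{\sigma}\dI\big)$ with inverse $\cM^{-1}(X) = (d+1)X - \tr{X}\dI$. The single-shot shadow $\hat\rho = (d+1)\,U^\dagger \proj{b} U - \dI$ then obeys $\ex{\hat\rho} = \sigma$, so each $\hat o_i = \tr{O_i \hat\rho}$ is an unbiased estimator of $\tr{O_i \sigma}$. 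Note also that $\tr{\hat\rho} = (d+1) - d = 1$ deterministically, a fact I will reuse below.

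The main obstacle is the variance bound. Because $\tr{\hat\rho}=1$ almost surely, the identity component of each $O_i$ contributes a deterministic shift and hence \emph{zero} variance; it therefore suffices to control $\var[\tr{(O_i)_0\,\hat\rho}]$ for the traceless part $(O_i)_0 = O_i - \tfrac{\tr{O_i}}{d}\dI$. Expanding $\ex{\hat o_i^2}$ produces a \emph{third}-moment average in $U$, and here I would invoke the $3$-design property of $\cllaw$ stated above to replace the Clifford average by the Haar average, then evaluate it with the standard Weingarten / symmetric-subspace contraction formulas. The delicate point — the true crux of the calculation — is tracking these contractions so that the $d$-dependent contributions cancel, leaving the clean bound $\var[\hat o_i] \le 3\,\tr{(O_i)_0^2} \le 3\,\tr{O_i^2}$.

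Finally I would apply median-of-means, which is essential here precisely because the single-shot estimator $\hat o_i$ can be of order $d$ and is therefore \emph{not} bounded, so a direct Hoeffding/Chernoff argument on the empirical mean is unavailable; median-of-means needs only the variance bound just established. Partitioning the $N$ copies into $K$ batches of equal size, forming the empirical mean in each batch and taking the median of the $K$ batch-means, Chebyshev's inequality inside a batch together with a Chernoff bound on the median gives, for a fixed observable, accuracy $\eps$ with failure probability $e^{-\Omega(K)}$, provided each batch has size $\Omega(\var[\hat o_i]/\eps^2)$. Choosing $K = \Theta(\log(M/\delta))$ and taking a union bound over the $M$ observables yields overall success probability at least $1-\delta$. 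Combining this with the variance bound, the total copy count is $N = K \cdot \Omega(\max_i \tr{O_i^2}/\eps^2) = \cO\!\big(\max_{i \in [M]} \tr{O_i^2}\,\log(M/\delta)/\eps^2\big)$, as claimed.
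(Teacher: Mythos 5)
The paper does not actually prove this statement: it is imported verbatim (``rephrased'') from \cite{huang2020predicting}, so there is no in-paper proof to compare against. Your reconstruction is correct and follows exactly the original argument of that reference --- channel inversion of the Clifford measurement map via the $2$-design property, the dimension-free variance bound $3\,\tr{O_0^2}$ from the $3$-design/Weingarten computation applied to the traceless part (the reduction to the traceless part via $\tr{\hat\rho}=1$ is the right observation), median-of-means to convert the variance bound into exponential concentration for the unbounded single-shot estimator, and a union bound over the $M$ observables with $K=\Theta(\log(M/\delta))$ batches.
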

Hence by Theorem~\ref{thm:gen to non iid - iid alg} there is an algorithm $\cB$ in the non-i.i.d.\ setting with an error probability: 
 \begin{align*}
        \delta_\cB\left(N, \rho^{A_1 \cdots A_N},2\eps\right) \le  2\sup_{\sigma \,:\; \text{state}}\delta_\cA\left(k_{\cA}, \sigma^{\otimes k_{\cA}}, \eps\right)  +6 \sqrt{\frac{k^2_\cA\log^2 (k_\cA/\delta_\cA)\log(d)}{N\eps^2}}.
    \end{align*}
By taking $k_{\cA}=\cO\left( \frac{ \max_{i\in [M]} \tr{O_i^2} \log(M/\delta)}{\eps^2  }\right) $ as the complexity of classical shadows in the i.i.d.\ setting, we deduce that  a total number of copies sufficient to achieve $\delta$-correctness in the non-i.i.d.\ setting is given by:
\begin{align*}
	N= \cO\left(\frac{k_{\cA}^2\log^2(k_{\cA}/\delta)\log(d)}{\delta^2 \eps^2}\right)=\cO\left( \frac{ \|O\|^2 \log^2(M/\delta)\log^2(\|O\|\log(M/\delta)/\eps\delta)\log(d)}{\delta^2\eps^6  }\right)
\end{align*}
where $\|O\|=\max_{i\in [M]} \tr{O_i^2}$.
\begin{proposition}[Classical shadows in the non-i.i.d.\ setting--Clifford]
\label{prop:non iid classical shadows}
	Let $\{ O_i \}_{i\in [M]}$ be $M$ observables. There is an algorithm in the non-i.i.d.\ setting for predicting the expected values of the observables  $\{ O_i \}_{i\in [M]}$ under the post-measurement state to within $\eps$ with a copy complexity 
	\begin{align*}
	N=\cO\left( \frac{ \max_{i\in [M]} \tr{O_i^2}^2 \log^2(M/\delta)\log^2\left(\max_{i\in [M]} \tr{O_i^2}\log(M/\delta)/\eps\delta\right)\log(d)}{\delta^2\eps^6  }\right).
\end{align*}
\end{proposition}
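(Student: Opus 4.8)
The plan is to instantiate the general reduction of Theorem~\ref{thm:gen to non iid - iid alg} with $\cA$ taken to be the (global) Clifford classical shadows estimator of~\cite{huang2020predicting}, so that almost all of the work is already contained in the two preceding displays and what remains is to package the parameter choices into a clean copy-complexity statement. First I would record that this i.i.d.\ estimator is non-adaptive and incoherent in the sense of Definition~\ref{def:non-adaptive-alg}: each copy is measured independently in a random Clifford basis $U \sim \cllaw$ (a fixed single-copy measurement ensemble fixed before seeing any data), and the prediction is produced by a purely classical post-processing step, namely the median-of-means estimator applied to the classical snapshots together with the known observables $O_1,\dots,O_M$. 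This is exactly the hypothesis needed to invoke Theorem~\ref{thm:gen to non iid - iid alg}.

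Next I would feed in the i.i.d.\ guarantee. The rephrased Huang--Kueng--Preskill theorem supplies, for every precision $\eps$ and confidence $\delta_\cA$, a copy count $k_\cA = \cO\!\big(\max_{i}\tr{O_i^2}\,\log(M/\delta_\cA)/\eps^2\big)$ such that the estimator $\eps$-approximates all $M$ observables with failure probability at most $\delta_\cA$ \emph{for every} input state $\sigma$. The worst-case, uniform-over-states nature of this guarantee is the crucial point: the first term of Theorem~\ref{thm:gen to non iid - iid alg} is $2\sup_{l,\bm{r,w}}\delta_\cA\!\big(k_\cA,(\rho^{A_N}_{l,\bm{r,w}})^{\otimes k_\cA},\eps\big)$, a supremum over the unknown and data-correlated post-measurement test states; because the i.i.d.\ bound holds for all $\sigma$, I can bound this supremum by the single quantity $2\delta_\cA$ without knowing anything about the $\rho^{A_N}_{l,\bm{r,w}}$.

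It then remains to balance the two error contributions. I would set $\delta_\cA = \delta/4$ so that the first term is at most $\delta/2$, which fixes $k_\cA = \cO\!\big(\max_i\tr{O_i^2}\,\log(M/\delta)/\eps^2\big)$ (since $\log^2(4k_\cA/\delta)=\cO(\log^2(k_\cA/\delta))$), and then choose $N$ large enough that the de Finetti term satisfies $6\sqrt{k_\cA^2\log^2(k_\cA/\delta)\log(d)/(N\eps^2)} \le \delta/2$. Solving this inequality for $N$ yields $N = \cO\!\big(k_\cA^2\log^2(k_\cA/\delta)\log(d)/(\delta^2\eps^2)\big)$, and substituting the value of $k_\cA$ (writing $\|O\| = \max_i\tr{O_i^2}$) produces the advertised bound, up to the harmless factor-of-two reparametrization of $\eps$ absorbed into the $\cO$. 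A final bookkeeping check is that the hypothesis $k_\cA \le N/\log N$ of Theorem~\ref{thm:gen to non iid - iid alg} holds for this choice, which is comfortable since $N$ grows quadratically in $k_\cA$.

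The main obstacle is conceptual rather than computational: one must be sure that applying the i.i.d.\ estimator is legitimate even though the test states $\rho^{A_N}_{l,\bm{r,w}}$ are random and correlated with the training data. This is precisely what the supremum over states in Theorem~\ref{thm:gen to non iid - iid alg} buys us, and it is why the universal success guarantee of classical shadows is essential here: no distributional assumption on the post-measurement states is required, so the genuine content of the proof is just the verification of the non-adaptivity hypothesis together with the (routine) algebraic substitution converting the two-term bound into a single copy-complexity figure.
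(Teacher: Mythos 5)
Your proposal is correct and follows essentially the same route as the paper's own proof: both instantiate Theorem~\ref{thm:gen to non iid - iid alg} with the Clifford classical-shadows estimator of~\cite{huang2020predicting}, bound the supremum over post-measurement test states by the uniform-over-all-states i.i.d.\ guarantee with $k_\cA = \cO\left(\max_{i}\tr{O_i^2}\log(M/\delta)/\eps^2\right)$, and solve the de Finetti error term for $N$ to obtain the stated complexity. Your explicit verification of the hypothesis $k_\cA \le N/\log N$ is a harmless bookkeeping step the paper leaves implicit.
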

The algorithm is described in Algorithm~\ref{alg-non-iid}, where the non-adaptive algorithm/statistic $\cA$ is the classical shadows algorithm of \cite{huang2020predicting} and the distribution of measurements is $\cllaw$ (or $\Haar$).

The second protocol introduced by~\cite{huang2020predicting} involves the use of Pauli measurements. This is given by measuring using an orthonormal basis that corresponds to a non-identity Pauli matrix. 
On the level of the unitary matrix, we can generate this sample by taking $U=u_1\otimes \dots \otimes u_N$ where $u_1, \dots , u_n \overset{\text{i.i.d.\ }}{\sim}\unif(\cl(2))$. 
We denote this distribution by $\plaw$. The classical shadows with Pauli measurement have better performance for estimation expectations of  local observables.

\begin{theorem}[\cite{huang2020predicting}, rephrased]
	Let $\{ O_i \}_{i\in [M]}$ be $M$ $k$-local  observables. There is an algorithm for predicting the expected values of the observables  $\{ O_i \}_{i\in [M]}$ under the state $\sigma$ to within $\eps$ with an error probability $\delta$. This algorithm performs i.i.d.\ measurements following the distribution $\plaw$, and it requires a total of i.i.d.\ copies of the state $\sigma$ satisfying: 
	\begin{align*}
		N=\cO\left( \frac{ 2^{2k} \max_{i\in [M]} \|O_i\|^2_\infty \log(M/\delta)}{\eps^2  }\right).
	\end{align*}
\end{theorem}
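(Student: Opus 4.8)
The plan is to recall the single-copy classical shadows estimator attached to the Pauli distribution $\plaw$, control its variance on $k$-local observables, and then boost the success probability by a standard median-of-means argument together with a union bound over the $M$ observables. Concretely, for each copy of $\sigma$ one samples a product unitary $U = u_1 \otimes \cdots \otimes u_n$ with $u_j \sim \unif(\cl(2))$, measures in the computational basis to obtain $\bm{b}=(b_1,\dots,b_n)$, and forms the snapshot $\hat{\sigma} = \bigotimes_{j=1}^{n}\left(3\, u_j^\dagger \proj{b_j} u_j - \dI\right)$. The first step I would carry out is to verify that this snapshot is unbiased, $\exs{}{\hat{\sigma}} = \sigma$. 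This reduces to a single-qubit computation: the measurement channel $\cM(\rho)=\exs{u\sim\unif(\cl(2))}{\sum_b \tr{u^\dagger\proj{b}u\,\rho}\,u^\dagger\proj{b}u}$ is depolarizing with an explicitly invertible form, so that $\cM^{-1}(\proj{b}) = 3\proj{b} - \dI$, and tensorizing over the $n$ qubits produces the unbiased estimator above.

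The heart of the argument, and the step I expect to be the main obstacle, is the variance bound for a $k$-local observable. For a fixed $O_i$ I would bound the single-shot variance $\var[\tr{O_i \hat{\sigma}}]$ by the squared Pauli shadow norm of the traceless part of $O_i$. The key mechanism is that the inverse channel $\cM^{-1}$ factorizes across qubits and amplifies the nontrivial single-qubit Pauli components by a bounded constant; since $O_i$ acts nontrivially on at most $k$ qubits, only those factors contribute, yielding a multiplicative factor exponential in $k$ but controlled by $\|O_i\|_\infty$. A careful second-moment estimate, organized by expanding $O_i$ in the Pauli basis restricted to its support, computing $\exs{}{\tr{O_i \hat{\sigma}}^2}$ term by term, and exploiting both the factorized structure of $\hat{\sigma}$ and the $3$-design property of the single-qubit Clifford ensemble, then gives $\var[\tr{O_i \hat{\sigma}}] \le 2^{2k}\,\|O_i\|_\infty^2$ up to constants. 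This is precisely where the $2^{2k}=4^k$ dependence in the stated bound originates.

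Finally, I would partition the $N$ snapshots into $K = \cO(\log(M/\delta))$ batches, form the empirical mean of $\tr{O_i \hat{\sigma}}$ within each batch, and take the median across batches. Chebyshev's inequality applied to each batch mean (whose variance is $2^{2k}\|O_i\|_\infty^2$ divided by the batch size) shows that a single batch mean lands within $\eps$ of $\tr{O_i \sigma}$ with probability at least $2/3$, and a Chernoff bound on the median drives the per-observable failure probability below $\delta/M$; a union bound over all $M$ observables then gives overall success probability $1-\delta$. Choosing the batch size $\cO(2^{2k}\|O_i\|_\infty^2/\eps^2)$ and multiplying by $K$ yields the claimed $N=\cO\!\left(2^{2k}\max_{i\in[M]}\|O_i\|_\infty^2 \log(M/\delta)/\eps^2\right)$. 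I would emphasize that the i.i.d.\ assumption enters only through the independence of snapshots used in Chebyshev's inequality and the median bound, which is exactly the ingredient that Theorem~\ref{thm:gen to non iid - iid alg} later removes when lifting the algorithm to the non-i.i.d.\ setting.
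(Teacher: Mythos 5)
Your proposal is correct and follows essentially the same route as the paper, which does not re-prove this statement but cites it directly from \cite{huang2020predicting}: your reconstruction (unbiased per-qubit snapshot via the inverted depolarizing channel, the shadow-norm variance bound $\var[\tr{O_i\hat{\sigma}}]\le 4^k\|O_i\|_\infty^2$ for $k$-local observables, and median-of-means with $K=\cO(\log(M/\delta))$ batches plus a union bound) is precisely the argument in that reference. Your closing remark correctly identifies that independence of the snapshots is the only place the i.i.d.\ assumption enters, which is exactly the hypothesis that Theorem~\ref{thm:gen to non iid - iid alg} subsequently removes.
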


Now, combining this theorem and Theorem~\ref{thm:gen to non iid - iid alg}, we obtain  the following generalization for estimating local properties in the non-i.i.d.\ setting.
\begin{proposition}[Classical shadows in the non-i.i.d.\ setting--Pauli]
\label{prop:local shadows in non iid setting}
	Let $\{ O_i \}_{i\in [M]}$ be $M$ $k$-local  observables. There is an algorithm in the non-i.i.d.\ setting for predicting the expected values of the observables  $\{ O_i \}_{i\in [M]}$ under the post-measurement state to within $\eps$ with an error probability $\delta$ and  a copy complexity  satisfying:
	\begin{align*}
		N=\cO\left( \frac{ 2^{4k} \max_{i\in [M]} \|O_i\|^4_\infty \log^2(M/\delta)\log^2(2^{2k}\log(M)/\eps\delta)\log(d)}{\delta^2\eps^6  }\right).
	\end{align*}
\end{proposition}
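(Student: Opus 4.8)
The plan is to mirror the argument used for the Clifford case in Proposition~\ref{prop:non iid classical shadows}, replacing the i.i.d.\ Clifford guarantee with the i.i.d.\ Pauli classical shadows guarantee and invoking Theorem~\ref{thm:gen to non iid - iid alg}. The first thing I would do is verify that the preconditions of Theorem~\ref{thm:gen to non iid - iid alg} are met. The Pauli classical shadows protocol of \cite{huang2020predicting} is manifestly non-adaptive and incoherent: each copy is measured separately in a random local (product) Clifford basis drawn from $\plaw$, and the basis choices are fixed before any outcome is observed. Moreover, since the measurement applied to each system is drawn i.i.d.\ from $\plaw$, the induced law on measurement labels is permutation invariant, which is exactly the hypothesis needed by the randomized de Finetti Theorem~\ref{thm:gen-de finetti} underlying Theorem~\ref{thm:gen to non iid - iid alg}. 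Hence the theorem applies with $\cA$ the Pauli classical shadows algorithm.

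Next I would set $k_{\cA}$ equal to the i.i.d.\ copy complexity of Pauli classical shadows, namely $k_{\cA}=\cO\!\left(2^{2k}\max_{i\in[M]}\|O_i\|_\infty^2\log(M/\delta)/\eps^2\right)$, which by the rephrased theorem of \cite{huang2020predicting} guarantees $\delta_\cA(k_\cA,\sigma^{\otimes k_\cA},\eps)\le \delta/6$ for \emph{every} state $\sigma$ (the extra constant from passing between precision $\eps$ and $2\eps$ is absorbed into the $\cO$). Because this i.i.d.\ bound is uniform over input states, it in particular controls the supremum over the post-measurement states $\{\rho_{l,\bm{r,w}}^{A_N}\}$ appearing in Theorem~\ref{thm:gen to non iid - iid alg}; crucially, the $k$-locality of the $O_i$ is respected because the measurements stay Pauli. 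Theorem~\ref{thm:gen to non iid - iid alg} then gives
\[
\delta_\cB\!\left(N,\rho^{A_1\cdots A_N},2\eps\right)\le 2\sup_{\sigma:\,\mathrm{state}}\delta_\cA\!\left(k_\cA,\sigma^{\otimes k_\cA},\eps\right)+6\sqrt{\frac{k_\cA^2\log^2(k_\cA/\delta_\cA)\log(d)}{N\eps^2}},
\]
whose first term is at most $\delta/3$ by the choice of $k_\cA$.

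Finally I would choose $N$ so that the second term is also at most $\delta/3$, which yields $N=\cO\!\left(k_\cA^2\log^2(k_\cA/\delta)\log(d)/(\delta^2\eps^2)\right)$, and then substitute the value of $k_\cA$. The only real bookkeeping is the logarithmic factor: $\log^2(k_\cA/\delta)=\log^2\!\left(2^{2k}\max_{i\in[M]}\|O_i\|_\infty^2\log(M/\delta)/(\eps^2\delta)\right)$, which up to constants and $\log\log$ terms is $\log^2(2^{2k}\log(M)/\eps\delta)$, while the factor $k_\cA^2$ contributes $2^{4k}\max_{i\in[M]}\|O_i\|_\infty^4\log^2(M/\delta)/\eps^4$. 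Collecting these reproduces the claimed
\[
N=\cO\!\left(\frac{2^{4k}\max_{i\in[M]}\|O_i\|_\infty^4\log^2(M/\delta)\log^2(2^{2k}\log(M)/\eps\delta)\log(d)}{\delta^2\eps^6}\right).
\]

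I expect the step requiring the most care to be not conceptual but the precondition check: confirming that the random local ensemble $\plaw$ induces a permutation-invariant label distribution so that Theorem~\ref{thm:gen-de finetti}, and hence Theorem~\ref{thm:gen to non iid - iid alg}, applies verbatim, together with tracking that the locality of the observables survives the reduction to the post-measurement states. Everything else is a direct transcription of the Clifford computation in Proposition~\ref{prop:non iid classical shadows}.
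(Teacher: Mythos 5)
Your proposal is correct and follows essentially the same route as the paper: the paper likewise obtains Proposition~\ref{prop:local shadows in non iid setting} by plugging the i.i.d.\ Pauli classical shadows guarantee of \cite{huang2020predicting} (with $k_{\cA}=\cO\left(2^{2k}\max_i\|O_i\|_\infty^2\log(M/\delta)/\eps^2\right)$) into Theorem~\ref{thm:gen to non iid - iid alg}, exactly mirroring the Clifford computation of Proposition~\ref{prop:non iid classical shadows}. Your explicit check of the preconditions (non-adaptivity, incoherence, permutation invariance of the measurement distribution) and the logarithmic bookkeeping match what the paper does, the latter being left implicit in the paper for the Pauli case.
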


Recently, \cite{bertoni2022shallow} provide protocols with depth-modulated randomized measurement that interpolates between Clifford and Pauli measurements. Since their algorithms are also non-adaptive, they can be generalized as well to the non-i.i.d.\ setting using Theorem~\ref{thm:gen to non iid - iid alg}. Other classical shadows protocols~\cite{grier2022sample,wan2022matchgate, low2022classical, akhtar2023scalable} could also be extended to the non-i.i.d.\ setting.

Classical shadows can be used for learning quantum states and unitaries of bounded gate complexity~\cite{zhao2023learning}. Our   generalization  of classical shadows permits to immediately extend the state learning protocol of~\cite{zhao2023learning}   beyond the i.i.d.\ assumption and a similar extension should be possible for their unitary learning results.

\subsubsection{Verification of pure states}\label{sec:application-verification}

The verification of pure states is the task of determining whether a received state precisely matches the ideal pure state or significantly deviates from it. In this context, we will extend this problem to scenarios where we have $M$ potential pure states represented as $\{\proj{\Psi_i}\}_{1\le i \le M}$, and our objective is to ascertain whether the received state corresponds to one of these pure states or is substantially different from all of them. The traditional problem constitutes a special case with $M=1$. To formalize, a verification protocol $\cB$ satisfies:
\begin{enumerate}
	\item the \textit{\textbf{completeness}} condition if it accepts, with high probability, upon receiving one of the pure i.i.d.\ states  $\{\proj{\Psi_i}^{\otimes N}\}_{1\le i \le M}$, i.e., for all $i\in [M]$, we have  $\prs{p \sim \cB\left(\proj{\Psi_i}^{\otimes N-1}\right)}{p=0} \ge 1-\delta$. Here, the symbol ‘$0$' represents the outcome ‘Accept' or the null hypothesis.
	\item the \textit{\textbf{soundness}} condition if when the algorithm accepts, the quantum state passing the verification protocol (post-measurement state conditioned on a passing event) is close to one of the pure states $\{\proj{\Psi_i}\}_{1\le i \le M}$ with high probability, i.e., 
 \begin{equation} \prs{(c,p) \sim \cB\left(\rho^{A_1\cdots A_N}\right)}{p = 0, \forall i\in [M] : \bra{\Psi_i} \rho_{c,0}^{A_N} \ket{\Psi_i} < 1-\eps } \le \delta.
 \label{eq:condition-with-hp}
 \end{equation}
	 In this latter scenario, the protocol can receive a possibly highly entangled state $\rho^{A_1 \cdots A_N}$.
\end{enumerate}
Note that as the prediction for this problem is binary (Accept/Reject), a verification protocol is modeled by an operator $\Pi_{\text{Accept}}$, which is given by $\cB^{\dagger}(\proj{0})$. The usual way (see e.g.,~\cite{morimae2017verification,takeuchi2018verification, takeuchi2019resource,zhu2019general, unnikrishnan2022verification,li2023robust}) of writing the completeness and soundness conditions of a protocol for the case $M=1$ of verifying a single pure state is as follows. The completeness condition is 
	\begin{align*}
			\tr{ \Pi_{\text{Accept}}  \proj{\Psi}^{\otimes N-1} } \ge 1-\delta_{\text{c}},
	\end{align*}
where $\delta_{c}$ is the completeness parameter, which is the same as what we expressed in terms of $\cB$. The soundness condition is
\begin{align}
\label{eq:condition-in-exp}
 \tr{ \Pi_{\text{Accept}} \otimes \left(\dI-\proj{\Psi}\right)\rho^{A_1\cdots A_N}} \le \delta_{s}. 
\end{align}
Note that this quantity evaluates the expected fidelity of the state conditioned on acceptance, whereas~\eqref{eq:condition-with-hp} is slightly different: it evaluates the probability (over $p$ and $c$) of having a fidelity above $1-\eps$. It is simple to see that~\eqref{eq:condition-with-hp} implies $\delta_{s} \leq \eps + \delta$. Conversely, using Markov's inequality, Eq.~\eqref{eq:condition-in-exp} implies~\eqref{eq:condition-with-hp} with $\eps = \delta = \sqrt{\delta_s}$. We can, using the same methods, express our findings directly in terms of expectations for the task of verifying one pure state, see Appendix~\ref{App: verification of pure states in expectation} for more details. Here we prove the following verification result with high probability.

\begin{proposition}[Verification of pure states in the non-i.i.d.\ setting--Clifford]
\label{prop:verification of M pure states}
Let $\rho^{A_1\cdots A_N }$ be a permutation invariant state. 
	Let $\{\proj{\Psi_i}\}_{1\le i \le M}$ be $M$ pure states. There is an algorithm using Clifford measurements for verifying whether the (post-measurement) state $\rho^{A_N}$ is a member of $\{\proj{\Psi_i}\}_{1\le i \le M}$ or is at least $\eps$-far from them in terms of fidelity  with a probability at least $1-\delta$ and a  number of copies satisfying  
	\begin{align*}
		N=\cO\left( \frac{\log^2(M/\delta)\log^2(\log(M)/\eps\delta) \log(d)} {\delta^2 \eps^6}\right).
	\end{align*}
\end{proposition}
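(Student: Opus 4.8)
The plan is to realize verification as a shadow-tomography instance with rank-one observables and then convert the resulting fidelity estimates into an accept/reject decision by classical post-processing. Concretely, I would take $\cA$ to be the i.i.d.\ Clifford classical-shadows algorithm run on the $M$ observables $O_i = \proj{\Psi_i}$, producing estimates $\hat p_1,\dots,\hat p_M$ of the fidelities $\bra{\Psi_i}\rho^{A_N}\ket{\Psi_i}$, and append the decision rule $\cD$: accept (output $0$) iff $\max_i \hat p_i \ge 1-\eps/2$, and reject (output $1$) otherwise. Since each $O_i$ is a rank-one projector we have $\tr{O_i^2}=1$, so the problem-specific constant $\max_i\tr{O_i^2}$ appearing in Proposition~\ref{prop:non iid classical shadows} collapses to $1$; substituting this value into the copy complexity of that proposition yields exactly the claimed bound $N=\cO\bigl(\log^2(M/\delta)\log^2(\log(M)/\eps\delta)\log(d)/(\delta^2\eps^6)\bigr)$.

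Because the composite algorithm ($\cA$ followed by the thresholding $\cD$) is non-adaptive and incoherent, I would lift it to the non-i.i.d.\ setting through Theorem~\ref{thm:gen to non iid - iid alg} and Algorithm~\ref{alg-non-iid}, with calibration $c=(l,\bm{r,w})$. Conditioned on the high-probability event $\cE$ that every shadow estimate is $(\eps/2)$-accurate for the test state $\rho_{l,\bm{r,w}}^{A_N}$, both verification conditions follow by elementary arithmetic. For \emph{completeness}, an i.i.d.\ input $\proj{\Psi_j}^{\otimes N}$ gives test state $\rho^{A_N}=\proj{\Psi_j}$, whence $p_j=1$ and $\hat p_j\ge 1-\eps/2$, triggering acceptance; here the de Finetti error term vanishes because the input is i.i.d., so this reduces to the i.i.d.\ guarantee of $\cA$. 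For \emph{soundness}, whenever the protocol accepts there is an index $j$ with $\hat p_j\ge 1-\eps/2$, and accuracy then forces the true fidelity $\bra{\Psi_j}\rho_{c,0}^{A_N}\ket{\Psi_j}\ge \hat p_j-\eps/2\ge 1-\eps$, so the accepted test copy is $\eps$-close in fidelity to some $\ket{\Psi_j}$.

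The point that deserves care --- and which I expect to be the main obstacle --- is the bookkeeping between the state $\rho_{l,\bm{r,w}}^{A_N}$ that the shadow estimates actually describe and the state $\rho_{c,0}^{A_N}$ appearing in the soundness condition~\eqref{eq:condition-with-hp}, since the latter conditions additionally on the acceptance outcome $p=0$. This is exactly the gap closed by Lemma~\ref{lemma:equ of past and future states}: conditioning further on the learning outcomes $\bm{v}$ (and hence on $p$, a deterministic function of $\bm{v}$) moves the reduced test state in trace norm by an amount whose expectation is bounded by the randomized de Finetti error, and Markov's inequality converts this into the stated probability bound. Invoking the robustness assumption (Definition~\ref{def:success}) for the verification SUCCESS set absorbs this perturbation into the already-budgeted de Finetti term, at the cost of replacing $\eps$ by $2\eps$ as in Theorem~\ref{thm:gen to non iid - iid alg}. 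A union bound over the failure of $\cE$ and of this trace-norm event then delivers the overall success probability $1-\delta$, completing the argument.
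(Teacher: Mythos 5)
Your proposal matches the paper's proof in all essentials: you reduce verification to non-i.i.d.\ Clifford classical shadows on the rank-one observables $O_i=\proj{\Psi_i}$ (so $\tr{O_i^2}=1$ collapses the complexity of Proposition~\ref{prop:non iid classical shadows} to the claimed bound), use the same acceptance threshold $\max_i \hat p_i \ge 1-\eps/2$, and correctly identify that the only delicate point---the mismatch between the state $\rho_{l,\bm{r,w}}^{A_N}$ the estimates describe and the acceptance-conditioned state $\rho_{c,0}^{A_N}$ in the soundness condition---is closed exactly as the paper does, via Lemma~\ref{lemma:equ of past and future states}. The only difference is cosmetic bookkeeping of precision parameters (the paper budgets $\eps/4$ estimation accuracy plus an $\eps/8$ trace-norm tolerance where you work with $\eps/2$ and absorb the perturbation into a constant-factor rescaling of $\eps$), which does not affect correctness or the stated copy complexity.
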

\begin{proof}
We can apply Proposition~\ref{prop:non iid classical shadows} to estimate the expectation of the observables $\{O_i= \proj{\Psi_i}\}_{1\le i \le M}$ under the post-measurement state $\rho_{l, \bm{r,w}}^{A_N}$ to within $\eps/4$ and with a probability at least $1-\delta$ using a number of copies $N=\cO\left( \frac{\log^2(M/\delta)\log^2(\log(M)/\eps\delta) \log(d)} {\delta^2 \eps^6}\right) $. More concretely, we have a set of predictions  $\bm{\mu}=\{\mu_i\}_{1\le i \le M}$ satisfying (Proposition~\ref{prop:non iid classical shadows} and Lemma \ref{lemma:equ of past and future states}):
\begin{align}\label{inequ-verf}
	\prs{l, \bm{r,w}, \bm{\mu}}{\forall i \in [M] :  \left|\mu_i- \tr{\proj{\Psi_i} \rho_{l, \bm{r,w, \mu}}^{A_N}} \right| \le \eps/4\,, \; \left\|\rho_{l,\bm{r,w,\mu}}^{A_N}- \rho_{l,\bm{r,w}}^{A_N}\right\|_1\le \eps/8  } \ge 1-\delta. 
\end{align}
Then, our proposed algorithm accepts if, and only if there is some $i\in [M] $ such that  $\mu_i \ge 1-\eps/2$.  We can verify the completeness and soundness conditions for this algorithm.
\begin{enumerate}
	\item \textit{\textbf{Completeness.}}  If the verifier receives one pure state of the form  $\rho^{A_1\cdots A_N}= \proj{\Psi_i}^{\otimes N} $  for some $i\in [M] $ then every post-measurement state is pure, i.e., $\rho_{l, \bm{r,w, \mu}}^{A_N} = \proj{\Psi_i} $  and  Inequality~\eqref{inequ-verf}  implies $ \prs{ \bm{\mu}}{\left|\mu_i - 1 \right| \le \eps/2 }= \prs{ \bm{\mu}}{\left|\mu_i - \tr{\proj{\Psi_i} \rho_{}^{A_N}} \right| \le \eps/2 } \ge 1-\delta$. Hence the algorithm accepts with a probability $\ge \prs{ \bm{\mu}}{\mu_i \ge 1-\eps/2 } \ge 1-\delta$. Observe that for this algorithm, we can even relax the assumption that the input state is i.i.d.. For instance, we can only ask that the input state is product $\rho^{A_1\cdots A_N}= \bigotimes_{t=1}^N \sigma_t$ where for all $t\in [N]$, $\bra{\Psi_i} \sigma_t \ket{\Psi_i}\ge 1-\eps/4 $.
	\item \textit{\textbf{Soundness.}}  Here, we want to prove the following:
 \[ \prs{l, \bm{r,w}, \bm{\mu}}{\cB\left(\rho^{A_1\cdots A_N}\right)=0 \,, \; \forall i\in [M] : \bra{\Psi_i} \rho_{l, \bm{r,w},0 }^{A_N} \ket{\Psi_i} < 1-\eps } \le \delta. \]
 If  $\cB(\rho) = 0$ then  for some  $j\in [M] $ we have  $ \mu_j \ge 1-\eps/2$. Hence  $  \bra{\Psi_j}\rho_{l, \bm{r,w},0}^{A_N}\ket{\Psi_j}< 1 -\eps $ implies  $  \bra{\Psi_j}\rho_{l, \bm{r,w, \mu}}^{A_N}\ket{\Psi_j}\le \bra{\Psi_j}\rho_{l, \bm{r,w}}^{A_N}\ket{\Psi_j}+\eps/8\le \bra{\Psi_j}\rho_{l, \bm{r,w},0}^{A_N}\ket{\Psi_j}+\eps/4 < \mu_j -\eps/4 $ therefore: 
	\begin{align*}
		&\prs{l, \bm{r,w}, \bm{\mu}}{\cB\left(\rho^{A_1\cdots A_N}\right)=0 \,, \; \forall i\in [M] : \bra{\Psi_i} \rho_{l, \bm{r,w},0}^{A_N} \ket{\Psi_i} < 1-\eps }
		\\&\le 	\prs{l, \bm{r,w}, \bm{\mu}}{\exists j \in [M] : \mu_j \ge 1-\eps/2 \,,\; \bra{\Psi_j}\rho_{l, \bm{r,w},0}^{A_N}\ket{\Psi_j} < 1-\eps }
		\\&\le \prs{l, \bm{r,w}, \bm{\mu}}{\exists j \in [M] : \bra{\Psi_j}\rho_{l, \bm{r,w, \mu}}^{A_N}\ket{\Psi_j}< \mu_j -\eps/4} 
		\\&\le \prs{l, \bm{r,w}, \bm{\mu}}{ \exists j \in [M]  :  \left|\mu_j- \tr{\proj{\Psi_j} \rho_{l, \bm{r,w, \mu}}^{A_N}} \right| > \eps/4 } 
		\\&\le \delta 
	\end{align*}
	where we used Inequality~\eqref{inequ-verf}.
\end{enumerate}
\end{proof}

The above result uses Clifford measurements, which are non-local. If our primary concern lies in verification  with \emph{local} measurements, an alternative approach would be to apply  the non-i.i.d.\ shadow tomography result for local measurements (proposition~\ref{prop:local shadows in non iid setting}). Using the same analysis of this section, we can prove the following proposition.
\begin{proposition}[Verification of pure states in the non-i.i.d.\ setting--Pauli]\label{prop:local verification of M pure states}
Let $\rho^{A_1\cdots A_N }$ be a permutation invariant state. 
	Let $\{\proj{\Psi_i}\}_{1\le i \le M}$ be $M$ pure states. There is an algorithm using local (pauli) measurements for verifying whether the (post-measurement) state $\rho^{A_N}_{c,p}$ is a member of $\{\proj{\Psi_i}\}_{1\le i \le M}$ or is at least $\eps$-far from them in terms of fidelity  with a probability at least $1-\delta$ and a  number of copies satisfying 
	\begin{align*}
		N=\cO\left( \frac{n^3 2^{4n} \log^2(M/\delta)\log^2(\log(M)/\eps\delta) }{\delta^2 \eps^6}\right).
	\end{align*}
\end{proposition}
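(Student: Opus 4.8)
The plan is to reuse the argument of Proposition~\ref{prop:verification of M pure states} essentially verbatim, replacing only the Clifford classical-shadows guarantee (Proposition~\ref{prop:non iid classical shadows}) by its local Pauli counterpart (Proposition~\ref{prop:local shadows in non iid setting}). Concretely, I would take the $M$ observables to be the target projectors $O_i = \proj{\Psi_i}$ and feed them into the non-i.i.d.\ Pauli shadow tomography result to produce estimates $\bm{\mu} = \{\mu_i\}_{i\in[M]}$ of $\tr{\proj{\Psi_i}\rho_{l,\bm{r,w}}^{A_N}}$. The verifier then accepts if and only if $\mu_i \ge 1-\eps/2$ for some $i\in[M]$, exactly as in the Clifford case.

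First I would invoke Proposition~\ref{prop:local shadows in non iid setting} together with Lemma~\ref{lemma:equ of past and future states} to obtain, with probability at least $1-\delta$, simultaneously that $|\mu_i - \tr{\proj{\Psi_i}\rho_{l,\bm{r,w,\mu}}^{A_N}}| \le \eps/4$ for all $i\in[M]$ and that $\|\rho_{l,\bm{r,w,\mu}}^{A_N} - \rho_{l,\bm{r,w}}^{A_N}\|_1 \le \eps/8$; this is the exact analogue of Inequality~\eqref{inequ-verf}. Conditioned on this event, the completeness and soundness verifications carry over word for word: completeness uses that an i.i.d.\ pure input forces every post-measurement state $\rho_{l,\bm{r,w,\mu}}^{A_N}$ to equal the pure projector $\proj{\Psi_i}$, so $\mu_i \ge 1-\eps/2$ with probability $\ge 1-\delta$; soundness uses the triangle inequality on the fidelities, showing that acceptance together with fidelity below $1-\eps$ would force an estimation error exceeding $\eps/4$, an event excluded by the shadow guarantee.

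The only genuinely different step is the copy-complexity bookkeeping, and this is where the exponential factor appears. The key observation is that a pure-state projector $\proj{\Psi_i}$ on $n$ qubits is $n$-local, not $k$-local for small $k$, so one must set $k = n$ in Proposition~\ref{prop:local shadows in non iid setting}; moreover $\|\proj{\Psi_i}\|_\infty = 1$ and $\log(d) = \cO(n)$. Substituting $k = n$ yields a leading $2^{4n}$ factor, while the $\log^2(2^{2n}\log(M)/\eps\delta)$ term contributes $\cO\left(n^2 + \log^2(\log(M)/\eps\delta)\right)$; multiplying this by the $\log(d) = \cO(n)$ factor promotes it to the claimed $n^3\,\log^2(\log(M)/\eps\delta)$. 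Propagating the precision $\eps/4$ through the $\eps^{-6}$ dependence only affects constants, giving the stated bound.

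I expect the main obstacle to be conceptual rather than computational: recognizing that the verification logic is unchanged and that the $2^{4n}$ cost is intrinsic, since verifying a \emph{global} pure state using \emph{single-qubit Pauli} measurements forces the relevant observable to be $n$-local. In other words, the difficulty lies in the interpretation---this exponential scaling is a feature of local Pauli shadows applied to highly entangled targets, not an artifact of the reduction---rather than in any calculation.
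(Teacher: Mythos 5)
Your proposal matches the paper's proof exactly: the paper itself proves Proposition~\ref{prop:local verification of M pure states} by stating that one applies the non-i.i.d.\ Pauli shadow tomography result (Proposition~\ref{prop:local shadows in non iid setting}) in place of the Clifford one and repeats the analysis of Proposition~\ref{prop:verification of M pure states} verbatim, which is precisely your plan, including the acceptance rule $\mu_i \ge 1-\eps/2$ and the use of Lemma~\ref{lemma:equ of past and future states}. Your copy-complexity bookkeeping with $k=n$, $\|\proj{\Psi_i}\|_\infty = 1$, and $\log(d)=\cO(n)$ correctly reproduces the stated $\cO\bigl(n^3 2^{4n}\log^2(M/\delta)\log^2(\log(M)/\eps\delta)/(\delta^2\eps^6)\bigr)$ bound, and is in fact more explicit than the paper's own one-line justification.
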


\paragraph{Discussion and comparison with previous works on verification pure states.}

The main contribution here compared to previous results is that we give the first explicit protocol which works for all multipartite states. This stands in contrast to previous protocols where the desired state must be a ground state of a Hamiltonian satisfying certain conditions \cite{takeuchi2018verification} or a 
graph state  \cite{morimae2017verification, takeuchi2019resource, unnikrishnan2022verification,li2023robust}, or Dicke states \cite{zhu2019general}. However, the more efficient protocol uses Clifford measurements, which are non-local. The Pauli measurement case is local, but comes at a cost in scaling with number of systems.

We now go into more detail regarding the different scalings. 
The optimal copy complexity, or scaling for the number of copies required, with the fidelity error $\eps$, is $1/\eps$ \cite{zhu2019efficient,zhu2019general}. The scaling with the number of systems $n$ depends on the protocol (e.g. for stabiliser states there are protocols that do not scale with $n$, but known  protocols for the W state scales with $n$ \cite{zhu2019general}). Applying our results using Clifford (i.e. entangled over the systems) gives scaling with $\eps$ and $n$ as $\tilde{\cO}(n/\eps^6)$, and for random local Pauli scaling (local) the scaling is $\tilde{\cO}(n^3 16^n / \eps^6)$.
For the Clifford protocol, then, we have similar scaling to optimal known for W states  (though with $\eps$ scaling as $1/\eps^6$ instead of $1/\eps$), but our protocol works for all states.
The cost here is that measurements are in non-local across each copy. However for certain applications this is not an issue. For example verifying output of computations, Clifford are reasonably within the sets of ‘easy’ gates, so we have a close to optimal verification for all states that can be implemented.
In the case of random Paulis, where measurements are local on copies, we have the same scaling with $\eps$ but we get an exponential penalty of $n$ scaling in the error.
Given the generality of our protocol to all states though, it is perhaps not so surprising that we have a high dimensional cost.
Furthermore, depending on the situation, this scaling may not be the major cost one cares about.
Indeed, for small networks dimension will not be the most relevant scaling. We can imagine many applications in this regime. {For example small networks of sensors, such as satellites or gravimeters \cite{komar2014quantum,shettell2022cryptographic}, this scaling would not be prohibitive, but our results would allow for different resource states to be used, for example spin squeezed states, or other symmetric states which exhibit better robustness to noise \cite{ouyang2021robust}. Another example would be small communication networks, where, for example GHZ states can be used for anonymous communication \cite{christandl2005quantum} or W states for leader election \cite{dHondt}.  On such small scale networks our results would allow for verified versions of these applications over untrusted networks, in a way that is blind to which communication protocol is being applied.}

We also point out that we have not optimised over these numbers (rather we were concerned with showing something that works for all states). It is highly likely that these complexities can be improved and we expect that for particular families of states one can find variants where the scaling in the number of systems is polynomial or better.
One perspective in this direction coming directly from our results, is the observation that the protocols in the framework of \cite{pallister2018optimal}, which assume i.i.d.\ states, use random i.i.d.\ measurements, therefore our theorem allows them to be applied directly to the non-i.i.d.\ case. This allows us to take any protocol assuming i.i.d.\ states, and it works for general (non-i.i.d.) sources with a small cost.

Lastly, our formulation is naturally robust to noise. Such robustness is an important issue for any practical implementation, and indeed it has been addressed for several of the protocols mentioned, for example \cite{mccutcheon2016experimental, takeuchi2019resource,unnikrishnan2020authenticated,li2023robust}.
In terms of the completeness condition, we can easily make out statements robust to noise. For instance, we can relax the requirement to only ask that the input state is a product state $\rho^{A_1\cdots A_N}= \bigotimes_{t=1}^N \sigma_t$ where for all $t\in [N]$, $\bra{\Psi_i} \sigma_t \ket{\Psi_i}\ge 1-\eps/4 $.

\subsubsection{Fidelity estimation}\label{app:fidelity}
The problem of direct fidelity estimation~\cite{flammia2011direct,da2011practical}
 consists of estimating the fidelity $\bra{\Psi}\rho \ket{\Psi}$ between the target known pure state $\proj{\Psi}$ and the unknown quantum state $\rho$ by measuring independent copies of $\rho$. The algorithm of \cite{flammia2011direct} proceeds by sampling i.i.d.\ random  Pauli matrices  $$P_1, \dots, P_l\sim  \left\{\frac{\bra{\Psi}P\ket{\Psi}^2}{d}\right\}_{P\in \{\dI, X,Y,Z\}^{\otimes n}}$$ where $l=\ceil{1/(\eps^2\delta)}$. Then for each $i=1, \dots, l$, the algorithm measures the state $\rho$ with the POVM $\cM_{P_i}=\left\{\frac{\dI-P_i}{2}, \frac{\dI+P_i}{2}\right\}$ $m_i$ times where $m_i$ is defined as 
 \begin{align*}
     m_i =\ceil{\frac{2\log(2/\delta)\delta}{\bra{\Psi}P_i\ket{\Psi}^2}}.
 \end{align*}
 The algorithm observes $A_{i,j}\sim \left\{\frac{1-\tr{P_i\rho}}{2} ,\frac{1+\tr{P_i\rho}}{2}\right\}$ where $i\in \{1, \dots, l\}$ and $j\in \{1, \dots, m_i\}$. 
The estimator of the fidelity is then given as follows
\begin{align*}
    S=\frac{1}{l}\sum_{i=1}^l \frac{1}{m_i\bra{\Psi}P_i\ket{\Psi}} \sum_{j=1}^{m_i} (2A_{i,j}-1).
\end{align*}
In general, \cite{flammia2011direct} proved that the copy complexity satisfies:
\begin{align*}
    \ex{\sum_{i=1}^l m_i} \le  \left(1+\frac{12}{\eps^2}+ \frac{2d}{\eps^2}\log(24)\right)
\end{align*}
to conclude that $|S-\bra{\Psi}\rho \ket{\Psi}|\le 2\eps$ with probability at least $5/6$. This algorithm is non-adaptive and performs independent measurements from the set:
\begin{align*}
    \cM_{P_1, \dots, P_l}=\bigcup_{i=1}^{\ceil{12/\eps^2}} \left\{\cM_{P_i} \;\text{repeated}\; \ceil{\frac{2\log(2/\delta)\delta}{\bra{\Psi}P_i\ket{\Psi}^2}}\; \text{times} \right\} \;\text{where}\;P_1, \dots, P_l\sim  \left\{\frac{\bra{\Psi}P\ket{\Psi}^2}{d}\right\}_{P\in \{\dI, X,Y,Z\}^{\otimes n}}
\end{align*}
To extend this result to the non-i.i.d.\ setting, we apply Theorem~\ref{thm:gen to non iid - iid alg} with the set of POVMs $ \cM_{P_1, \dots, P_l}$ and a copy complexity given by $ k_{\cA}= \sum_{i=1}^l m_i= \sum_{i=1}^{\ceil{12/\eps^2}} \ceil{\frac{\log(24)}{6\bra{\Psi}P_i\ket{\Psi}^2}}$. Theorem~\ref{thm:gen to non iid - iid alg} ensures that we can estimate the fidelity between the ideal state $\proj{\Psi}$ and the post-measurement state $\rho^{A_N}_{\bm{w}}$ to within $3\eps$ with probability at least $5/6$ if the total number of copies $N$ satisfies:
\begin{align*}
    N= \frac{48^2\log(d)}{\eps^2}\cdot k_{\cA}^2\log^2(18 k_{\cA}).
\end{align*}
By Markov's inequality we have with probability at least $5/6$:
\begin{align*}
    k_{\cA}\le 6\,\ex{\sum_{i=1}^l m_i} \le  6\left(1+\frac{12}{\eps^2}+ \frac{2d}{\eps^2}\log(24)\right)\le \frac{12^2d}{\eps^2}. 
\end{align*}
 Therefore, by the union bound, our non-i.i.d.\ algorithm is $1/3$-correct and its complexity satisfies:
 \begin{align*}
    N\le  \frac{48^2\cdot 12^2d^2\log^2(18\cdot12^2d/\eps^2)\log(d)}{\eps^6}= \cO\left(\frac{d^2\log^3(d/\eps)}{\eps^6}\right).
\end{align*}
\begin{proposition}[Fidelity estimation in the non-i.i.d.\ setting]\label{prop:non iid fidelity}
 There is an algorithm in the non-i.i.d.\ setting for fidelity estimation with a precision parameter $\eps$,  a success probability at least $2/3$ and a  copy complexity:
\begin{align*}
	N=  \cO\left(\frac{d^2\log^3(d/\eps)}{\eps^6}\right).
\end{align*}
\end{proposition}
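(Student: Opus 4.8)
The plan is to recognize that the direct fidelity estimation protocol of Flammia and Liu~\cite{flammia2011direct} is a \emph{non-adaptive incoherent} algorithm in the sense of Definition~\ref{def:non-adaptive-alg}, so that it drops directly into the machinery of Theorem~\ref{thm:gen to non iid - iid alg}. Concretely, the protocol first samples Pauli operators $P_1, \dots, P_l$ from the importance distribution $\{\bra{\Psi}P\ket{\Psi}^2/d\}_P$ and, independently of the unknown state, commits to measuring each copy with one of the two-outcome POVMs $\cM_{P_i} = \{(\dI - P_i)/2, (\dI + P_i)/2\}$, repeated $m_i$ times; the estimate $S$ and the final decision are a fixed classical function of the outcomes. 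Since the measurements are chosen before any data is collected and act on single copies, the algorithm has the form $\cD \circ (\cM_1 \otimes \cdots \otimes \cM_{N-1})$, and the fidelity-estimation SUCCESS set $\{\,|S - \bra{\Psi}\sigma\ket{\Psi}| \le \eps'\,\}$ is robust under trace-norm perturbations by the triangle inequality. The i.i.d.\ guarantee I would carry in is that this algorithm achieves $|S - \bra{\Psi}\rho\ket{\Psi}| \le 2\eps$ with probability at least $5/6$, using a number of copies whose \emph{expectation} obeys $\ex{\sum_i m_i} \le 1 + 12/\eps^2 + (2d/\eps^2)\log(24)$.

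Next I would invoke Theorem~\ref{thm:gen to non iid - iid alg} with the measurement set $\cM_{P_1,\dots,P_l}$ and i.i.d.\ copy complexity $k_\cA = \sum_{i=1}^l m_i$. The theorem lifts the i.i.d.\ algorithm to the non-i.i.d.\ setting at the price of the de Finetti error term $\cO(\sqrt{k_\cA^2 \log^2(k_\cA)\log(d)/(N\eps^2)})$, so choosing $N = \Omega(k_\cA^2 \log^2(k_\cA)\log(d)/\eps^2)$ makes the estimate on the post-measurement test state $\rho^{A_N}_{\bm w}$ accurate to within $3\eps$. The loss from $2\eps$ to $3\eps$ is precisely the extra $\eps$-perturbation between $\rho^{A_N}_{l,\bm{r,w}}$ and $\rho^{A_N}_{l,\bm{r,w},p}$ controlled by Lemma~\ref{lemma:equ of past and future states}, fed through the robustness of the SUCCESS set.

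The one genuinely non-routine point is that $k_\cA = \sum_i m_i$ is itself a \emph{random} quantity, whereas the Flammia--Liu bound controls only its expectation; I would resolve this with a second application of Markov's inequality. With probability at least $5/6$ one has $k_\cA \le 6\,\ex{\sum_i m_i} \le 12^2 d/\eps^2 = \cO(d/\eps^2)$. A union bound over this event and the estimation-success event then gives overall success probability at least $2/3$. Substituting $k_\cA = \cO(d/\eps^2)$ into $N = \Omega(k_\cA^2 \log^2(k_\cA)\log(d)/\eps^2)$ and collecting logarithms yields the claimed $N = \cO(d^2 \log^3(d/\eps)/\eps^6)$, and one checks in passing that this $N$ comfortably satisfies the theorem's hypothesis $k_\cA \le N/\log N$.

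I expect the main difficulty to be bookkeeping rather than conceptual: propagating two independent failure budgets (the intrinsic Flammia--Liu failure bundled with the de Finetti approximation failure, and the Markov bound on $k_\cA$) each below $1/6$ through the $2\eps \to 3\eps$ precision slack and the $k_\cA \to k_\cA \log k_\cA$ substitution, while treating $k_\cA$ as random throughout. The dimension dependence $d^2$ — in contrast to the $\log d$ one obtains for classical shadows in Proposition~\ref{prop:non iid classical shadows} — is inherited directly from the $d/\eps^2$ scaling of the Flammia--Liu estimator and is therefore unavoidable within this reduction.
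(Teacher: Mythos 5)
Your proposal is correct and follows essentially the same route as the paper: identify the Flammia--Liu protocol as non-adaptive and incoherent, lift it via Theorem~\ref{thm:gen to non iid - iid alg} with $k_\cA = \sum_i m_i$ (absorbing the $2\eps \to 3\eps$ slack through Lemma~\ref{lemma:equ of past and future states} and robustness), control the random copy complexity by Markov's inequality to get $k_\cA \le 12^2 d/\eps^2$ with probability $5/6$, and union-bound the two failure events to reach success probability $2/3$ and $N = \cO\left(d^2\log^3(d/\eps)/\eps^6\right)$. Even the constants in your Markov step match the paper's computation.
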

Moreover, \cite{flammia2011direct} showed that for well-conditioned states $\proj{\Psi}$ satisfying for all $P\in \{\dI, X,Y,Z\}^{\otimes n}$, $|\bra{\Psi}P\ket{\Psi}|\ge \alpha$ for some $\alpha>0$, the copy complexity is bounded in expectation as follows:
\begin{align*}
    \ex{k_{\cA}}= \ex{\sum_{i=1}^l m_i}= \cO\left(\frac{\log(12)}{\alpha^2\eps^2}\right).
\end{align*}
Similarly, by applying Theorem~\ref{thm:gen to non iid - iid alg} and Markov's inequality we can show the following proposition. 
\begin{proposition}[Fidelity estimation in the non-i.i.d.\ setting--Well-conditioned states]\label{prop:non iid fidelity-W}
Let $\ket{\Psi}$ be a well-conditioned state with parameter $\alpha>0$. 
 There is an algorithm in the non-i.i.d.\ setting for fidelity estimation with a precision parameter $\eps$,  a success probability at least $2/3$ and a  copy complexity:
\begin{align*}
	N=  \cO\left(\frac{\log^3(d/\alpha\eps)}{\alpha^4\eps^6}\right).
\end{align*}
\end{proposition}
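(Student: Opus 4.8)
The plan is to follow exactly the template already used for Proposition~\ref{prop:non iid fidelity}, changing only the input bound on the i.i.d.\ copy complexity. The key observation is that the direct fidelity estimation algorithm of \cite{flammia2011direct} is a non-adaptive, incoherent algorithm (it samples Paulis $P_1, \dots, P_l$ and performs the two-outcome measurements $\cM_{P_i}$ independently on distinct copies), so Theorem~\ref{thm:gen to non iid - iid alg} applies verbatim. The only difference from the general case is that, for a well-conditioned target state $\ket{\Psi}$ with $|\bra{\Psi}P\ket{\Psi}| \ge \alpha$ for all Paulis $P$, the expected number of copies used by the i.i.d.\ algorithm improves from $\cO(d/\eps^2)$ to $\ex{k_{\cA}} = \ex{\sum_{i=1}^l m_i} = \cO\!\left(\frac{\log(12)}{\alpha^2\eps^2}\right)$, as recalled just above the statement.

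First I would fix the i.i.d.\ algorithm $\cA$ to be the Flammia--Liu estimator restricted to well-conditioned states, whose (random) copy complexity is $k_{\cA} = \sum_{i=1}^l m_i$. Since only the expectation of $k_{\cA}$ is controlled, I would invoke Markov's inequality to convert this into a high-probability bound: with probability at least $5/6$ one has $k_{\cA} \le 6\,\ex{k_{\cA}} = \cO\!\left(\frac{1}{\alpha^2\eps^2}\right)$. Conditioning on this event, I would then apply Theorem~\ref{thm:gen to non iid - iid alg} with this bounded value of $k_{\cA}$, which guarantees that the induced non-i.i.d.\ algorithm $\cB$ estimates $\bra{\Psi}\rho_{l,\bm{r,w}}^{A_N}\ket{\Psi}$ to within $3\eps$ provided the total number of copies satisfies $N = \frac{48^2 \log(d)}{\eps^2}\, k_{\cA}^2 \log^2(18 k_{\cA})$, exactly as in the general calculation. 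A final union bound over the Markov overshoot event, the de Finetti approximation error of Theorem~\ref{thm:gen-de finetti}, and the intrinsic failure probability of $\cA$ keeps the overall success probability at least $2/3$.

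The remaining work is purely arithmetic: substituting $k_{\cA} = \cO\!\left(\frac{1}{\alpha^2\eps^2}\right)$ gives $k_{\cA}^2 = \cO\!\left(\frac{1}{\alpha^4\eps^4}\right)$ and $\log^2(k_{\cA}) = \cO\!\left(\log^2(1/(\alpha\eps))\right)$, so that $N = \cO\!\left(\frac{\log(d)\,\log^2(1/(\alpha\eps))}{\alpha^4\eps^6}\right)$, and I would absorb the product of logarithms into the single factor $\log^3(d/\alpha\eps)$ to match the stated bound. I do not expect any genuine obstacle here, since the proposition is a direct corollary; the only point requiring mild care is the handling of the \emph{random} copy complexity $k_{\cA}$ through Markov's inequality (Theorem~\ref{thm:gen to non iid - iid alg} is stated for a fixed $k_{\cA}$), and the bookkeeping of the three failure events in the union bound so that the final precision is $3\eps$ with success probability $2/3$.
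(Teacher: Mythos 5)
Your proposal matches the paper's proof essentially verbatim: the paper likewise derives Proposition~\ref{prop:non iid fidelity-W} by applying Theorem~\ref{thm:gen to non iid - iid alg} to the non-adaptive Flammia--Liu estimator, using Markov's inequality to bound the random copy complexity $k_{\cA}\le 6\,\ex{k_{\cA}}=\cO\left(1/(\alpha^2\eps^2)\right)$ with probability $5/6$, taking a union bound over the failure events, and performing the same substitution into $N=\cO\left(k_{\cA}^2\log^2(k_{\cA})\log(d)/\eps^2\right)$. Your handling of the only delicate point—the random $k_{\cA}$—and the absorption of the logarithmic factors into $\log^3(d/\alpha\eps)$ are exactly what the paper does in its proof of Proposition~\ref{prop:non iid fidelity}, to which the well-conditioned case is declared an analogue.
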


\subsubsection{State tomography}\label{app:tomography}
In the problem of state tomography, we are given $N$ copies of an unknown quantum state $\sigma$ and the objective is to construct a (classical description) of a quantum state $\hat{\sigma}$ satisfying $\|\sigma - \hat{\sigma}\|_1\le \eps$ with a probability at least $1-\delta$.

In the i.i.d.\ setting, a sufficient number of copies for state tomography in the incoherent setting with a precision $\eps$ and an error probability $\delta $  is \cite{kueng2017low}:
\begin{align*}
	k_{\cA}= \cO\left(\frac{d^2\log(1/\delta)}{\eps^2}+ \frac{d^3}{\eps^2}\right).
\end{align*}
Hence by Theorem~\ref{thm:gen to non iid - iid alg} there is an algorithm $\cB$ in the non-i.i.d.\ setting with an error probability: 
 \begin{align*}
        \delta_\cB\left(N, \rho^{A_1 \cdots A_N},2\eps\right) \le  2\sup_{\sigma \,:\; \text{state}}\delta_\cA\left(k_{\cA}, \sigma^{\otimes k_{\cA}}, \eps\right)  +6 \sqrt{\frac{k^2_\cA\log^2 (k_\cA/\delta_\cA)\log(d)}{N\eps^2}}.
    \end{align*}
So a total number of copies sufficient to achieve $\delta$-correctness in the non-i.i.d.\ setting is:
\begin{align*}
	N= \frac{256k_{\cA}^2 \log^2(6k_{\cA}/\delta)\log(d)}{\delta^2 \eps^2}= \cO\left( \frac{d^4\log^2(d/\delta\eps)\log^2(1/\delta)\log(d)}{\delta^2\eps^6}+\frac{ d^6 \log^2(d/\delta\eps) \log(d)}{ \delta^2\eps^6  }\right).
\end{align*}

\begin{proposition}[State tomography in the non-i.i.d.\ setting]
\label{prop:non iid tomography}
 There is an algorithm in the non-i.i.d.\ setting for state tomography with a precision parameter $\eps$,  a success probability at least $1-\delta$ and a  copy complexity:
\begin{align*}
	N=  \cO\left( \frac{d^4\log^5(d/\delta\eps)}{\delta^2\eps^6}+\frac{ d^6  \log^3(d/\delta\eps)}{ \delta^2\eps^6  }\right).
\end{align*}
\end{proposition}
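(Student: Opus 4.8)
The plan is to instantiate the general reduction of Theorem~\ref{thm:gen to non iid - iid alg} with a known incoherent i.i.d.\ tomography algorithm, exactly as was done for classical shadows and fidelity estimation. First I would observe that state tomography, as formulated in Example~\ref{exmples:learning problems}, satisfies the robustness assumption of Definition~\ref{def:success} by the triangle inequality, so the hypotheses of Theorem~\ref{thm:gen to non iid - iid alg} are in force. Next I would take $\cA$ to be the incoherent tomography algorithm of \cite{kueng2017low}, verifying that it is \emph{non-adaptive} and \emph{incoherent} in the sense of Definition~\ref{def:non-adaptive-alg}: it fixes a family of single-copy POVMs in advance and applies a classical post-processing map to the outcomes. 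Its i.i.d.\ copy complexity to reach precision $\eps$ and failure probability $\delta$ is
\[
k_{\cA}= \cO\left(\frac{d^2\log(1/\delta)}{\eps^2}+ \frac{d^3}{\eps^2}\right).
\]

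With $\cA$ so chosen, the induced algorithm $\cB$ is precisely Algorithm~\ref{alg-non-iid}, and Theorem~\ref{thm:gen to non iid - iid alg} gives
\[
\delta_\cB\left(N, \rho^{A_1 \cdots A_N},2\eps\right) \le  2\sup_{\sigma \,:\; \text{state}}\delta_\cA\left(k_{\cA}, \sigma^{\otimes k_{\cA}}, \eps\right)  +6 \sqrt{\frac{k^2_\cA\log^2 (k_\cA/\delta_\cA)\log(d)}{N\eps^2}}.
\]
The supremum in the theorem ranges a priori only over the post-measurement states $\rho_{l,\bm{r,w}}^{A_N}$, but since the guarantee of \cite{kueng2017low} holds uniformly over \emph{all} states $\sigma$, I may bound the first term by the worst-case i.i.d.\ error; this is the step where it is essential that the cited i.i.d.\ algorithm is state-independent. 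Following the recipe in the remarks after Theorem~\ref{thm:gen to non iid - iid alg}, I would pick $k_{\cA}$ so that the i.i.d.\ error is at most $\delta/6$ at the smaller precision parameter, making the first term at most $\delta/3$, and then choose $N$ large enough that the second term is at most $2\delta/3$; the factor-of-two rescaling of $\eps$ built into the theorem is what lets me state the final result at precision $\eps$.

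Solving the second-term constraint yields $N = \tfrac{256\,k_{\cA}^2 \log^2(6k_{\cA}/\delta)\log(d)}{\delta^2 \eps^2}$. Substituting the value of $k_{\cA}$, squaring its two summands to produce the $d^4$ and $d^6$ terms, and using $\log(k_\cA/\delta)=\cO(\log(d/\eps\delta))$ to collect the polylogarithmic factors gives the claimed bound, with the $\log^5(d/\delta\eps)$ and $\log^3(d/\delta\eps)$ arising from the respective accumulations of $\log^2(1/\delta)$, $\log^2(d/\eps\delta)$, and $\log(d)$. The only genuinely delicate points are confirming the non-adaptive incoherent structure of the chosen i.i.d.\ algorithm and the state-uniformity needed to discharge the supremum over post-measurement states; once these are secured, the remainder is a direct substitution into Theorem~\ref{thm:gen to non iid - iid alg} followed by routine bookkeeping of logarithmic factors, so I do not expect any serious obstacle beyond this verification.
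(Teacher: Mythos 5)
Your proposal is correct and follows essentially the same route as the paper: both instantiate Theorem~\ref{thm:gen to non iid - iid alg} with the non-adaptive incoherent tomography algorithm of \cite{kueng2017low} at copy complexity $k_{\cA}= \cO\left(\frac{d^2\log(1/\delta)}{\eps^2}+ \frac{d^3}{\eps^2}\right)$, then solve the second error term for $N= \frac{256\,k_{\cA}^2 \log^2(6k_{\cA}/\delta)\log(d)}{\delta^2 \eps^2}$ and collect logarithms via $\log(k_\cA/\delta)=\cO(\log(d/\eps\delta))$. Your additional care in checking the robustness assumption and in discharging the supremum over post-measurement states by the state-uniformity of the i.i.d.\ guarantee is exactly the (implicit) reasoning in the paper, with only constant-factor bookkeeping differing.
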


Observe that, unlike the statement of state tomography in the i.i.d.\ setting~\cite{kueng2017low}, here we do not have an explicit dependency on the rank of the approximated state. This can be explained by the fact that 
if the state $\rho^{A_1\cdots A_N}$ is not i.i.d.\ then the post-measurement states $\{\rho^{A_N}_{c,p}\}_{c,p}$ can have a full rank even if we start with a pure input state $\rho^{A_1\cdots A_N}$. For instance, let $\rho= \proj{\Psi}$ where $\ket{\Psi}= \frac{1}{\sqrt{d}} \sum_{i \in [d] } \ket{i}\otimes \ket{i}$ is the maximally entangled state, and let $X= \sum_{i \in [d] } \alpha_i \proj{i}$ be an observable. In this case, we have $\rank\left(\rho^{A_1 A_2}\right)=1$ and $\rank\left( \rho^{A_2}_{X}\right) =\rank\left( \sum_{i \in [d] } \frac{\alpha_i}{\|\alpha\|_1} \proj{i}  \right)=d $ if all the coefficients $\{\alpha_i\}_{i\in [d]}$ are non-zero.

\subsubsection{Testing mixedness of states}\label{app:testing-mixedness}

In the problem of testing mixedness of states, we are given an unknown quantum state $\sigma$, which can either be $\frac{\dI}{d}$ (null hypothesis) or $\eps$-far from it in the trace-norm (alternate hypothesis). The objective is to determine the true hypothesis with a probability of at least $1-\delta$. However, this problem does not satisfy the robustness assumption required in Definition~\ref{def:success}. Due to this reason, we introduced the ‘tolerant' version of this problem in Example~\ref{exmples:learning problems}. To the best of our knowledge, there is no algorithm for the tolerant testing mixedness problem that outperforms the tomography algorithm (naive testing by learning approach). Thus, in this section, we concentrate on the standard (non-tolerant) formulation of testing mixedness of states.

Under the null hypothesis, we assume that the learning algorithm is given the i.i.d.\  state $\rho = \left(\frac{\dI}{d}\right)^{\otimes N}$ and is expected to respond with ‘$0$' with a probability of at least $1-\delta$. On the other hand, under the alternate hypothesis, the learning algorithm receives a (potentially entangled) state $\rho^{A_1 \cdots A_N}$. In this scenario, the learning algorithm should output ‘$1$' with a probability of at least $1-\delta$ if the post-measurement state $\rho^{A_N}_{c,p}$ is $\eps$-far from $\frac{\dI}{d}$.
In the i.i.d.\ case, a sufficient number of copies for testing mixedness of states problem  in the incoherent setting with  a precision parameter $\eps$ and an error probability $\delta$  is given by~\cite{bubeck2020entanglement}:
\begin{align*}
	k_{\cA}= \cO\left( \frac{\sqrt{d^{3}}\log(1/\delta)}{\eps^2}\right).
\end{align*}
Hence by Theorem~\ref{thm:gen to non iid - iid alg}
\footnote{We can apply Theorem~\ref{thm:gen to non iid - iid alg} only under the alternate hypothesis where the robustness assumption holds. Under the null hypothesis, the robustness assumption no longer holds; however, since we are assuming that the input state is i.i.d., i.e., $\rho= \left(\frac{\dI}{d}\right)^{\otimes N}$, we can directly apply the result from \cite{bubeck2020entanglement} in this case.} 
there is an algorithm $\cB$ in the non-i.i.d.\ setting with an error probability: 
 \begin{align*}
        \delta_\cB\left(N, \rho^{A_1 \cdots A_N},2\eps\right) \le  2\sup_{\sigma \,:\; \text{state}}\delta_\cA\left(k_{\cA}, \sigma^{\otimes k_{\cA}}, \eps\right)  +6 \sqrt{\frac{k^2_\cA\log^2 (k_\cA/\delta_\cA)\log(d)}{N\eps^2}}.
    \end{align*}
So a total number of copies sufficient to achieve $\delta$-correctness in the non-i.i.d.\ setting is:
\begin{align*}
	N= \frac{256k_{\cA}^2\log^2(6k_{\cA}/\delta\eps)\log(d)}{\delta^2 \eps^2}= \cO\left( \frac{d^3\log^2(1/\delta)\log^2(d/\delta\eps)\log(d)}{\delta^2\eps^6}\right).
\end{align*}

\begin{proposition}[Testing mixedness  of quantum states in the non-i.i.d.\ setting]
\label{prop:non iid testing identity}
	There is an algorithm in the non-i.i.d.\ setting for testing mixedness  of quantum states with a precision parameter $\eps$,  a success probability at least $1-\delta$ and a  copy complexity:
	\begin{align*}
		N= \cO\left( \frac{d^3\log^5(d/\delta\eps)}{\delta^2\eps^6}\right).
	\end{align*}
\end{proposition}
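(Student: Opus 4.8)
The plan is to reduce to the incoherent i.i.d.\ mixedness tester of \cite{bubeck2020entanglement} and then push it through the non-i.i.d.\ extension of Theorem~\ref{thm:gen to non iid - iid alg}. The i.i.d.\ tester consumes $k_{\cA}=\cO\!\left(\sqrt{d^3}\,\log(1/\delta)/\eps^2\right)$ copies, distinguishes $\sigma=\dI/d$ from states with $\|\sigma-\dI/d\|_1\ge \eps$ with error at most $\delta$, and — crucially for us — is non-adaptive and incoherent, performing single-copy measurements drawn from a fixed collection $\{\cM^{\cA}_t\}$. It is therefore of exactly the form required by Theorem~\ref{thm:gen to non iid - iid alg}, so I would feed its measurement set and copy budget $k_{\cA}$ into Algorithm~\ref{alg-non-iid} to obtain the non-i.i.d.\ tester $\cB$.

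The main obstacle — and the reason this case differs from, say, fidelity estimation — is that the (non-tolerant) mixedness property does \emph{not} satisfy the robustness assumption of Definition~\ref{def:success}: an $\eps'$-perturbation of $\dI/d$ can move it from the null side to the alternate side. I would therefore split the analysis by hypothesis. Under the null hypothesis the input is genuinely i.i.d., $\rho=(\dI/d)^{\otimes N}$, so conditioning on the algorithm's output does not alter the test marginal and $\rho^{A_N}_{c,p}=\dI/d$ for every outcome; hence the completeness guarantee follows \emph{directly} from the i.i.d.\ tester of \cite{bubeck2020entanglement}, without invoking Theorem~\ref{thm:gen to non iid - iid alg} at all. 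Under the alternate hypothesis we only need the one-sided statement ``if $\rho^{A_N}_{c,p}$ is $\eps$-far from $\dI/d$ then output $1$''; on this regime the slack absorbed inside the proof of Theorem~\ref{thm:gen to non iid - iid alg} (the trace-distance $\eps'$ between $\rho^{A_N}_{c,p}$ and $\rho^{A_N}_{l,\bm{r,w}}$ controlled by Lemma~\ref{lemma:equ of past and future states}, together with the approximation by an i.i.d.\ state supplied by Theorem~\ref{thm:gen-de finetti}) is exactly the robustness that is available, so the theorem applies verbatim and bounds the soundness error. This is the only genuinely non-routine step.

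It then remains to assemble the copy complexity. Running the i.i.d.\ tester at precision $\eps/2$ and using the factor-two rescaling built into Theorem~\ref{thm:gen to non iid - iid alg} (so the non-i.i.d.\ guarantee holds at precision $\eps$), I would substitute $k_{\cA}^2=\cO\!\left(d^3\log^2(1/\delta)/\eps^4\right)$ into the copy-complexity recipe $N=\cO\!\left(k_{\cA}^2\log^2(k_{\cA}/\delta\eps)\log(d)/(\delta^2\eps^2)\right)$ from the remark following that theorem. The logarithmic factors collapse, since $\log(k_{\cA}/\delta\eps)=\cO(\log(d/\delta\eps))$ and $\log^2(1/\delta)\cdot\log^2(d/\delta\eps)\cdot\log(d)=\cO\!\left(\log^5(d/\delta\eps)\right)$, giving
\[
N=\cO\!\left(\frac{d^3\log^5(d/\delta\eps)}{\delta^2\eps^6}\right),
\]
as claimed. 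The remainder is direct substitution and bookkeeping of logarithms.
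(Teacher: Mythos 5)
Your proposal matches the paper's own proof essentially step for step: the paper likewise feeds the non-adaptive incoherent tester of \cite{bubeck2020entanglement} with $k_{\cA}=\cO\left(\sqrt{d^{3}}\log(1/\delta)/\eps^{2}\right)$ into Theorem~\ref{thm:gen to non iid - iid alg} (via Algorithm~\ref{alg-non-iid}), and its footnote makes exactly your hypothesis split, applying the theorem only under the alternate hypothesis where the robustness assumption holds and invoking the i.i.d.\ guarantee of \cite{bubeck2020entanglement} directly under the null hypothesis since there $\rho=(\dI/d)^{\otimes N}$ is i.i.d. Your final bookkeeping, $N=\cO\left(k_{\cA}^{2}\log^{2}(k_{\cA}/\delta\eps)\log(d)/(\delta^{2}\eps^{2})\right)=\cO\left(d^{3}\log^{5}(d/\delta\eps)/(\delta^{2}\eps^{6})\right)$, is also identical to the paper's.
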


\subsection{General algorithms in the non-i.i.d.\ setting}\label{sec: general meas} 

In this section, we present a general framework for extending algorithms designed to learn properties of a quantum state using i.i.d.\ input states, to general possibly entangled input states. The distinction from Section~\ref{sec: iid meas} lies in the relaxation of the requirement for algorithms to be non-adaptive; meaning, they can now involve adaptive measurements, potentially coherent or entangled (see Definition~\ref{def:gen-alg}). Coherent measurements are proved to be more powerful than incoherent ones (let alone non-adaptive ones) for tasks such as state tomography~\cite{o2016efficient, chen2023does}, shadow tomography~\cite{aaronson2019shadow,buadescu2021improved,chen2022exponential} and testing mixedness of states~\cite{buadescu2019quantum,chen2022tight-mixedness}.

As we now consider general algorithms that encompass (possibly) coherent measurements, a suitable candidate for the measurement device in the projection phase (the $\bm{w}$ part in Algorithm~\ref{alg-non-iid}) becomes less clear. Furthermore, 
we require an approximation that excels under the more stringent trace-norm condition, particularly when addressing non-local (non product) observables.
To address this challenge, we adopt the approach outlined in \cite{SDP}, utilizing any informationally complete measurement device. 
We will use the measurement device $\cM_{\text{dist}}$ having a low distortion with side information, of~\cite{jee2020quasi}. It satisfies the following important property: the application of the corresponding measurement channel $\cM_{\text{dist}}$ to the system $A_2$ does not diminish the distinguishability between two bipartite states on $A_1A_2$ by a factor greater than $2d_{A_2}$, wherein $d_{A_2}$ represents the  dimension of $A_2$. To be precise, the measurement channel $\cM_{\text{dist}}$ satisfies the following inequality for all  bipartite states $\rho^{A_1A_2}$ and $\sigma^{A_1A_2}$:
\begin{align*}
	\left\|\rho^{A_1A_2}- \sigma^{A_1A_2} \right\|_1 \le 2 d_{A_2}\,\left\|\id^{A_1} \otimes\cM_{\text{dist}}^{A_2}\left(\rho^{A_1A_2}- \sigma^{A_1A_2} \right)\right\|_1.  
\end{align*}
The measurement device $\cM_{\text{dist}}$ will play a crucial role in our algorithm. By applying this channel to a large fraction of the subsystems of a quantum state, we can show that the post-measurement state behaves as an i.i.d.\ state. Thus, we will be able to use the same algorithm on a small number of the remaining systems.

For a learning algorithm $\cA$ designed for i.i.d.\  inputs, we construct the algorithm $\cB$ explicitly described in Algorithm~\ref{alg-non-iid-general} and illustrated in Figure~\ref{fig:illustration of general strategy-general}.

\begin{figure}[t!]
	\centering
	\includegraphics[width=0.6\columnwidth]{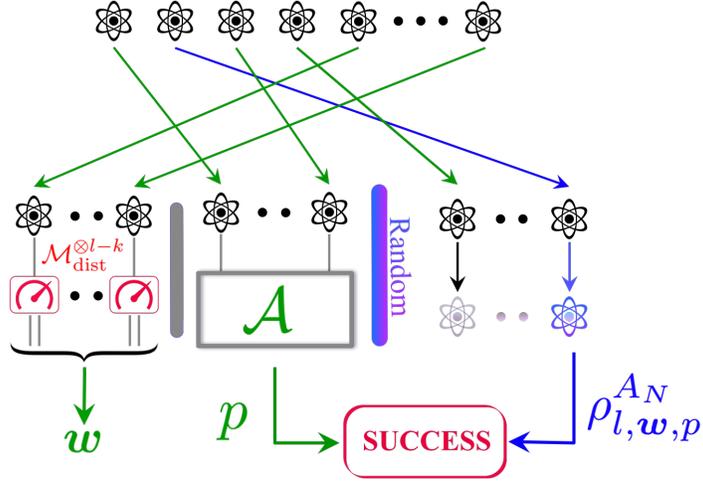}
	\caption{Illustration of Algorithm~\ref{alg-non-iid-general}. Algorithm~\ref{alg-non-iid-general} measures a large number of the state's subsystems using the measurement device with low distortion $\cM^{l-k}_{\mathrm{dist}}$ (red and green parts).  
    Then, in order to predict the property, Algorithm~\ref{alg-non-iid-general} applies the data processing of Algorithm $\mathcal{A}$ to the outcomes of a part these subsystems (green part) leading to a prediction $p$. Algorithm~\ref{alg-non-iid-general} returns the remaining outcomes as calibration $\bm{w}$.  Success occurs if $p$ is (approximately) compatible with the remaining  post-measurement test copy $\rho_{l, \bm{w},p}^{A_N}$.}
	\label{fig:illustration of general strategy-general}
\end{figure}
\begin{algorithm}
	\caption{Predicting  properties of quantum states in the   non-i.i.d.\  setting - General algorithms}\label{alg-non-iid-general} 
	\begin{algorithmic}[h!]
		\Require  
            Measurement $\cA : \qsys{A_1 \dots A_k} \to \csys{\cP}$.
		A permutation invariant state $\rho^{A_1 \cdots A_N}$. 
		\Ensure Adapt the algorithm $\cA$ to non-i.i.d.\  inputs $\rho^{A_1 \cdots A_N}$.
        \State Run algorithm $\cA$  on systems $A_1 \dots A_k$ and obtain outcome $p \leftarrow \cA(\rho)$
        \State Sample $l\sim \unif\{k+1,\dots,k+\frac{N}{2}\}$
		\State Apply $\cM_{\text{dist}}$ to each system $A_{k+1}$ to $A_{l}$ and obtain outcome
  $\bm{w} \leftarrow \cM_{\text{dist}}^{\otimes (l-k)}(\rho)$
		\\	\Return $(l, \bm{w}, p)$.
	\end{algorithmic}
\end{algorithm}

In the following theorem, we relate the error probability of Algorithm~\ref{alg-non-iid-general} with the error probability of the  algorithm $\cA$. 

\begin{theorem}[General algorithms in the non-i.i.d.\ setting]
\label{thm:general-meas}Let $\eps, \eps'>0 $ and  $1\le k< N/2$. Let $\cA$ be a general algorithm. 	 Algorithm~\ref{alg-non-iid-general} has an error probability satisfying:
	\begin{align*}
		\delta_\cB\left(N, \rho^{A_1 \cdots A_N}, \eps+ \eps'\right) 
		&\le \sup_{l, \bm{w}}\delta_\cA\left(k, \left(\rho^{A_N}_{l,\bm{w}}\right)^{\otimes k}, \eps\right)+ 12\sqrt{\frac{2k^{3} d^{2} \log(d)^{} }{N^{} \eps'^2}} +   2\sqrt{             
			\frac{2k^{3} d^{2} \log(d)^{} }{N^{}}}.	
	\end{align*}
\end{theorem}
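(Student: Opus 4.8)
The structure of this proof should parallel the proof of Theorem~\ref{thm:gen to non iid - iid alg} very closely, with the randomized local de Finetti theorem replaced by the general de Finetti approximation \eqref{general deFinetti} that employs the low-distortion measurement device $\cM_{\text{dist}}$. The plan is to decompose the error probability into two contributions using the robustness assumption (Definition~\ref{def:success}): first, the probability that the prediction $p$ fails against the \emph{unconditioned} post-measurement test state $\rho^{A_N}_{l,\bm{w}}$, and second, the probability that the conditioned test state $\rho^{A_N}_{l,\bm{w},p}$ differs from $\rho^{A_N}_{l,\bm{w}}$ by more than $\eps'$ in trace norm. Concretely, I would write
\begin{align*}
\delta_\cB(N,\rho,\eps+\eps') &\le \prs{l,\bm{w},p}{(p,\rho^{A_N}_{l,\bm{w}})\notin\textup{SUCCESS}_\eps} + \prs{l,\bm{w},p}{\left\|\rho^{A_N}_{l,\bm{w},p}-\rho^{A_N}_{l,\bm{w}}\right\|_1>\eps'},
\end{align*}
exactly as in \eqref{eq:ineq-theorem-non-adaptive-main}, invoking robustness to absorb the $\eps'$ slack into the success precision.

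For the \textbf{second term}, I would prove an analogue of Lemma~\ref{lemma:equ of past and future states}: the conditioning on the prediction $p$ (a function of the outcome of running $\cA$ on systems $A_1\cdots A_k$) should not perturb the reduced test state much, because \eqref{general deFinetti} guarantees that $\rho^{A_1\cdots A_k}_{\bm{w}}$ is close in \emph{trace norm} to $(\rho^{A_1}_{\bm{w}})^{\otimes k}$. The key advantage here over the non-adaptive case is that the trace-norm closeness lets us bound $\ex{\|\rho^{A_N}_{l,\bm{w},p}-\rho^{A_N}_{l,\bm{w}}\|_1}$ directly by $2\sqrt{2k^3d^2\log(d)/N}$ (the factor of $2$ coming from a triangle-inequality argument identical to the one in Lemma~\ref{lemma:equ of past and future states}, where one passes through the product state and uses data processing under the channel implementing $\cA$), and then Markov's inequality yields the $12\sqrt{2k^3d^2\log(d)/(N\eps'^2)}$ term after tracking constants carefully.

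For the \textbf{first term}, I would again use \eqref{general deFinetti} to replace $\rho^{A_1\cdots A_k}_{l,\bm{w}}$ by the i.i.d.\ state $(\rho^{A_N}_{l,\bm{w}})^{\otimes k}$ inside the probability that $\cA$ fails. Since $\cA$ is now a general (possibly coherent, adaptive) channel, I cannot split it into local measurements, but I do not need to: the trace-norm bound \eqref{general deFinetti} means that for \emph{any} channel the induced output distributions are close, so
\begin{align*}
\prs{l,\bm{w},p}{(p,\rho^{A_N}_{l,\bm{w}})\notin\textup{SUCCESS}_\eps} \le \sup_{l,\bm{w}}\delta_\cA\left(k,\left(\rho^{A_N}_{l,\bm{w}}\right)^{\otimes k},\eps\right) + 2\sqrt{\frac{2k^3d^2\log(d)}{N}},
\end{align*}
where the additive error is the expected trace distance from \eqref{general deFinetti} (data processing under $\cA$ can only decrease it). Combining the two terms gives the stated bound. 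The \textbf{main obstacle} I anticipate is the bookkeeping in the analogue of Lemma~\ref{lemma:equ of past and future states}: unlike the incoherent case, the prediction channel $\cA$ acts jointly on $A_1\cdots A_k$, so I must verify that the triangle-inequality decomposition relating $\rho^{A_N}_{l,\bm{w},p}$ to $\rho^{A_N}_{l,\bm{w}}$ still goes through when the conditioning event is defined via a general POVM element $\cA^\dagger(\proj{p})$ on the block $A_1\cdots A_k$ rather than a product measurement; this is precisely where the trace-norm strength of \eqref{general deFinetti}, as opposed to the weaker randomized-LOCC norm of Theorem~\ref{thm:gen-de finetti}, becomes essential, and it justifies why the dimension factor $d^2$ (and hence the worse $N=\Omega(k^3d^2\log d)$ scaling) is unavoidable in this general setting.
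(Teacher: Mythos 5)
Your proposal follows the paper's proof essentially verbatim: the same robustness-based decomposition of $\delta_\cB$ into the two terms of \eqref{eq:gen-alg-proof-two-terms}, the same use of the trace-norm de Finetti bound (Lemma~\ref{lem:general definetti}, i.e.\ \eqref{general deFinetti}) with data processing for the first term, and the same triangle-inequality analogue of Lemma~\ref{lemma:equ of past and future states} (the paper's Lemma~\ref{lem:rel (v,w) and w }, with the POVM elements $\cA^\dagger(\proj{p})$) plus Markov for the second. The only detail you leave implicit is that the de Finetti bound must be applied on $k+1$ systems (to retain $A_N$), which is where the constant $12$ arises via $4\sqrt{2(k+1)^3 d^2\log(d)/N}\le 12\sqrt{2k^3 d^2\log(d)/N}$ — exactly the bookkeeping you flagged.
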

\begin{remark}
	To achieve an error probability of at most $\delta$, one could start by determining a value for $k(\cA, \delta, \eps)$ such that for all $\bm{w}$, $\delta_\cA\left(k, (\rho^{A_N}_{\bm{w}})^{\otimes k}, \eps/2\right)\le \delta/2$. Subsequently, the total number of copies can be set to $$N= \frac{32\cdot 14^2d^2\log(d)}{\delta^2\eps^2}\cdot k(\cA, \delta, \eps)^3.$$ By doing this, we can ensure that  $	\delta_\cB(N, \rho^{A_1 \cdots A_N}, \eps)\le \delta $.
\end{remark}
In what follows we proceed to prove Theorem~\ref{thm:general-meas}. 
\begin{proof}[Proof of Theorem~\ref{thm:general-meas}]
	First, since we are using the informationally complete measurement  device $\cM_{\text{dist}}$, we can relate the difference between post-measurement states and the actual states. This along with an information theoretical analysis using the mutual information show that measuring using $\cM_{\text{dist}}$ a sufficiently large number of times, transforms the state approximately to an i.i.d.\ one. Infact, 
	the proof of Theorem 2.4.\ of \cite{SDP} together with the distortion with side information measurement device $\cM_{\text{dist}}$ of \cite{jee2020quasi} imply that for $k< N/2$:
	\begin{lemma}[\cite{SDP}, rephrased]\label{lem:general definetti}
		Let $\rho^{A_1\cdots A_N}$ be a permutation invariant state. For $k< N/2$, we have
  		\begin{align*}
			\frac{2}{N}\sum_{l=k+1}^{k+N/2} \exs{w_{k+1}, \dots, w_l}{\left\| \rho_{\bm{w}}^{A_1\cdots A_{k}}-\left(\rho_{\bm{w}}^{A_N}\right)^{\otimes k} \right\|_1} 
			\le  2\sqrt{\frac{2k^{3} d^{2} \log(d)^{} }{N^{}}}.
		\end{align*}
        where $\bm{w} = (w_{k+1}, \dots, w_{l})$ is the outcome of measuring each of the systems $A_{k+1} \dots A_{l}$ with the measurement $\cM_{\text{dist}}$.
	\end{lemma}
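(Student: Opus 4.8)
The plan is to mirror the proof of Theorem~\ref{thm:gen to non iid - iid alg}, replacing the randomized de Finetti theorem by Lemma~\ref{lem:general definetti} and exploiting that in Algorithm~\ref{alg-non-iid-general} the coherent measurement $\cA$ acts on $A_1\cdots A_k$ while $\cM_{\text{dist}}^{\otimes(l-k)}$ acts on the disjoint block $A_{k+1}\cdots A_l$; the two therefore commute, so we may condition on the calibration outcome $\bm{w}$ first and then run $\cA$ on the conditional state $\rho_{l,\bm{w}}^{A_1\cdots A_k}$. First I would apply the robustness assumption to split the error probability, exactly as in~\eqref{eq:ineq-theorem-non-adaptive-main}:
\begin{align*}
\delta_\cB(N,\rho^{A_1\cdots A_N},\eps+\eps') \le \prs{l,\bm{w},p}{(p,\rho_{l,\bm{w}}^{A_N})\notin\textup{SUCCESS}_\eps} + \prs{l,\bm{w},p}{\left\|\rho_{l,\bm{w},p}^{A_N}-\rho_{l,\bm{w}}^{A_N}\right\|_1>\eps'}.
\end{align*}

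For the first summand, conditioning on $(l,\bm{w})$ fixes the test state $\rho_{l,\bm{w}}^{A_N}$, hence fixes the ``bad'' set of predictions, so it suffices to compare the law of $p\sim\cA(\rho_{l,\bm{w}}^{A_1\cdots A_k})$ with that of $p\sim\cA\big((\rho_{l,\bm{w}}^{A_N})^{\otimes k}\big)$. Since measurement channels are contractive under the trace norm, the gap between the two laws on any event is at most $\tfrac12\|\rho_{l,\bm{w}}^{A_1\cdots A_k}-(\rho_{l,\bm{w}}^{A_N})^{\otimes k}\|_1$, while the second law contributes exactly the i.i.d.\ error $\delta_\cA\big(k,(\rho_{l,\bm{w}}^{A_N})^{\otimes k},\eps\big)$. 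Averaging over $(l,\bm{w})$ and invoking Lemma~\ref{lem:general definetti} bounds this summand by $\sup_{l,\bm{w}}\delta_\cA\big(k,(\rho_{l,\bm{w}}^{A_N})^{\otimes k},\eps\big)+2\sqrt{2k^3 d^2\log(d)/N}$, which is the last term of the claim.

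The main obstacle is the second summand — the general-measurement analog of Lemma~\ref{lemma:equ of past and future states} — where I must show that additionally conditioning the test copy on the coherent outcome $p$ barely perturbs it. Writing $\{M_p\}_{p\in\cP}$ for the POVM of $\cA$, the block of the cq-difference indexed by $p$ is $\ptr{A_1\cdots A_k}{(M_p\otimes\id)\rho_{l,\bm{w}}^{A_1\cdots A_k A_N}}-\tr{M_p\rho_{l,\bm{w}}^{A_1\cdots A_k}}\rho_{l,\bm{w}}^{A_N}$, so that
\begin{align*}
\exs{p\mid l,\bm{w}}{\left\|\rho_{l,\bm{w},p}^{A_N}-\rho_{l,\bm{w}}^{A_N}\right\|_1} = \left\|(\cA\otimes\id)\big(\rho_{l,\bm{w}}^{A_1\cdots A_k A_N}\big)-\cA(\rho_{l,\bm{w}}^{A_1\cdots A_k})\otimes\rho_{l,\bm{w}}^{A_N}\right\|_1.
\end{align*}
Inserting $(\cA\otimes\id)\big((\rho_{l,\bm{w}}^{A_N})^{\otimes(k+1)}\big)=\cA\big((\rho_{l,\bm{w}}^{A_N})^{\otimes k}\big)\otimes\rho_{l,\bm{w}}^{A_N}$ and using the triangle inequality together with contractivity of $\cA\otimes\id$ bounds the right-hand side by $\|\rho_{l,\bm{w}}^{A_1\cdots A_k A_N}-(\rho_{l,\bm{w}}^{A_N})^{\otimes(k+1)}\|_1+\|\rho_{l,\bm{w}}^{A_1\cdots A_k}-(\rho_{l,\bm{w}}^{A_N})^{\otimes k}\|_1$.

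Finally I would average over $(l,\bm{w})$ and apply Lemma~\ref{lem:general definetti} to the $k$-block directly and to the $(k+1)$-block — this last application is legitimate because $A_1,\dots,A_k,A_N$ are all unmeasured and $\rho$ is permutation invariant, so the post-$\bm{w}$ state on $A_1\cdots A_k A_N$ is permutation invariant and the SDP argument applies verbatim to a block of size $k+1$. Using $(k+1)^3\le 8k^3$, the expected perturbation is $\cO\big(\sqrt{k^3 d^2\log(d)/N}\big)$, and Markov's inequality with threshold $\eps'$ turns it into the middle term $12\sqrt{2k^3 d^2\log(d)/(N\eps'^2)}$. Summing the two contributions and collecting constants yields the stated bound.
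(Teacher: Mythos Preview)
Your proposal does not prove the stated Lemma~\ref{lem:general definetti} at all: you take that lemma as a black box and use it (together with robustness, contractivity, and Markov's inequality) to derive the error-probability bound of Theorem~\ref{thm:general-meas}. The lemma itself is a de Finetti-type approximation in trace norm, cited from \cite{SDP}; the paper does not prove it either but merely restates it and then invokes it inside the proof of Theorem~\ref{thm:general-meas}. So as a proof of the displayed statement, your write-up is off target: nothing in it addresses why the $\cM_{\text{dist}}$-conditioned state is $2\sqrt{2k^3 d^2\log(d)/N}$-close to i.i.d.\ in trace norm (that argument requires the low-distortion property of $\cM_{\text{dist}}$ and the information-theoretic chain-rule analysis of \cite{SDP}).

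If, on the other hand, the intended target was Theorem~\ref{thm:general-meas}, then your argument is essentially identical to the paper's: the same robustness split into two summands, the same comparison of $\cA(\rho_{l,\bm{w}}^{A_1\cdots A_k})$ with $\cA((\rho_{l,\bm{w}}^{A_N})^{\otimes k})$ via trace-norm contractivity and Lemma~\ref{lem:general definetti} for the first summand, and for the second summand the same triangle-inequality pivot through $(\rho_{l,\bm{w}}^{A_N})^{\otimes(k+1)}$ that the paper records as Lemma~\ref{lem:rel (v,w) and w}, followed by Markov's inequality. The paper writes the $(k{+}1)$-block as $\rho_{l,\bm{w}}^{A_1\cdots A_{k+1}}$ (using permutation invariance to relabel the unmeasured system $A_N$ as $A_{k+1}$) and bounds both pieces by the single $(k{+}1)$-block term, whereas you keep the $k$- and $(k{+}1)$-blocks separate; this is a cosmetic difference and leads to the same constants.
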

 We write the error probability as
	\begin{align}
	\notag	\delta_\cB(N, \rho^{A_1 \cdots A_N}, \eps+\eps')
  &= \prs{l,\bm{w},p}{(p, \rho^{A_N}_{l,\bm{w}, p}) \not\in \text{SUCCESS}_{\eps + \eps'}} \\
  \notag &= \prs{l,\bm{w},p}{(p, \rho^{A_N}_{l,\bm{w}, p}) \not\in \text{SUCCESS}_{\eps + \eps'}, \|\rho^{A_N}_{l,\bm{w}, p} - \rho^{A_N}_{l,\bm{w}} \|_1 \leq \eps'} \\
  \notag &\quad + \prs{l,\bm{w},p}{(p, \rho^{A_N}_{l,\bm{w}, p}) \not\in \text{SUCCESS}_{\eps + \eps'}, \|\rho^{A_N}_{l,\bm{w}, p} - \rho^{A_N}_{l,\bm{w}} \|_1 > \eps'} \\
  &\leq \prs{l,\bm{w},p}{(p, \rho^{A_N}_{l,\bm{w}}) \not\in \text{SUCCESS}_{\eps}} + \prs{l,\bm{w},p}{ \|\rho^{A_N}_{l,\bm{w}, p} - \rho^{A_N}_{l,\bm{w}} \|_1 > \eps'},
  \label{eq:gen-alg-proof-two-terms}
  \end{align}
  where we use the robustness condition.
Using Lemma~\ref{lem:general definetti} and the triangle inequality, the first term can be bounded as follows:
\begin{align*}
    &\prs{l, (p, \bm{w}) \sim (\cA \otimes\cM_{\text{dist}}^{\otimes (l-k)})(\rho^{A_1 \dots A_l})}{(p, \rho^{A_N}_{l,\bm{w}}) \not\in \text{SUCCESS}_{\eps}}
    \\&=\exs{l,\bm{w}}{\prs{p \sim \cA(\rho_{l,\bm{w}}^{A_1 \dots A_k})}{(p, \rho^{A_N}_{l,\bm{w}}) \not\in \text{SUCCESS}_{\eps}}}
    \\&\leq  
    \exs{l,\bm{w}}{\prs{p \sim \cA\left((\rho_{l,\bm{w}}^{A_N})^{\otimes k}\right)}{(p, \rho^{A_N}_{l,\bm{w}}) \not\in \text{SUCCESS}_{\eps}}} + 2\sqrt{\frac{2k^{3} d^{2} \log(d)^{} }{N^{}}}
    \\&\leq  
    \sup_{l, \bm{w}} \delta_{\cA}\left(k, \left(\rho_{l,\bm{w}}^{A_N}\right)^{\otimes k}, \eps \right) +2 \sqrt{\frac{2k^{3} d^{2} \log(d)^{} }{N^{}}}
\end{align*}

For the second term of \eqref{eq:gen-alg-proof-two-terms}, we apply the following lemma:
	\begin{lemma}\label{lem:rel (v,w) and w }

	Let $\eps'>0$,  $1\le k< N/2 $ and $l\sim \unif\{k+1, \dots, k+N/2\}$. Let $\bm{w}=(w_{k+1}, \dots , w_l)$ and $p$ be the outcomes of measuring the state $\rho$ with the measurement $\cM_{\text{dist}}^{\otimes (l-k)}$ on systems $A_{k+1} \dots A_{l}$ and $\cA$ on $A_1 \dots A_k$. The following inequality holds:
		\begin{align*}
			\prs{l, \bm{w}, p}{\|\rho_{l,\bm{w},p}^{A_N}- \rho^{A_N}_{l,\bm{w}}\|_1>\eps'}\le 12\sqrt{\frac{2k^{3} d^{2} \log(d)^{} }{N^{} \eps'^2}}.
		\end{align*}
	\end{lemma}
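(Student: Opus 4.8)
The plan is to treat this as a pure statement about how much conditioning on the prediction $p$ can disturb the reduced test state, so the robustness assumption and the \textup{SUCCESS} structure play no role. Writing $\{A_p\}_{p\in\cP}$ for the POVM on $A_1\cdots A_k$ that describes $\cA$, and $q(p)=\tr{(A_p\otimes\dI)\rho^{A_1\cdots A_k A_N}_{l,\bm{w}}}$ for the conditional probability of outcome $p$, the defining relation $q(p)\,\rho^{A_N}_{l,\bm{w},p}=\ptr{A_1\cdots A_k}{(A_p\otimes\dI)\rho^{A_1\cdots A_k A_N}_{l,\bm{w}}}$ lets me rewrite the target expectation as
\[
\exs{l,\bm{w},p}{\left\|\rho^{A_N}_{l,\bm{w},p}-\rho^{A_N}_{l,\bm{w}}\right\|_1}=\exs{l,\bm{w}}{\sum_{p}\left\|\ptr{A_1\cdots A_k}{(A_p\otimes\dI)\rho^{A_1\cdots A_k A_N}_{l,\bm{w}}}-q(p)\,\rho^{A_N}_{l,\bm{w}}\right\|_1}.
\]
The key idea is to compare $\rho^{A_1\cdots A_k A_N}_{l,\bm{w}}$ against the i.i.d.\ reference $\xi:=\left(\rho^{A_N}_{l,\bm{w}}\right)^{\otimes k+1}$. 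Since $\xi$ is a product across the cut $(A_1\cdots A_k):A_N$, conditioning it on $p$ leaves the $A_N$-marginal exactly equal to $\rho^{A_N}_{l,\bm{w}}$, so the analogous summand for $\xi$ vanishes identically; this is what makes the reference state useful.

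Concretely, inserting $\ptr{A_1\cdots A_k}{(A_p\otimes\dI)\xi}=\tr{A_p\left(\rho^{A_N}_{l,\bm{w}}\right)^{\otimes k}}\rho^{A_N}_{l,\bm{w}}$ and using the triangle inequality, each summand is at most
\[
\left\|\ptr{A_1\cdots A_k}{(A_p\otimes\dI)\left(\rho^{A_1\cdots A_k A_N}_{l,\bm{w}}-\xi\right)}\right\|_1+\left|\tr{(A_p\otimes\dI)\left(\rho^{A_1\cdots A_k A_N}_{l,\bm{w}}-\xi\right)}\right|.
\]
Summing the first term over $p$ is exactly the trace norm of $(\cA\otimes\id^{A_N})\left(\rho^{A_1\cdots A_k A_N}_{l,\bm{w}}-\xi\right)$, which by contractivity of $\|\cdot\|_1$ under the channel $\cA\otimes\id$ is bounded by $\left\|\rho^{A_1\cdots A_k A_N}_{l,\bm{w}}-\xi\right\|_1$; summing the second term over $p$ is a classical total-variation distance, hence at most $\left\|\rho^{A_1\cdots A_k}_{l,\bm{w}}-\left(\rho^{A_N}_{l,\bm{w}}\right)^{\otimes k}\right\|_1$ because $\{A_p\}$ is a POVM. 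I would bound the second piece directly by Lemma~\ref{lem:general definetti}, and the first piece by the same lemma applied with $k+1$ systems in place of $k$ — legitimate because, by permutation invariance of the post-$\cM_{\text{dist}}$ state, $A_1\cdots A_k A_N$ are $k+1$ of the exchangeable unmeasured systems — using $(k+1)^3\le 8k^3$ for $k\ge 1$ to preserve the $k^3 d^2\log(d)/N$ scaling.

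Combining the two bounds, the expectation is $\cO\!\left(\sqrt{2k^3 d^2\log(d)/N}\right)$, and a crude accounting of the two constants yields the clean bound $12\sqrt{2k^3 d^2\log(d)/N}$; Markov's inequality then gives the stated tail bound $12\sqrt{2k^3 d^2\log(d)/(N\eps'^2)}$. The only genuinely delicate step is the first term: the data-processing bound forces me to compare the \emph{joint} state on $A_1\cdots A_k A_N$ (not merely its $A_1\cdots A_k$ marginal) with an i.i.d.\ state, so I must invoke the trace-norm de Finetti estimate for $k+1$ systems and verify that the attendant shift in the range of $l$ and in the number of systems measured by $\cM_{\text{dist}}$ is harmless. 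Everything else reduces to triangle inequalities, the POVM contraction of total variation, and contractivity of the trace norm under CPTP maps.
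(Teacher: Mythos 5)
Your proof is correct and follows essentially the same route as the paper's: the same use of the defining relation $q(p)\,\rho^{A_N}_{l,\bm{w},p}=\ptr{A_1\cdots A_k}{(A_p\otimes\dI)\rho^{A_1\cdots A_k A_N}_{l,\bm{w}}}$, the same insertion of the i.i.d.\ reference $\left(\rho^{A_N}_{l,\bm{w}}\right)^{\otimes k+1}$, the same triangle-inequality split into a data-processed trace-norm term plus a total-variation term, the same invocation of Lemma~\ref{lem:general definetti} at $k+1$ systems (harmless, as you note, by permutation invariance of the conditional state and the unchanged distribution of the number of measured systems), and the same Markov finish. The only cosmetic difference is that the paper bounds both terms by the $(k+1)$-system de Finetti quantity (a factor of $2$), whereas you bound the second term by the $k$-system version directly; the constant $12$ works out either way.
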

	\begin{proof}
 Denote by $\{M_{p}\}_{p}$ the elements of POVM corresponding to $\cA$. 
		 Lemma~\ref{lem:general definetti} together with the triangle inequality imply:
	\begin{align*}
		\exs{l,\bm{w},p}{\|\rho_{l,\bm{w},p}^{A_N}- \rho^{A_N}_{l,\bm{w}}\|_1} 
  &=\exs{l,\bm{w}}{\sum_{p} \tr{M_{p}\rho_{l,\bm{w}}^{A_1 \cdots A_k}}\left\| \rho_{l,\bm{w},p}^{A_N}-\rho_{l,\bm{w}}^{A_N} \right\|_1}
		\\&=
  \exs{l,\bm{w} }{ \sum_{p} \left\| \tr{M_{p}\rho_{l,\bm{w}}^{A_1 \cdots A_k}}\rho_{l,\bm{w},p}^{A_N}-\tr{M_{p}\rho_{l,\bm{w}}^{A_1 \cdots A_k}}\rho_{l,\bm{w}}^{A_N} \right\|_1}
		\\&\le \exs{l,\bm{w} }{ \sum_{p} \left\| \tr{M_{p}\rho_{l,\bm{w}}^{A_1 \cdots A_k}}\rho_{l,\bm{w},p}^{A_N}-\tr{M_{p}(\rho_{l,\bm{w}}^{A_N})^{\otimes k}}\rho_{l,\bm{w}}^{A_N} \right\|_1}
		\\&\quad +  \exs{l,\bm{w} }{ \sum_{p} \left\| \tr{M_{p}(\rho_{l,\bm{w}}^{A_N})^{\otimes k}}\rho_{l,\bm{w}}^{A_N}- \tr{M_{p}\rho_{l,\bm{w}}^{A_1 \cdots A_k}}\rho_{l,\bm{w}}^{A_N} \right\|_1}
		\\&= \exs{l,\bm{w} }{ \sum_{p}\left\| \ptr{A_1\cdots A_k}{M_{p}\otimes \dI \left(\rho_{l,\bm{w}}^{A_1\cdots A_{k+1}}-\left(\rho_{l,\bm{w}}^{A_N}\right)^{\otimes k+1} \right)} \right\|_1}
		\\&\quad +\exs{l,\bm{w} }{ \sum_{p} \left| \tr{M_{p}(\rho_{l,\bm{w}}^{A_N})^{\otimes k}}- \tr{M_{p}\rho_{l,\bm{w}}^{A_1 \cdots A_k}} \right|}
		\\&\le 2 \,\exs{l,\bm{w} }{  \left\| \rho_{l,\bm{w}}^{A_1\cdots A_{k+1}}-\left(\rho_{l,\bm{w}}^{A_N}\right)^{\otimes k+1} \right\|_1}
		\\&\le 4\sqrt{\frac{2(k+1)^{3} d^{2} \log(d)^{} }{N^{}}} \leq 12\sqrt{\frac{2k^{3} d^{2} \log(d)^{} }{N^{}}},       
	\end{align*}
	where we used the equality between states $\ptr{A_1\cdots A_k}{(M_{p}\otimes \dI) \rho_{\bm{w}}^{A_1\cdots A_{k+1}}} = \tr{M_{p} \rho_{\bm{w}}^{A_1\cdots A_k}} \rho_{\bm{w},p}^{A_N}$ and the inequality $\sum_{p}  \|M_{p} X\|_1 \le \sum_{p}  \tr{M_{p} |X|} =  \|X\|_1$ as $\sum_{p}  M_{p}= \dI$. 
		Therefore, by  Markov's inequality we deduce:
	\begin{align*}
		\prs{l, \bm{v}, \bm{w}}{\|\rho_{\bm{v,w}}^{A_N}- \rho^{A_N}_{\bm{w}}\|_1>\eps'}\le \frac{  \exs{l,\bm{v,w} }{\|\rho_{\bm{v,w}}^{A_N}- \rho^{A_N}_{\bm{w}}\|_1} }{\eps'}\le 12\sqrt{\frac{2k^{3} d^{2} \log(d)^{} }{N^{} \eps'^2}}.
	\end{align*}
	\end{proof}
 \end{proof}

\section*{Acknowledgments} We would like to thank Mario Berta and Philippe Faist for helpful discussions as well as the reviewers for their multiple comments which significantly improved the manuscript. 
\\
We acknowledge support from the European Research Council (ERC Grant AlgoQIP, Agreement No. 851716), (ERC Grant Agreement No. 948139),  {(ERC Grant Agreement No. 101117138)}, from the European Union’s Horizon 2020 research and innovation programme under Grant Agreement No 101017733 within the QuantERA II Programme and from the PEPR integrated project EPiQ ANR-22-PETQ-0007 part of Plan France 2030, {as well as the QuantumReady and HPQC projects of the The Austrian Research Promotion Agency (FFG).}

\printbibliography

\newpage
\appendix

\section{Conditioning on the permutation renders basic learning tasks impossible}
\label{sec:hide-permutation}
Consider the following classical problem $\textup{SUCCESS}_{\eps} = \{ (p, \sigma) \in [0,1] \times \mathrm{D}(\{0,1\}) : |p - \sigma(1) | \leq \eps\}$. Here, $\mathrm{D}(\{0,1\})$ refers to the set of probability distributions over $\{0,1\}$. A strategy in this case is described by a possibly random function $F : \{0,1\}^{N-1} \to [0,1]$. 
Using the definition of error probability in~\eqref{eq:def-error-prob-prime}, assume we have for any distribution $P^{A_1 \dots A_N}$
\begin{align}
\label{eq:appendix-delta-prime-cond}
\exs{\pi}{\delta'_{F}(N, P^{\pi}, \eps)} \leq \delta.
\end{align}
Consider the distribution putting mass on a single string $x_1, \dots, x_N$. Then~\eqref{eq:appendix-delta-prime-cond} can be written as
\begin{align}\label{eq:appendix-delta-prime-cond-dirac}
\exs{\pi}{\pr{|x_{\pi(N)} - F(x_{\pi(1)}, x_{\pi(2)}, \dots, x_{\pi(N-1)})| > \eps}} \leq \delta.
\end{align}
Now, assume $N$ is even and let $x_i = 0$ for $i \leq N/2$ and $x_i = 1$ for $i > N/2$. We write
\begin{align}
\delta &\geq \pr{|x_{\pi(N)} - F(x_{\pi(1)}, x_{\pi(2)}, \dots, x_{\pi(N-1)})| > \eps} \notag\\
&= \pr{\pi(N) \leq N/2} \pr{|0 - F(x_{\pi(1)}, x_{\pi(2)}, \dots, x_{\pi(N-1)})| > \eps | \pi(N) \leq N/2} \notag\\
&\quad + \pr{\pi(N) > N/2} \pr{|1 - F(x_{\pi(1)}, x_{\pi(2)}, \dots, x_{\pi(N-1)})| > \eps | \pi(N) > N/2} \notag\\
&\geq \frac{1}{2} \pr{|1 - F(x_{\pi(1)}, x_{\pi(2)}, \dots, x_{\pi(N-1)})| > \eps | \pi(N) > N/2}.
\label{eq:example_cond_permutation_1}
\end{align}
Note that conditioned on $\pi(N) > N/2$, the string $(x_{\pi(1)}, x_{\pi(2)}, \dots, x_{\pi(N-1)})$ is uniformly distributed on all bitstrings that have $N/2$ zeros and $N/2 - 1$ ones.
We can also apply the success criterion~\eqref{eq:appendix-delta-prime-cond-dirac} for the bitstring $x_i = 0$ for $i \leq N/2 + 1$ and $x_i = 1$ for $i > N/2 + 1$ which gives
\begin{align}
\delta 
&\geq \frac{N/2 + 1}{N} \,\pr{|F(x_{\pi(1)}, x_{\pi(2)}, \dots, x_{\pi(N-1)})| > \eps | \pi(N) \leq N/2 + 1}.
\label{eq:example_cond_permutation_2}
\end{align}
Note that we also have that conditioned on $\pi(N) \le  N/2 + 1$, the string $(x_{\pi(1)}, x_{\pi(2)}, \dots, x_{\pi(N-1)})$ is uniformly distributed on all bitstrings that have $N/2$ zeros and $N/2 - 1$ ones. Let $Y_1, \dots, Y_{N-1}$ be a random variable chosen according to this distribution. Then,~\eqref{eq:example_cond_permutation_1} and~\eqref{eq:example_cond_permutation_2} can be expressed as $\pr{|1-F(Y_1, Y_2, \dots, Y_{N-1})| > \eps} \leq 2\delta$ and $\pr{|F(Y_1, Y_2, \dots, Y_{N-1})| > \eps} \leq \frac{N}{N/2+1}\delta \le 2\delta$ respectively. For $\eps < 1/2$, the events $|1-F(Y_1, Y_2, \dots, Y_{N-1})| \leq \eps$ and $|F(Y_1, Y_2, \dots, Y_{N-1})| \leq \eps$ are disjoint, this implies that $\delta \geq 1/4$.

\section{Illustration of the randomized local  de Finetti theorem~\ref{thm:gen-de finetti}}\label{app-illustration}
In this section, we illustrate Theorem~\ref{thm:gen-de finetti} for a specific permutation invariant state and a specific distribution of measurements.
\begin{example} Let the dimension be $d=2$.
    The quantum state we consider  is  $$\rho= \int \diff{\mathrm{Haar}(\varphi)} \proj{\varphi}^{\otimes N}.$$
 The law of measurement devices is  the Dirac delta distribution on the simple POVM $\cM=\{\proj{0}, \proj{1}\}$. Observe that any Dirac delta distribution of  measurement devices can be reduced to this example because of the invariance of Haar measure by unitary conjugation. Let $l\sim \unif\{1,\dots,\frac{N}{2}\}$ be the number of systems to be measured. For $\bm{w}\in \{0,1\}^l$, define $M_{\bm{w}}= \bigotimes_{t=1}^{l} \proj{w_t}$. 
 We can write the post-measurement state as follows: 
\begin{align*}
	\rho_{l, \bm{w}}^{A_1\cdots A_k}=\frac{\int \diff{\mathrm{Haar}(\varphi)} \;\bra{\varphi}^{\otimes l} M_{\bm{w}} \;\ket{\varphi}^{\otimes l}   \proj{\varphi}^{\otimes k} }{\int \diff{\mathrm{Haar}(\phi)} \;\bra{\phi}^{\otimes l} M_{\bm{w}} \ket{\phi}^{\otimes l}  }.
\end{align*}
The measurement channel related to the POVM $\cM=\{\proj{0}, \proj{1}\}$ is :
\begin{align*}
	\Lambda(\rho)= \bra{0}\rho\ket{0} \proj{0}+ \bra{1}\rho\ket{1} \proj{1}.
\end{align*}
If we apply the channel $\id\otimes \Lambda_2\otimes \cdots \otimes \Lambda_k = \id\otimes \Lambda\otimes \cdots \otimes \Lambda$ to the post-measurement state $\rho_{\bm{w}}^{A_1\cdots A_k}$ we obtain by writing $\ket{\phi}=\begin{pmatrix}
    \alpha_0\\ \alpha_1
\end{pmatrix}$ and $p= |\alpha_1|^2$:
\begin{align*}
	&\id\otimes \Lambda\otimes \cdots \otimes \Lambda(\rho_{\bm{w}}^{A_1\cdots A_k})
 \\&= \frac{\int \diff{\mathrm{Haar}(\varphi)} \bra{\varphi}^{\otimes l} M_{\bm{w}} \ket{\varphi}^{\otimes l}   \proj{\varphi} \otimes \Lambda(\proj{\varphi})^{\otimes k-1}  }{\int \diff{\mathrm{Haar}(\phi)} \bra{\phi}^{\otimes l} M_{\bm{w}} \ket{\phi}^{\otimes l}  }
	\\&=\frac{\int \diff{\mathrm{Haar}(\varphi)} (|\alpha_0|^2)^{l-|\bm{w}|} (|\alpha_1|^2)^{|\bm{w}|}   \begin{pmatrix}
			|\alpha_0|^2 & \alpha_0\bar{\alpha}_1
			\\
			\alpha_1\bar{\alpha}_0 & |\alpha_1|^2
		\end{pmatrix}\otimes \begin{pmatrix}
			|\alpha_0|^2 & 0
			\\0 & |\alpha_1|^2
		\end{pmatrix}^{\otimes k-1}  }{\int \diff{\mathrm{Haar}(\varphi)} (|\alpha_0|^2)^{l-|\bm{w}|} (|\alpha_1|^2)^{|\bm{w}|}  }
		\\&=\frac{\int_0^1 \diff{p} \int_0^{2\pi}\frac{1}{2\pi}\diff{\theta}  (1-p)^{l-|\bm{w}|} p^{|\bm{w}|}   \begin{pmatrix}
			1-p & \sqrt{p(1-p)} e^{\mathbf{i}\theta}
			\\
		\sqrt{p(1-p)} e^{-\mathbf{i}\theta} & p
		\end{pmatrix}\otimes \begin{pmatrix}
			1-p & 0
			\\0 & p
		\end{pmatrix}^{\otimes k-1}  }{ \int_0^1 \diff{p} \int_0^{2\pi}\frac{1}{2\pi}\diff{\theta}  (1-p)^{l-|\bm{w}|} p^{|\bm{w}|}   }
	\\&= \frac{\int_0^1 \diff{p}  (1-p)^{l-|\bm{w}|} p^{|\bm{w}|}    \begin{pmatrix}
			1-p & 0
			\\0 & p
		\end{pmatrix}^{\otimes k}  }{ \int_0^1 \diff{p}   (1-p)^{l-|\bm{w}|} p^{|\bm{w}|}   } 
	\\&= \int_0^1 \diff{p} \; (l+1)\binom{l}{|\bm{w}|} (1-p)^{l-|\bm{w}|} p^{|\bm{w}|}    \begin{pmatrix}
		1-p & 0
		\\0 & p
	\end{pmatrix}^{\otimes k}.
\end{align*}
By tracing out all but the first system, we obtain an expression of  the reduced post-measurement state:
\begin{align*}
	\rho_{\bm{w}}^{A_1}&= \int_0^1 \diff{p} \; (l+1)\binom{l}{|\bm{w}|} (1-p)^{l-|\bm{w}|} p^{|\bm{w}|}    \begin{pmatrix}
		1-p & 0
		\\0 & p
	\end{pmatrix}^{} = \begin{pmatrix}
	1-\tfrac{|\bm{w}|+1}{l+2} & 0
	\\0 & \tfrac{|\bm{w}|+1}{l+2}
	\end{pmatrix}^{}.
\end{align*}
Hence if we denote by $p_\star= \tfrac{|\bm{w}|+1}{l+2}$ we have: 
\begin{align*}
&	\left\|	\id\otimes \Lambda\otimes \cdots \otimes \Lambda(\rho_{\bm{w}}^{A_1\cdots A_k})- 	\id\otimes \Lambda\otimes \cdots \otimes \Lambda\left((\rho_{\bm{w}}^{A_1})^{\otimes k}\right) \right\|_1
	\\&= \left\|	\int_0^1 \diff{p} \; (l+1)\binom{l}{|\bm{w}|} (1-p)^{l-|\bm{w}|} p^{|\bm{w}|} \left[   \begin{pmatrix}
		1-p & 0
		\\0 & p
	\end{pmatrix}^{\otimes k}  -\begin{pmatrix}
	1-p_\star & 0
	\\0 & p_\star
	\end{pmatrix}^{\otimes k}  \right] \right\|_1.
\end{align*}
By Stirling's approximation we have:
\begin{align*}
	\binom{l}{|\bm{w}|} (1-p)^{l-|\bm{w}|} p^{|\bm{w}|} \le \frac{1}{\sqrt{2\pi |\bm{w}| (1-|\bm{w}|/l) }} \exp\left(-l\KL\left( \tfrac{|\bm{w}|}{l} \big\|p \right)\right).
\end{align*}
Hence if $S_\eps= \left\{p\in [0,1]: \KL\left( \tfrac{|\bm{w}|}{l} \big\|p \right)> \eps \right\}$ we have by the triangle inequality:
\begin{align*}
   &\left\|	\int_{S_\eps} \diff{p} \; (l+1)\binom{l}{|\bm{w}|} (1-p)^{l-|\bm{w}|} p^{|\bm{w}|} \left[   \begin{pmatrix}
		1-p & 0
		\\0 & p
	\end{pmatrix}^{\otimes k}  -\begin{pmatrix}
		1-p_\star & 0
		\\0 & p_\star
	\end{pmatrix}^{\otimes k}  \right] \right\|_1
	\\ &\le 2 \int_{S_\eps} \diff{p} \; \frac{(l+1)}{\sqrt{2\pi |\bm{w}| (1-|\bm{w}|/l) }} \exp\left(-l\KL\left( \tfrac{|\bm{w}|}{l} \big\|p \right)\right)
 \le  \frac{2(l+1)e^{-l\eps }}{\sqrt{2\pi |\bm{w}| (1-|\bm{w}|/l) }}.
\end{align*}
On the other hand, we have by the triangle and Pinsker's inequalities:
\begin{align*}
	&\left\|	\int_{S^c_\eps} \diff{p} \; (l+1)\binom{l}{|\bm{w}|} (1-p)^{l-|\bm{w}|} p^{|\bm{w}|} \left[   \begin{pmatrix}
		1-p & 0
		\\0 & p
	\end{pmatrix}^{\otimes k}  -\begin{pmatrix}
		1-p_\star & 0
		\\0 & p_\star
	\end{pmatrix}^{\otimes k}  \right] \right\|_1
	\\ &\le \int_{S^c_\eps} \diff{p} \; (l+1)\binom{l}{|\bm{w}|} (1-p)^{l-|\bm{w}|} p^{|\bm{w}|}\left\|	    \begin{pmatrix}
		1-p & 0
		\\0 & p
	\end{pmatrix}^{\otimes k}  -\begin{pmatrix}
		1-p_\star & 0
		\\0 & p_\star
	\end{pmatrix}^{\otimes k}   \right\|_1
		\\ &\le \int_{S^c_\eps} \diff{p} \; (l+1)\binom{l}{|\bm{w}|} (1-p)^{l-|\bm{w}|} p^{|\bm{w}|}\left\|	    \begin{pmatrix}
		1-p & 0
		\\0 & p
	\end{pmatrix}^{\otimes k}  -\begin{pmatrix}
		1-|\bm{w}|/l & 0
		\\0 & |\bm{w}|/l
	\end{pmatrix}^{\otimes k}   \right\|_1
		\\ &\quad+ \int_{S^c_\eps} \diff{p} \; (l+1)\binom{l}{|\bm{w}|} (1-p)^{l-|\bm{w}|} p^{|\bm{w}|}\left\|	    \begin{pmatrix}
		1-|\bm{w}|/l & 0
		\\0 & |\bm{w}|/l
	\end{pmatrix}^{\otimes k}  -\begin{pmatrix}
		1-p_\star & 0
		\\0 & p_\star
	\end{pmatrix}^{\otimes k}   \right\|_1
	\\&\le \int_{S^c_\eps} \diff{p} \; (l+1)\binom{l}{|\bm{w}|} (1-p)^{l-|\bm{w}|} p^{|\bm{w}|} \sqrt{2D\left( \begin{pmatrix}
			1-|\bm{w}|/l & 0
			\\0 & |\bm{w}|/l
		\end{pmatrix}^{\otimes k}  \bigg\|\begin{pmatrix}
			1-p & 0
			\\0 & p
		\end{pmatrix}^{\otimes k}  \right)}
	\\&\quad+ \sqrt{2D\left( \begin{pmatrix}
			1-|\bm{w}|/l & 0
			\\0 & |\bm{w}|/l
		\end{pmatrix}^{\otimes k}  \bigg\|\begin{pmatrix}
			1-p_\star  & 0
			\\0 & p_\star
		\end{pmatrix}^{\otimes k}  \right)}
	\\&= \int_{S^c_\eps} \diff{p} \; (l+1)\binom{l}{|\bm{w}|} (1-p)^{l-|\bm{w}|} p^{|\bm{w}|} \sqrt{2k \KL\left(\tfrac{|\bm{w}|}{l}  \big\| p\right)} + \sqrt{2k \KL\left(\tfrac{|\bm{w}|}{l}  \big\| p_\star\right)}
	\\&\le \sqrt{2k\eps} + \sqrt{2k \KL\left(\tfrac{|\bm{w}|}{l} \big\| \tfrac{|\bm{w}|+1}{l+2}\right)} \le \sqrt{2k\eps} + \sqrt{\frac{4k}{l}}.
\end{align*}
Therefore by the triangle inequality we deduce that:
\begin{align*}
	&	\left\|	\id\otimes \Lambda\otimes \cdots \otimes \Lambda(\rho_{\bm{w}}^{A_1\cdots A_k})- 	\id\otimes \Lambda\otimes \cdots \otimes \Lambda\left((\rho_{\bm{w}}^{A_1})^{\otimes k}\right) \right\|_1
	\\&= \left\|	\int_0^1 \diff{p} \; (l+1)\binom{l}{|\bm{w}|} (1-p)^{l-|\bm{w}|} p^{|\bm{w}|} \left[   \begin{pmatrix}
		1-p & 0
		\\0 & p
	\end{pmatrix}^{\otimes k}  -\begin{pmatrix}
		1-p_\star & 0
		\\0 & p_\star
	\end{pmatrix}^{\otimes k}  \right] \right\|_1
	\\&\le   \frac{2(l+1)e^{-l\eps }}{\sqrt{2\pi |\bm{w}| (1-|\bm{w}|/l) }}+\sqrt{2k\eps} + \sqrt{\frac{4k}{l}}
	\\&\le \frac{2}{(l+1)\sqrt{2\pi |\bm{w}| (1-|\bm{w}|/l)}  }+ \sqrt{\frac{2k\log\left((l+1)^2\right)}{l}}+ \sqrt{\frac{4k}{l}}
	\le \sqrt{\frac{9k\log\left(l+1\right)}{l}}
\end{align*}
where the last inequalities are achieved for  $\eps= \frac{2}{l}\log\left(l+1\right)$. Now if we take the average under $l\sim \unif\{1,\dots,\frac{N}{2}\}$:
\begin{align*}
    \exs{l}{\left\|	\id\otimes \Lambda^{\otimes k-1} (\rho_{l, \bm{w}}^{A_1\cdots A_k})- 	\id\otimes \Lambda^{\otimes k-1}\left((\rho_{l, \bm{w}}^{A_1})^{\otimes k}\right) \right\|_1^2}&\le \frac{2}{N}\sum_{l=1}^{N/2}\frac{9k\log\left(l+1\right)}{l}
    \\&\le \frac{18k\log^2(N)}{N}.
\end{align*}
Finally by Cauchy Schwarz's inequality
\begin{align*}
    \exs{l}{\left\|	\id\otimes \Lambda^{\otimes k-1} (\rho_{l, \bm{w}}^{A_1\cdots A_k})- 	\id\otimes \Lambda^{\otimes k-1}\left((\rho_{l, \bm{w}}^{A_1})^{\otimes k}\right) \right\|_1}&\le \sqrt{\frac{18k\log^2(N)}{N}}.
\end{align*}
\end{example}
\section{Generalizing the i.i.d.\ setting without calibration information}\label{App:w/ side information}
A potential objection to Algorithm~\ref{alg-non-iid} might be that it produces more information than strictly required. Indeed, in addition to the prediction provided by $\cA\left(\rho_{\bm{w}}^{A_1\cdots A_k}\right)$, Algorithm~\ref{alg-non-iid} also furnishes the observations $\bm{w}$ that lead to the given prediction. These observations  hold calibration-related data, intended for future utilization. Notably, they are not essential for the immediate prediction task at hand.
In this section, we introduce a framework for extending non-adaptive algorithms to the non-i.i.d.\ setting without returning calibration information. This extension is achievable for a broad range of problems that can be defined by a function under reasonable assumptions.
{Instead of Definition \ref{def:error proba non-iid} for the error probability with calibration, we will use the following definition of the error probability without calibration (See Figure~\ref{Fig:problem formulation-calibration} for an illustration of algorithms without and with calibration information.):}

{
\begin{definition}[Error probability in the non-i.i.d.\ setting (no calibration data)]\label{def:error proba non-iid-w/ calibration}
	Let $N\ge 1$ be a positive integer, $A_1 \cong A_2 \cong \cdots \cong A_N$ be $N$ isomorphic quantum systems.  
	Let $\rho^{A_1 \cdots A_N}\in \mathrm{D}(A_1\cdots A_N)$. A learning algorithm $\cB : \qsys{A_1 \dots A_{N-1}} \to \csys{\cP}$ has error probability on $\rho$ given by:
  	\begin{align*}
		\delta_{\cB}(N, \rho^{A_1 \cdots A_N},\eps)= \prs{p \sim \cB(\rho)}{\left(p, \rho_{p}^{A_N}\right)\notin  \textup{SUCCESS}_\eps} ,
	\end{align*}
 where $p$ follows the probability measure $\cB(\rho^{A_1 \dots A_{N-1}})$ and we recall that $\rho_{p}^{A_N}$ is defined by conditioning on the outcome $p$ of the measurement $\cB$ on the systems $A_1 \dots A_{N-1}$ of $\rho$. 
\end{definition}}

\begin{figure}[t!]
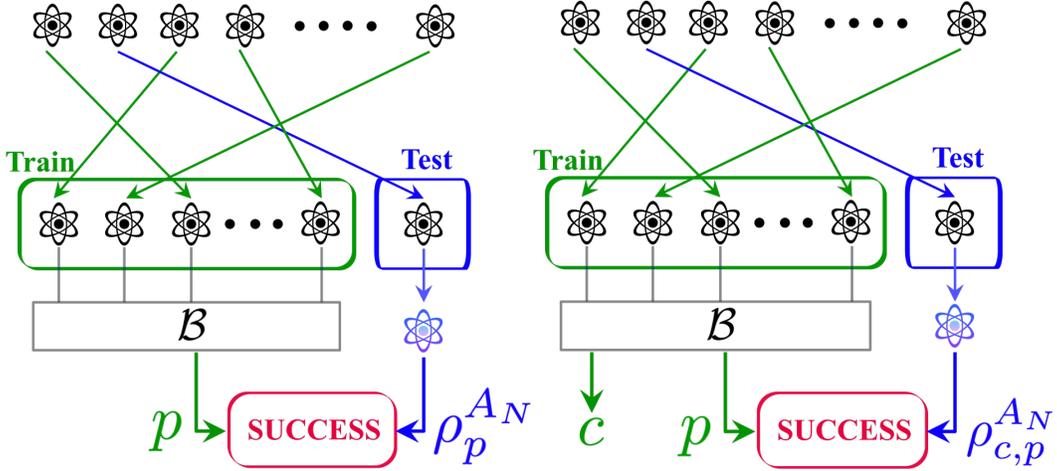

	\centering
	\begin{minipage}{.45\textwidth}
		\centering
		\includegraphics[width=1\linewidth]{Fig3-P.png}
	\end{minipage}%
	\begin{minipage}{.45\textwidth}
		\centering
		\includegraphics[width=1\linewidth]{Fig3-C,P.png}
	\end{minipage}
	\caption{A general algorithm for learning properties of quantum states in the non-i.i.d.\ setting. Left (resp. Right) the algorithm $\cB$ is not (resp. is) allowed to output calibration information $c$. Success occurs if prediction $p$ is (approximately) compatible with the remaining  post-measurement test copies $\rho_{p}^{A_N}$ or $\rho_{c,p}^{A_N}$.}
	\label{Fig:problem formulation-calibration}
\end{figure}

{Note that, if $\rho$ is i.i.d., the conditioning on $p$ does not have any effect on the post-measurement state and  Definition~\ref{def:error proba non-iid-w/ calibration} coincides with the usual definition of the error probability.
The following example illustrates the possible difference that conditioning on calibration data has.
\begin{example} We denote the weight of an element  $\bm{x}\in \{0,1\}^{n}$  by $|\bm{x}|= \sum_{i=1}^n x_i$. 
    Let $\rho^{A_1\cdots A_N}= \sum_{\bm{x}\in \{0,1\}^n} \frac{1}{2^n}  \proj{\bm{x}}^{\otimes N} $ be a permutation invariant state. We want to predict the (average) weight of the
    state.
    A possible algorithm $\cB$ is to measure the first system $A_1$ with the canonical basis $\{\proj{\bm{x}}\}_{\bm{x}\in \{0,1\}^{n}}$, observe $\bm{x}\in \{0,1\}^{n}$ and return the prediction $p= |\bm{x}|$ and possibly the calibration $c=\bm{x}$. The post-measurement state conditioned on the prediction-related information $p$ is:
    \[\rho_{p}^{A_N}= \frac{1}{\binom{n}{p}}\sum_{\bm{y}\in \{0,1\}^{n}: |\bm{y}|=p} \proj{\bm{y}}.\]
    On the other hand, the  post-measurement state conditioned on the prediction and calibration information is: 
    \[\rho_{c,p}^{A_N}= \proj{\bm{x}}.\]
    The states  $\rho_{p}^{A_N} $ and $\rho_{c,p}^{A_N} $ are in general different.  
    Infact, we have, with probability at least $1-1/2^{n-1}$, $\left\| \rho_{p}^{A_N}-\rho_{c,p}^{A_N} \right\|_1 = 2\left(1- 1/\binom{n}{p}\right) \ge 2- 2/n$. 
\end{example} }

{The learning problems we consider in this section are defined by a function under reasonable assumptions.}

\begin{definition}\label{def:conditions D}
	Consider a function $\mathtt{d}$ designed to determine a particular property concerning quantum states. The problem of learning the property of quantum states can be formulated using the following SUCCESS set: 
\begin{align*}
	\textup{SUCCESS}_{\eps}=\{(p, \sigma): \mathtt{d}(p, \sigma)\le \eps \} \subset \cP \times \mathrm{D}(A)
\end{align*}
where $\cP$ is a set. 
The function $\mathtt{d}$ should satisfy the following properties:
	\begin{itemize}
		\item[(a)] \textbf{Non-negativity:} for all $(p, \sigma)$, $\mathtt{d}(p, \sigma)\ge 0$.
		\item[(b)] \textbf{Boundedness:} there is a constant $C>0$ such that for all $(p, \sigma)$, $\mathtt{d}(p, \sigma)\le C$.
		\item[(c)] \textbf{Robustness: } for all $(p, \sigma)$ and $(p, \rho)$,  $|\mathtt{d}(p, \sigma) -\mathtt{d}(p, \rho)|\le  \frac{1}{2}\|\sigma- \rho\|_1$.
		\item[(d)] \textbf{Convexity in the second entry:} for all $\alpha\in (0,1)$, $(p, \sigma)$ and $(p, \rho)$,  $$\alpha \mathtt{d}(p, \sigma)+(1-\alpha)\mathtt{d}(p, \rho) \ge \mathtt{d}\left(p, \alpha\sigma+(1-\alpha)\rho \right).$$ 
	\end{itemize} 
\end{definition}
Many problems about learning properties of quantum states can be formulated using Definition~\ref{def:conditions D}. 
\begin{example}[State tomography]
	The problem of state tomography corresponds to the trace distance function $\mathtt{d}=\mathtt{d}_{\operatorname{Tr}}$ where the trace distance  is defined by $\mathtt{d}_{\operatorname{Tr}} (\rho, \sigma)=\frac{1}{2}\|\rho-\sigma\|_1$. The trace distance  satisfies all the conditions in Definition~\ref{def:conditions D}  for $C=\frac{1}{2}$. 
\end{example}
\begin{example}[Shadow tomography] Here we consider $M$ observables $0\mle O_1, \dots, O_M\mle \dI$. The shadow tomography problem corresponds to the function $\mathtt{d}$ between the tuple  $\bm{p}=(\mu_1, \dots, \mu_M)\in [0,1]^M$ and the state $\sigma$ defined  as  follows:
	\begin{align*}
		\mathtt{d}(p, \sigma) = \max_{1\le i \le M} | \mu_i-\tr{O_i\sigma}|.
	\end{align*}
	Clearly, $0\le \mathtt{d}(p, \sigma) \le 2$.	Now, let $ \sigma, \rho$ be two states and $\bm{p}= (\mu_1, \dots, \mu_M)$ be a tuple, we have:
	\begin{align*}
		|\mathtt{d}(p, \sigma)- \mathtt{d}(p, \rho)|&= \left|\max_{1\le i \le M} | \mu_i-\tr{O_i\sigma}- \max_{1\le i \le M} | \mu_i-\tr{O_i\rho}\right| 
		\\&\le  \max_{1\le i \le M} \left| | \mu_i-\tr{O_i\sigma}- | \mu_i-\tr{O_i\rho}\right|
		\\&\le  \max_{1\le i \le M} \left| \tr{O_i\sigma}-\tr{O_i\rho} \right|
		\le \frac{1}{2}\|\sigma-\rho\|_1.
	\end{align*}
	Moreover for $\alpha \in [0,1]$:
	\begin{align*}
		\alpha \mathtt{d}(p, \sigma)+(1- \alpha)\mathtt{d}(p, \rho) &=  \alpha\max_{1\le i \le M} |\mu_i - \tr{O_i\sigma}|+ (1- \alpha) \max_{1\le i \le M} |\mu_i-\tr{O_i\rho}|
		\\&\ge \max_{1\le i \le M} \left(\alpha \left| \mu_i- \tr{ O_i\sigma}\right|+(1- \alpha)\left|\mu_i-\tr{ O_i\rho}\right| \right)
		\\&\ge \max_{1\le i \le M} \left|\mu_i - \alpha \tr{O_i\sigma} -(1-\alpha) \tr{O_i\rho} \right| 
		\\&=\mathtt{d}\left(p, \alpha\sigma+(1- \alpha)\rho\right).
	\end{align*}
	Finally $\mathtt{d}$ satisfies the conditions in Definition~\ref{def:conditions D}.
\end{example}
\begin{example}[Verification of a pure state]
	Given an ideal pure state $\ket{\Psi}$ and a binary prediction $p\in \{0, 1\}$ representing whether the algorithm accepts or rejects,  we define $\mathtt{d}$ as follows: 
	\begin{align*}
		\mathtt{d}(p, \sigma)=p+ (1-p)(1-\bra{\Psi} \sigma \ket{\Psi}). 
	\end{align*}
	If we can prove that:
	\begin{align*}
		\prs{p}{\mathtt{d}(p, \rho^{A_N}_{p})>\eps}\le \delta
	\end{align*}
	Then we have both completeness and soundness: 
	\begin{itemize}
		\item \textbf{Completeness. } 
		 If the verifier receives the state $\rho= \proj{\Psi}^{\otimes N}$, then $\bra{\Psi} \rho^{A_N}_{p} \ket{\Psi}= \bra{\Psi} \proj{\Psi} \ket{\Psi}=1$. On the other hand, with probability $1-\delta$, we have $p=p+ (1-p)(1-\bra{\Psi} \rho^{A_N}_{p} \ket{\Psi})\le \eps $ hence $p=0$ and the verifier accepts. 
		\item \textbf{Soundness. } If the verifier accepts, i.e., $p=0$, then we have with a probability at least $1-\delta$, $1-\bra{\Psi} \rho^{A_N}_{p} \ket{\Psi}= p+ (1-p)(1-\bra{\Psi} \rho^{A_N}_{p} \ket{\Psi})\le \eps$ therefore 
		the post-measurement state $\rho^{A_N}_{p}$ satisfies $\tr{\proj{\Psi}\rho^{A_N}_{p} }\ge 1- \eps$ with a probability at least $1-\delta$.  
	\end{itemize}
	Let us show that $\mathtt{d}$ satisfies the conditions in Definition~\ref{def:conditions D}. First, $\mathtt{d}$ is clearly non negative and at most $1$. For two states $\sigma$ and $\tau$, we have:
	\begin{align*}
		|\mathtt{d}(p, \tau)- \mathtt{d}(p, \sigma) | =(1-p)|\bra{\Psi} \sigma-\tau \ket{\Psi}| \le \frac{1}{2}\|\sigma-\tau\|_1
	\end{align*}
	so $\mathtt{d}$ satisfies the robustness condition. For the convexity, let $\alpha \in [0,1]$, we have:
	\begin{align*}
		\alpha \mathtt{d}(p, \sigma)+ (1- \alpha)\mathtt{d}(p, \tau) &= p+ \alpha (1-p) (1-\bra{\Psi} \sigma \ket{\Psi})+(1- \alpha) (1-p) (1-\bra{\Psi} \tau \ket{\Psi})
		\\&= p+ (1-p)  \left(1-\bra{\Psi}  \alpha\sigma+(1- \alpha)\tau \ket{\Psi} \right)
		\\&= \mathtt{d}\left(p, \alpha\sigma+(1- \alpha)\tau\right).
	\end{align*}    
\end{example}

\begin{example}[Testing mixedness of states]
In the problem of testing mixedness of states, we would like to test whether $\sigma=\frac{\dI}{d}$ or $\frac{1}{2}\left\|\sigma-\frac{\dI}{d}\right\|_{1}>\eps$. We can define $\mathtt{d}$ for a binary  prediction $p\in \{0, 1\}$ and a quantum state $\sigma$:
	\begin{align*}
		\mathtt{d}(p, \sigma)= p+ \frac{1}{2}(1-p)\left\|\sigma-\tfrac{\dI}{d}\right\|_1
	\end{align*}
	where $p\in \{0,1\}$ represents the outcome of the algorithm, ‘$0$' for the null hypothesis and ‘$1$' for the alternate hypothesis. Similar to the previous example, $\mathtt{d}$ satisfies the conditions in Definition~\ref{def:conditions D}.
	
	Moreover, we can show that $\mathtt{d}(p, \sigma)\le \eps$ implies that algorithm $\cB$ is correct. Indeed, if $\sigma=\frac{\dI}{d}$ then $\mathtt{d}(p, \sigma)= p $  thus $p=\mathtt{d}(p, \sigma)\le \eps$ implying $p=0$. On the other hand, if $\frac{1}{2} \left\|\sigma-\frac{\dI}{d}\right\|_1 >\eps$ then we have $   (1-p)\eps < \frac{1}{2}(1-p)\left\|\sigma-\frac{\dI}{d}\right\|_1\le  \mathtt{d}(p, \sigma)\le \eps $ implying $(1-p)<1$ and finally $p=1$.
\end{example}

\begin{figure}[t!]
	\centering
	\includegraphics[width=0.6\columnwidth]{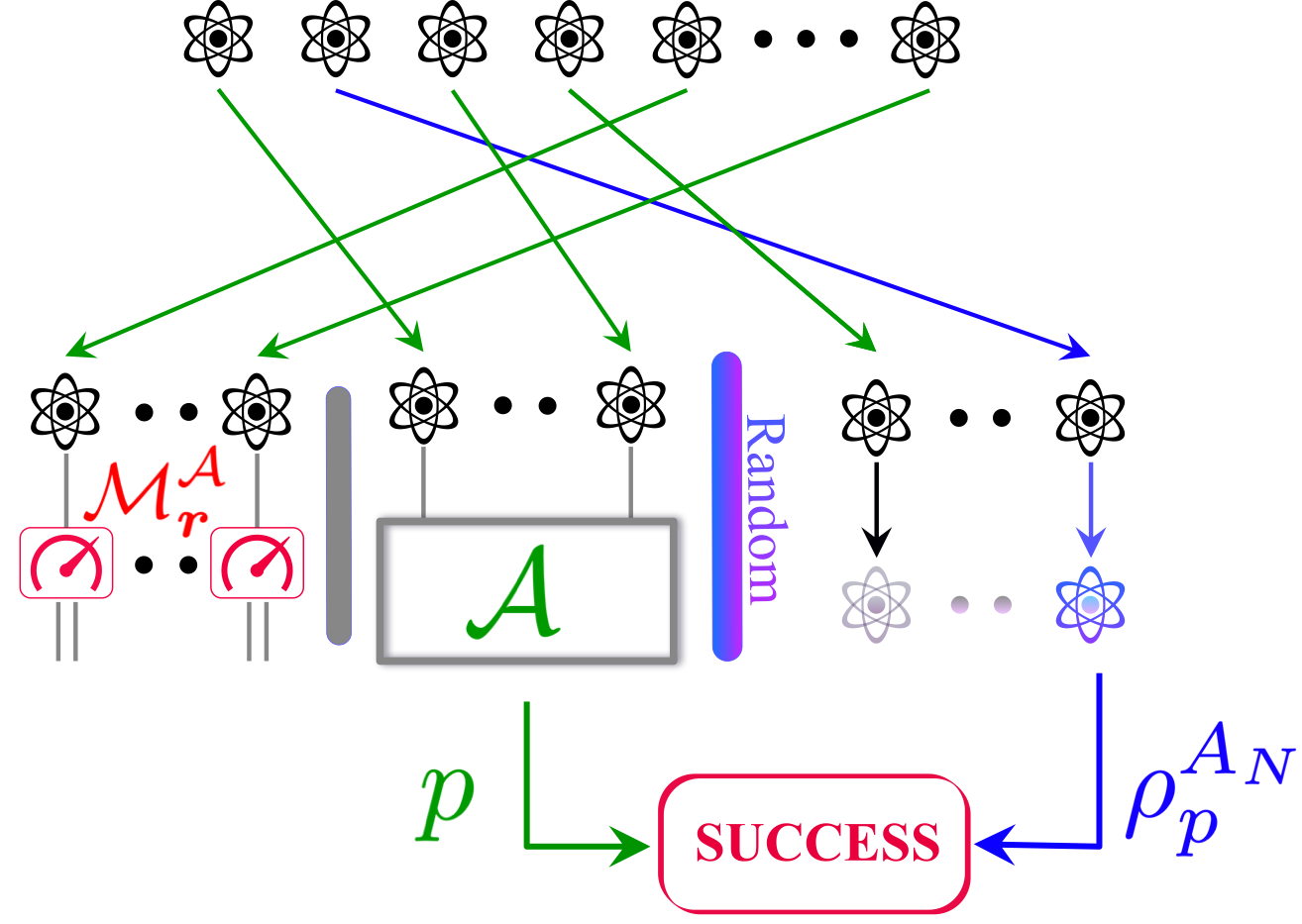}
	\caption{Illustration of Algorithm~\ref{alg-non-iid-w/-calibration}.  Algorithm~\ref{alg-non-iid-w/-calibration} measures a large number of the state's subsystems using $\cM^{\cA}_{\bm{r}}$ that represents measurement devices  uniformly chosen from the i.i.d.\ algorithm's set of POVMs $\{\mathcal{M}_t\}_t$ (red and green parts). 
    Then,  Algorithm~\ref{alg-non-iid-w/-calibration} applies the data processing of Algorithm $\mathcal{A}$ to the outcomes of  a part of these subsystems (green part) leading to a prediction $p$. Success occurs if $p$ is (approximately) compatible with the remaining  post-measurement test copy $\rho_{p}^{A_N}$.}
	\label{fig:illustration of general strategy-w/-calib}
\end{figure}
For problems defined with a function $\mathtt{d}$ satisfying the conditions in Definition~\ref{def:conditions D}, we propose  Algorithm~\ref{alg-non-iid-w/-calibration} in the  non-i.i.d.\ setting (See Figure~\ref{fig:illustration of general strategy-w/-calib} for an illustration). 

\begin{algorithm}[t!]
	\caption{Predicting  properties of quantum states in the   non-i.i.d.\  setting without calibration information - Non adaptive algorithms}\label{alg-non-iid-w/-calibration}
	\begin{algorithmic}[t!]
		\Require  The  measurements $\{\cM_t^{\cA}\}_{1\le t \le k_{\cA}}$ of algorithm $\cA$. \\ 
		A permutation invariant state $\rho^{A_1 \cdots A_N}$. 
		\Ensure Adapt the algorithm $\cA$ to non-i.i.d.\  inputs $\rho^{A_1 \cdots A_N}$.
  \State Sample $l\sim \unif\{k+1,\dots,k+\frac{N}{2}\}$ and $(r_1, \dots, r_l)\overset{iid}{\sim} \unif\{1, \dots, k_{\cA}\} $
    \State For $t=k+1, \dots, l$, apply $\cM_{r_t}^{\cA}$ to  system $A_{t}$  and obtain outcome
  $\bm{w} \leftarrow \bigotimes_{t=k+1}^l\cM_{r_t}^{\cA}(\rho)$
   \State For $t=1, \dots, k$, apply $\cM_{r_t}^{\cA}$ to  system $A_{t}$ and obtain outcome
  $\bm{v} \leftarrow \bigotimes_{t=1}^k\cM_{r_t}^{\cA}(\rho_{\bm{w}})$
  \State For $t=1, \dots, k_{\cA}$, let $s(t)\in[k]$ be the first integer such that $r_{s(t)} = t$     
  \State Run the prediction of algorithm $\cA$ to the measurement outcomes $v_{s(1)}, \dots, v_{s(k_{\cA})}$ and obtain $p$
		\\
		\Return $p$.
	\end{algorithmic}
\end{algorithm}

Recall the definition of the error probability for algorithms without calibration information:
\begin{align*}
	\delta_{\cB}(N, \rho^{A_1 \cdots A_N},\eps)&= \prs{p}{\left(p, \rho_{p}^{A_N} \right) \notin \textup{SUCCESS}_\eps}.
\end{align*}

The main result of this section is to control the error probability of  Algorithm~\ref{alg-non-iid-w/-calibration}. Recall that we consider problems defined by a function $\mathtt{d}$ upper bounded by a constant $C>0$ (see Definition~\ref{def:conditions D}) and 
$\rho_{l,\bm{r,w}}^{A_N}= \ptr{-A_N}{\rho^{A_{l+1} \cdots A_N}_{l,\bm{r,w}}}$.

\begin{theorem}\label{thm:gen to non iid - iid alg-f(B)} Let $\eps>0$ and $k_{\cA} < k < N/2$. Let $\cA$ be a non-adaptive  algorithm performing incoherent measurements with $\{\cM_t\}_{1\le t \le k_{\cA}}$. There is an algorithm (without calibration information) $\cB$ suitable for arbitrary input states,  performing  i.i.d.\ measurements drawn from $\unif\{\cM_t\}_{1\le t \le k_{\cA}}$ and possessing an error probability  satisfying for all $\eta>0$:
	\begin{align*}
		&\delta_\cB(N, \rho^{A_1 \cdots A_N},\eps)
  \\&\le  \frac{C}{\eps}\sup_{l,\bm{r,w} }\delta_\cA\left(k_{\cA}, \left(\rho_{l,\bm{r,w}}^{A_N}\right)^{\otimes k_{\cA}}, \eta\right) +  \frac{2\eta}{\eps}+ \frac{C}{\eps}k_{\cA}e^{- {k}/k_{\cA} } + \frac{4C}{\eps}\sqrt{ \frac{k^2\log(d)}{N \eta^2}} +\frac{2C}{\eps}\sqrt{\frac{{k^2}\log(d)} {N}}.
	\end{align*}
\end{theorem}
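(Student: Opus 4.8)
The plan is to reuse the skeleton of the proof of Theorem~\ref{thm:gen to non iid - iid alg}, but to track \emph{expectations} of the score function $\mathtt{d}$ rather than probabilities, converting between the two only at the very end. Since the error probability without calibration is $\delta_\cB(N,\rho,\eps)=\prs{p}{\mathtt{d}(p,\rho_p^{A_N})>\eps}$, I would first apply Markov's inequality to get
\[
\delta_\cB(N,\rho,\eps)\le \frac{1}{\eps}\,\exs{p}{\mathtt{d}(p,\rho_p^{A_N})},
\]
so that it suffices to bound $\exs{p}{\mathtt{d}(p,\rho_p^{A_N})}$ and to carry a global factor $1/\eps$ onto every resulting term; this already explains the overall shape of the right-hand side.

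The step I expect to be the main obstacle is passing from the calibration-free test state $\rho_p^{A_N}$ to the fully conditioned states that the de Finetti machinery controls. Because Algorithm~\ref{alg-non-iid-w/-calibration} discards $(l,\bm{r,w})$ and the learning outcomes, revealing only $p$, the relevant test state is the mixture $\rho_p^{A_N}=\exs{(l,\bm{r,w,v})\mid p}{\rho_{l,\bm{r,w,v}}^{A_N}}$. This is exactly where hypothesis (d) of Definition~\ref{def:conditions D} becomes indispensable: applying Jensen's inequality to the convex map $\sigma\mapsto\mathtt{d}(p,\sigma)$ gives $\mathtt{d}(p,\rho_p^{A_N})\le \exs{(l,\bm{r,w,v})\mid p}{\mathtt{d}(p,\rho_{l,\bm{r,w,v}}^{A_N})}$, and averaging over $p$ yields $\exs{p}{\mathtt{d}(p,\rho_p^{A_N})}\le \exs{p,l,\bm{r,w,v}}{\mathtt{d}(p,\rho_{l,\bm{r,w,v}}^{A_N})}$ for the full joint law. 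This convexity move is the one genuine difference from the calibrated proof, and it is why the statement must be restricted to score functions obeying Definition~\ref{def:conditions D}.

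From here the remaining steps mirror the calibrated argument, paying for each approximation with boundedness (b) and robustness (c). First I would replace $\rho_{l,\bm{r,w,v}}^{A_N}$ by $\rho_{l,\bm{r,w}}^{A_N}$ via a case split: on the event $\{\|\rho_{l,\bm{r,w,v}}^{A_N}-\rho_{l,\bm{r,w}}^{A_N}\|_1\le\eta\}$ robustness costs only $\eta/2$, while on its complement boundedness costs $C$, and the probability of the complement is at most $4\sqrt{k^2\log(d)/(N\eta^2)}$ by Lemma~\ref{lemma:equ of past and future states}; this produces an $\tfrac{\eta}{2\eps}$ contribution together with the term $\tfrac{4C}{\eps}\sqrt{k^2\log(d)/(N\eta^2)}$. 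Next I would invoke the randomized de Finetti Theorem~\ref{thm:gen-de finetti}: the map $\bm{v}\mapsto \exs{\cD}{\mathtt{d}(\cD(\bm{v}),\rho_{l,\bm{r,w}}^{A_N})}$ is bounded by $C$, so swapping the genuinely measured state $\rho_{l,\bm{r,w}}^{A_1\cdots A_k}$ for the i.i.d.\ state $(\rho_{l,\bm{r,w}}^{A_N})^{\otimes k}$ changes the expectation by at most $C$ times the $L_1$ de Finetti error $2\sqrt{k^2\log(d)/N}$, giving $\tfrac{2C}{\eps}\sqrt{k^2\log(d)/N}$.

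Finally, on the i.i.d.\ state I would reconnect the uniformly sampled measurements to algorithm $\cA$. Introducing the coupon-collector event $\cG=\{[k_\cA]\subset\{r_t\}_{t\le k}\}$, under $\cG$ the first $k$ outcomes contain one independent copy measured by each device $\cM_t^\cA$, so the prediction is distributed exactly as $\cA\big((\rho_{l,\bm{r,w}}^{A_N})^{\otimes k_\cA}\big)$; under $\cG^c$ I bound $\mathtt{d}$ by $C$ and use $\pr{\cG^c}\le k_\cA e^{-k/k_\cA}$, producing $\tfrac{C}{\eps}k_\cA e^{-k/k_\cA}$. A last split of $\exs{p\sim\cA(\sigma^{\otimes k_\cA})}{\mathtt{d}(p,\sigma)}$ into its $\{\mathtt{d}>\eta\}$ part, bounded by $C\,\delta_\cA(k_\cA,\sigma^{\otimes k_\cA},\eta)$ through (b), and its $\{\mathtt{d}\le\eta\}$ part, bounded by $\eta$, followed by taking the supremum over $l,\bm{r,w}$, yields the leading term $\tfrac{C}{\eps}\sup_{l,\bm{r,w}}\delta_\cA$ and a further $\tfrac{\eta}{\eps}$. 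Summing all contributions and loosely bounding the two $\eta$-terms together by $\tfrac{2\eta}{\eps}$ reproduces the claimed inequality.
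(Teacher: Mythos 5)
Your proposal is correct and takes essentially the same route as the paper: your Markov-plus-Jensen (convexity) reduction is precisely the content of the paper's Lemma~\ref{lem:B-f(B)}, and your remaining steps --- the case split via Lemma~\ref{lemma:equ of past and future states}, the de Finetti swap via Theorem~\ref{thm:gen-de finetti}, the coupon-collector event $\cG$, and the final tail split producing $\delta_\cA$ --- reproduce exactly the proof of Theorem~\ref{thm:gen to non iid - iid alg}, which the paper invokes as a black box after the lemma. Carrying expectations through each step instead of converting back to probabilities via the layer-cake bound $\exs{}{\mathtt{d}}\le \eta + C\,\prs{}{\mathtt{d}>\eta}$ is only a bookkeeping difference (it even gives the marginally tighter $\tfrac{3\eta}{2\eps}$ in place of $\tfrac{2\eta}{\eps}$).
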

\begin{proof}
	Here, we show how to relate the approximation of the post-measurement state $\rho_{p}^{A_N}$ with the approximation of the post-measurement state $\rho_{l,\bm{r,w},p}^{A_N}$. 
	\begin{lemma}\label{lem:B-f(B)}
		We have  for all $\eta > 0$:
		\begin{align*}
			\prs{p}{(p, \rho_{p}^{A_N})\notin  \textup{SUCCESS}_{\eps}}\le \frac{\eta}{\eps} + \frac{C}{\eps}\;\prs{l,\bm{r,w},p}{(p, \rho_{l,\bm{r,w},p}^{A_N})\notin  \textup{SUCCESS}_{\eta}}.
		\end{align*}
	\end{lemma}
	Once we have this lemma, we obtain the theorem by applying Theorem~\ref{thm:gen to non iid - iid alg}. 
	\begin{proof}We use the notation $\bigotimes_{t=1}^l \cM_{t}= \{M_{\bm{x}}\}_{\bm{x}}$. Since  $\mathtt{d}$ is  convex in the second entry, we have:
		\begin{align*}
			\exs{l,\bm{r,w},p}{\mathtt{d}\left(p, \rho_{l,\bm{r,w},p}^{A_N}\right)} &= \exs{l,\bm{r}}{\sum_{\bm{x}} \tr{M_{\bm{x}} \rho} \mathtt{d}\left(p, \rho_{l,\bm{r,x},p}^{A_N}\right)}  
			\\&= \exs{l,\bm{r}}{ \sum_y \sum_{{\bm{x}}:p=y} \tr{M_{\bm{x}} \rho} \mathtt{d}\left(y, \rho_{l,\bm{r,x},y}^{A_N}\right)} 
			\\&\ge \exs{l,\bm{r}}{ \sum_y q(y) \mathtt{d}\left(y,\frac{1}{q(y)} \sum_{{\bm{x}}:p=y} \tr{M_{\bm{x}} \rho}\rho_{l,\bm{r,x},y}^{A_N}\right)} 
			\\&= \exs{l,\bm{r}}{ \sum_y q(y) \mathtt{d}\left(y,\rho_{l, \bm{r}, y}^{A_N}\right)}
			=\exs{p}{\mathtt{d}\left(p, \rho_{p}^{A_N} \right)}
		\end{align*}
		where we use the notation $q(y)=\sum_{{\bm{x}}:p=y} \tr{M_{\bm{x}} \rho}$.
		Finally, we apply a simple Markov's inequality  then the last inequality to obtain for all $\eta>0$:
		\begin{align*}
			\prs{p}{\left(p, \rho_{p}^{A_N}\right)\notin  \textup{SUCCESS}_{\eps}}&= \prs{p}{\mathtt{d}\left(p,\rho_{p}^{A_N}\right)>\eps } 
			\\&\le \frac{1}{\eps}  \exs{{p}}{\mathtt{d}\left(p,\rho_{p}^{A_N}\right) } 
			\\&\le \frac{1}{\eps} \exs{l,\bm{r,w},p}{\mathtt{d}\left(p, \rho_{l,\bm{r,w},p}^{A_N}\right)}
			\\&= \frac{1}{\eps}\int_0^C  \prs{l,\bm{r,w},p}{\mathtt{d}\left(p, \rho_{l,\bm{r,w},p}^{A_N}\right)\ge x} \diff{x}
			\\&\le \frac{\eta}{\eps} + \frac{C}{\eps}\; \prs{l,\bm{r,w},p}{(p, \rho_{l,\bm{r,w},p}^{A_N})\notin  \textup{SUCCESS}_{\eta}}.
		\end{align*}
		
\end{proof}

\end{proof}

Similarly, using Lemma~\ref{lem:B-f(B)} we can generalize Theorem~\ref{thm:general-meas} to control the error probability without calibration information of Algorithm~\ref{alg-non-iid-general}:
\begin{theorem}\label{thm:general-mea-f(B)}
	Let $\eps>0$ and $1\le k< N/2$. Let $\cA$ be a general algorithm. 	 Algorithm~\ref{alg-non-iid-general} (without calibration information) has an error probability satisfying for all $\eta>0$:
	\begin{align*}
		\delta_\cB\left(N, \rho^{A_1 \cdots A_N}, \eps\right) 
		&\le  \frac{C}{\eps}\sup_{l,\bm{w}}\delta_\cA\left(k, (\rho^{A_N}_{l,\bm{w}})^{\otimes k}, \eta\right)+ \frac{2\eta}{\eps} +\frac{12C}{\eps}\sqrt{\frac{2k^{3} d^{2} \log(d)^{} }{N^{} \eta^2}} +   \frac{2C}{\eps}\sqrt{             
			\frac{2k^{3} d^{2} \log(d)^{} }{N^{}}}.	
	\end{align*}
\end{theorem}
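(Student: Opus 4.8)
The plan is to reduce the no‑calibration guarantee for Algorithm~\ref{alg-non-iid-general} to the calibration guarantee already proved in Theorem~\ref{thm:general-meas}, at the cost of a multiplicative factor $C/\eps$ and an additive $2\eta/\eps$ term coming from a Markov argument. The reduction is exactly the one used in the proof of Theorem~\ref{thm:gen to non iid - iid alg-f(B)}: I would first establish the analogue of Lemma~\ref{lem:B-f(B)} for a general algorithm, whose calibration data is now the pair $(l,\bm{w})$ produced by the informationally complete device $\cM_{\text{dist}}$ (rather than the non‑adaptive outcomes $(l,\bm{r,w})$), and then invoke Theorem~\ref{thm:general-meas} with precision parameters $\eps\mapsto\eta$ and $\eps'\mapsto\eta$.

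For the first step, the key structural fact is that discarding the calibration register is the same as marginalizing over it: since $(l,\bm{w},p)$ are jointly classical outcomes of $\cB$ and the test system $A_N$ is disjoint from the measured register, the state conditioned on $p$ alone is the convex combination
\[
\rho_{p}^{A_N}=\sum_{l,\bm{w}}\pr{l,\bm{w}\mid p}\,\rho_{l,\bm{w},p}^{A_N}.
\]
Applying convexity of $\mathtt{d}$ in its second argument (property (d) of Definition~\ref{def:conditions D}) together with Jensen's inequality, exactly as in Lemma~\ref{lem:B-f(B)}, gives
\[
\exs{p}{\mathtt{d}(p,\rho_{p}^{A_N})}\le \exs{l,\bm{w},p}{\mathtt{d}(p,\rho_{l,\bm{w},p}^{A_N})}.
\]
A Markov inequality on the left, followed by the tail representation $\exs{}{\mathtt{d}}=\int_0^C \pr{\mathtt{d}\ge x}\diff x$ and the split $\int_0^{\eta'}+\int_{\eta'}^C$ on the right (using boundedness, property (b)), then yields for every $\eta'>0$
\[
\delta_\cB(N,\rho^{A_1\cdots A_N},\eps)\le \frac{\eta'}{\eps}+\frac{C}{\eps}\,\prs{l,\bm{w},p}{(p,\rho_{l,\bm{w},p}^{A_N})\notin\textup{SUCCESS}_{\eta'}}.
\]

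Finally I would set $\eta'=2\eta$ and bound the remaining calibration error probability, which is precisely $\delta_\cB^{\mathrm{calib}}(N,\rho,2\eta)$, by Theorem~\ref{thm:general-meas} with $\eps=\eps'=\eta$ (so that $\eps+\eps'=2\eta$). Substituting its three terms and distributing the prefactor $C/\eps$ reproduces the claimed inequality verbatim. The only genuinely non‑routine point — and the one I would check most carefully — is the validity of the convexity/marginalization identity for $\rho_p^{A_N}$ when $\cA$ is an arbitrary, possibly coherent and adaptive, measurement channel rather than a product of single‑copy POVMs: here the outcomes are no longer simple classical strings, but the identity still holds because not reading the calibration register amounts to tracing it out in the classical–quantum description of $(\cB\otimes\id)(\rho)$, so the convexity step applies unchanged. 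Everything else (the Markov/tail estimate, the parameter matching $\eta'=2\eta$, and the appeal to Theorem~\ref{thm:general-meas}) is mechanical.
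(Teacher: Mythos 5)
Your proposal is correct and takes essentially the same route as the paper, which only sketches this result by stating that Lemma~\ref{lem:B-f(B)} combined with Theorem~\ref{thm:general-meas} yields it; your parameter matching ($\eta' = 2\eta$ in the Markov/tail step, then $\eps = \eps' = \eta$ in Theorem~\ref{thm:general-meas}) reproduces the stated bound exactly. Your explicit check that the marginalization identity $\rho_{p}^{A_N}=\sum_{l,\bm{w}}\pr{l,\bm{w}\mid p}\,\rho_{l,\bm{w},p}^{A_N}$ survives when $\cA$ is an arbitrary coherent measurement channel is a valid (and welcome) filling-in of a step the paper leaves implicit, since the calibration data $(l,\bm{w})$ and the prediction are jointly classical and discarding calibration is tracing out a classical register.
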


\section{Verification of pure states in expectation}\label{App: verification of pure states in expectation}
In Section~\ref{sec:application-verification}, we mentioned that for the problems of verifying one pure state, the soundness condition is often formulated in terms of expectation rather than probability. We used the formulation with probability in Section~\ref{sec:application-verification} because we wanted to verify many pure states simultaneously. Here, we show that a similar statement can be formulated in expectation if we focus on verifying \textit{one} pure state. The techniques are similar but do not follow directly from Theorem~\ref{thm:gen to non iid - iid alg}   nor Theorem~\ref{thm:general-meas}. 

Recall that in the context of verifying a pure state, we are given an ideal known  state $\Psi$  and copies of an unknown state. The objective is to verify whether the received reduced state is exactly the ideal state or far from it in fidelity. Formally, a verifier should satisfy the completeness and soundness conditions: 

\begin{enumerate}
	\item  \textit{\textbf{Completeness.}} The verifier accepts upon receiving the pure i.i.d.\ states $\proj{\Psi}^{\otimes N}$ with high probability, i.e., if $\Pi_{\text{Accept}}$ represents the observable in which the verification protocol accepts, the completeness condition is met when the following inequality holds:
	\begin{align*}
			\tr{ \Pi_{\text{Accept}}  \proj{\Psi}^{\otimes N-1} } \ge 1-\eps. 
	\end{align*}
	\item  \textit{\textbf{Soundness.}}  When the verifier accepts, the quantum state passing the verification protocol (post-measurement state conditioned on a passing event) is close to the pure ideal state $\proj{\Psi}$ with high probability, i.e., if $\Pi_{\text{Accept}}$ represents the observable in which the verification protocol accepts, the soundness condition is met when the following inequality holds:
	\begin{align*}
		\tr{ \Pi_{\text{Accept}} \otimes \left(\dI-\proj{\Psi}\right)\rho^{A_1\cdots A_N}} \le \eps. 
	\end{align*}
	In the latter scenario, the protocol can receive a possibly highly entangled state $\rho^{A_1 \cdots A_N}$.
\end{enumerate}

We can show the following proposition:
\begin{proposition}
		A pure state can be verified using Clifford measurements and a number of copies satisfying:
		\begin{align*}
			N=\frac{200^2 e^4 \log(1/\eps)^2 \log(d) }{\eps^6}
		\end{align*}
\end{proposition}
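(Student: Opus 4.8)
The plan is to instantiate Algorithm~\ref{alg-non-iid} with the i.i.d.\ classical-shadows estimator of \cite{huang2020predicting} using global Clifford measurements, applied to the single observable $O = \proj{\Psi}$, and to exploit the fact that the soundness condition in expectation is \emph{linear} in the input state $\rho^{A_1 \cdots A_N}$. Concretely, the verifier $\cB$ picks $l \sim \unif\{k+1, \dots, k+\frac{N}{2}\}$, measures systems $A_{k+1} \cdots A_l$ with independent random Clifford measurements (drawn from $\cllaw$) to obtain the projection outcomes $\bm{w}$, then measures the learning block $A_1 \cdots A_k$ with independent Cliffords to produce the classical-shadows estimate $\hat{f}$ of $\bra{\Psi} \rho_{l,\bm{r,w}}^{A_N} \ket{\Psi}$; it accepts (outputs $0$) if and only if $\hat{f} \ge 1 - \eps/2$. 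Here $\Pi_{\text{Accept}} = \cB^\dagger(\proj{0})$, so that $\tr{\Pi_{\text{Accept}} \proj{\Psi}^{\otimes N-1}}$ and $\tr{\Pi_{\text{Accept}} \otimes (\dI - \proj{\Psi}) \rho}$ are exactly the completeness and soundness quantities. Since all Clifford measurements are i.i.d.\ samples from $\cllaw$, we may take $\cR = \cl(2^n)$ and $q = \cllaw$ in Theorem~\ref{thm:gen-de finetti}, so no coupon-collector overhead is needed and we may set $k = k_\cA$.

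First I would dispatch completeness. On input $\proj{\Psi}^{\otimes N}$ the state is i.i.d., so every post-measurement state equals $\proj{\Psi}$ and the learning block consists of genuine i.i.d.\ copies of $\proj{\Psi}$; the i.i.d.\ classical-shadows guarantee \cite{huang2020predicting} then yields $\hat{f} \ge 1 - \eps/2$, hence acceptance, with probability at least $1 - \eps$ once $k_\cA = \cO\!\left(\log(1/\eps)/\eps^2\right)$. For soundness I would first rewrite the linear criterion as an acceptance-weighted expected infidelity,
\begin{align*}
\tr{\Pi_{\text{Accept}} \otimes (\dI - \proj{\Psi}) \rho} = \exs{l, \bm{r,w,v}}{\bid\{\hat{f} \ge 1 - \eps/2\}\left(1 - \bra{\Psi} \rho_{l,\bm{r,w,v}}^{A_N} \ket{\Psi}\right)},
\end{align*}
and then apply Lemma~\ref{lemma:equ of past and future states} to replace the fully conditioned test state $\rho_{l,\bm{r,w,v}}^{A_N}$ by the projection-conditioned state $\rho_{l,\bm{r,w}}^{A_N}$, at the cost of an additive error $\cO\!\left(\sqrt{k^2 \log(d)/N}\right)$ coming from $\exs{}{\|\rho_{l,\bm{r,w,v}}^{A_N} - \rho_{l,\bm{r,w}}^{A_N}\|_1}$.

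Writing $f_{\bm{w}} = \bra{\Psi} \rho_{l,\bm{r,w}}^{A_N} \ket{\Psi}$, the remaining quantity is $\exs{l,\bm{r,w}}{(1 - f_{\bm{w}})\, \prs{}{\hat{f} \ge 1 - \eps/2 \mid l,\bm{r,w}}}$. A second use of the de Finetti theorem (Theorem~\ref{thm:gen-de finetti}) bounds, in expectation over $l,\bm{r,w}$, the total-variation distance between the true distribution of the learning-block outcomes and the i.i.d.\ distribution obtained from $(\rho_{l,\bm{r,w}}^{A_N})^{\otimes k}$; since $1 - f_{\bm{w}} \le 1$, replacing the true acceptance probability by its i.i.d.\ counterpart costs only another $\cO\!\left(\sqrt{k^2 \log(d)/N}\right)$. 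On the i.i.d.\ state a simple case split finishes the argument: if $f_{\bm{w}} \ge 1 - \eps$ then $1 - f_{\bm{w}} \le \eps$ and the contribution is at most $\eps$; if $f_{\bm{w}} < 1 - \eps$ then concentration of the classical-shadows estimator (median of means with precision $\eps/2$) forces $\prs{}{\hat{f} \ge 1 - \eps/2} \le \delta_\cA$, contributing at most $\delta_\cA$. Collecting the terms gives a soundness bound $\eps + \delta_\cA + \cO\!\left(\sqrt{k^2 \log(d)/N}\right)$; choosing $\delta_\cA = \cO(\eps)$, so that $k = k_\cA = \cO(\log(1/\eps)/\eps^2)$, and then $N = \Theta\!\left(\log^2(1/\eps)\log(d)/\eps^6\right)$ makes every error term $\cO(\eps)$, which after tracking the absolute constants yields the stated $N = 200^2 e^4 \log(1/\eps)^2 \log(d)/\eps^6$.

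The main obstacle I expect is the bookkeeping of the \emph{two separate} de Finetti reductions and making sure they compose: the first (Lemma~\ref{lemma:equ of past and future states}) controls how the \emph{test} copy $A_N$ drifts once the learning block is measured, while the second controls how the \emph{learning-block outcome statistics} deviate from i.i.d.; both must be applied under the same choice of $l$ and $\bm{r}$ so that their error terms add rather than compound. A secondary subtlety is that the argument genuinely relies on linearity --- the quantity is evaluated on the averaged state $\rho$ with no conditioning on the permutation --- which is exactly what lets us avoid the high-probability union bound used in Proposition~\ref{prop:verification of M pure states} and thereby obtain the sharper $1/\eps^6$ (rather than $\delta^{-2}\eps^{-6}$) dependence; one must check that the case split on $f_{\bm{w}}$ remains valid in expectation rather than with high probability.
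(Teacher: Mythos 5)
Your proposal is correct and follows essentially the same route as the paper's proof in Appendix~\ref{App: verification of pure states in expectation}: completeness from the i.i.d.\ classical-shadows guarantee, and soundness by exploiting linearity of $\tr{\Pi_{\text{Accept}} \otimes (\dI - \proj{\Psi})\rho}$, replacing $\rho^{A_N}_{l,\bm{r,w,v}}$ by $\rho^{A_N}_{l,\bm{r,w}}$ via Lemma~\ref{lemma:equ of past and future states}, invoking Theorem~\ref{thm:gen-de finetti} a second time to pass the learning-block outcome statistics to the i.i.d.\ surrogate $\left(\rho^{A_N}_{l,\bm{r,w}}\right)^{\otimes k}$, and finishing with the same case split on the conditional fidelity. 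The only deviations are constant-level bookkeeping (your thresholds $1-\eps/2$ and split at $1-\eps$ versus the paper's $1-\eps/5$ and $2\eps/5$), which do not change the argument.
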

\begin{proof}
	Let $\ket{\Psi}$ be the ideal state.   
	 We will use classical shadows with Clifford random measurements~\cite{huang2020predicting}.  Let  $K= 2\log(1/\eps)$, $k=\frac{4e^2 5^2 \log(1/\eps)}{\eps^2}$,  $N=\frac{4\cdot 5^2k^2\log(d)}{\eps^2}$, $l \sim \unif\{k+1, \dots, k+N/2\}$ and $U_1, \dots , U_{l} \sim \cl(2^n)$.  The state $\rho^{A_1\cdots A_{l} } $ is measured with the corresponding basis of $U_1 \otimes \cdots  \otimes U_{l}$ and the outcomes are denoted $(\bm{v},\bm{w}) = ( v_1, \dots, v_k, w_{k+1}, \dots, w_l)$. Then $K$ classical shadows are constructed as follows:
	\begin{align*}
		\hat{\rho}_{(j)} = \frac{1}{N_0/K}\sum_{t= (j-1)k/K}^{jk/K} (d+1)U_t^\dagger \proj{v_t}U_t - \dI ~~~~\text{ for }~~~~ 1\le j \le K.
	\end{align*}
Next we use the median of means statistic for $\bm{v}=(v_1, \dots, v_{k})$~\cite{huang2020predicting}:
	\begin{align*}
		\mu_{\bm{v}} = \median \left\{\tr{\hat{\rho}_{(j)} \proj{\Psi}} \right\}_{1\le j\le  K}.
	\end{align*}
We define the observable corresponding to ‘Accept':
	\begin{align*}
		\Pi_{\text{Accept}} = \frac{2}{N}\sum_{l=k+1}^{k+N/2} \exs{U \sim \cl(2^n)} {\sum_{\bm{v,w}} \bid\{\mu_{\bm{v}}\ge 1-\eps/5 \} M_{\bm{v}} \otimes  M_{\bm{w}} \otimes \dI}
	\end{align*}
	where $M_{\bm{v}}=\bigotimes_{t=1}^{k} U_t\proj{v_t}U_t^\dagger $ and $M_{\bm{w}}=   \bigotimes_{t=k+1 }^{l}   U_t\proj{w_t}U_t^\dagger$.
\paragraph{Completeness.} When the verifier receives the i.i.d.\ state $\rho^{A_1\cdots A_N}= \proj{\Psi}^{\otimes N}$, it should accept with  probability at least $1-\eps$. The probability of acceptance can be expressed using the notation $\exs{(\bm{v},\bm{w}) \sim \rho}{\cdot} = \sum_{\bm{v,w}} \tr{(M_{\bm{v}} \otimes  M_{\bm{w}})\rho }[\cdot]$:
\begin{align*}
	\tr{ \Pi_{\text{Accept}}  \proj{\Psi}^{\otimes N} } &= \exs{U \sim \cl(2^n)}{  \frac{2}{N}\sum_{l=k+1}^{k+N/2}  \sum_{\bm{v,w}} \bid\{\mu_{\bm{v}}\ge 1-\eps/5 \}  \tr{M_{\bm{v}} \otimes  M_{\bm{w}}  \proj{\Psi}^{\otimes l} }}
		\\&= \prs{l, U \sim \cl(2^n), (\bm{v},\bm{w})\sim \proj{\Psi}^{\otimes l}}{ \mu_{\bm{v}}\ge 1-\eps/5 }
		\ge 1-\eps
\end{align*}
where we use the fact that here the input state $\proj{\Psi}^{\otimes N}$ is i.i.d., the observable $O=\proj{\Psi}$ satisfies $\tr{O\proj{\Psi}}=1$ and the correctness of classical shadow protocol (\cite{huang2020predicting}, $k=\frac{4e^2 5^2 \log(1/\eps)}{\eps^2}$). 
	
\paragraph{Soundness.}
By the randomized local de Finetti theorem~\ref{thm:gen-de finetti} we have using the notation $\exs{l, U, (\bm{v},\bm{w}) \sim \rho}{X} =\frac{2}{N}\sum_{l=k+1}^{k+N/2} \exs{U \sim \cl(2^n)}{ \sum_{\bm{v,w}} \tr{(M_{\bm{v}} \otimes  M_{\bm{w}})\rho }X}$: 
	\begin{align*}
		&\tr{ 	\Pi_{\text{Accept}  } \otimes (\dI- \proj{\Psi}) \rho^{A_1 \cdots A_{N+1}}}
		\\&= \exs{l, U, (\bm{v},\bm{w}) \sim \rho}{ \bid\{\mu_{\bm{v}}\ge 1-\eps/5 \}  \tr{ (\dI- \proj{\Psi}) \rho^{A_N}_{\bm{v,w}}}}
		\\&\le \exs{l, U, (\bm{v},\bm{w}) \sim \rho}{ \bid\{\mu_{\bm{v}}\ge 1-\eps/5 \} \|\rho^{A_N}_{\bm{w}} - \rho^{A_N}_{\bm{v,w}} \|_1}  + \exs{l, U, (\bm{v},\bm{w}) \sim \rho}{ \bid\{\mu_{\bm{v}}\ge 1-\eps/5 \}  \tr{ (\dI- \proj{\Psi}) \rho^{A_N}_{\bm{w}}}}
		\\&= \exs{l, U, (\bm{v},\bm{w}) \sim \rho}{ \bid\{\mu_{\bm{v}}\ge 1-\eps/5 \} \|\rho^{A_N}_{\bm{w}} - \rho^{A_N}_{\bm{v,w}} \|_1 }
		\\&\quad +\exs{l, U, (\bm{v},\bm{w}) \sim \rho}{ \bid\{\mu_{\bm{v}}\ge 1-\eps/5 \}\bid\{ \tr{ (\dI- \proj{\Psi}) \rho^{A_N}_{\bm{w}}}\le 2\eps/5 \} \tr{ (\dI- \proj{\Psi}) \rho^{A_N}_{\bm{w}}} }
		\\&\quad +\exs{l, U, (\bm{v},\bm{w}) \sim \rho}{ \bid\{\mu_{\bm{v}}\ge 1-\eps/5 \} \bid\{ \tr{ (\dI- \proj{\Psi}) \rho^{A_N}_{\bm{w}}}> 2\eps/5 \} \tr{ (\dI- \proj{\Psi}) \rho^{A_N}_{\bm{w}}\}}}
		\\&\le  \exs{l, U, (\bm{v},\bm{w}) \sim \rho}{ \|\rho^{A_N}_{\bm{w}}-\rho^{A_N}_{\bm{v,w}}\|_1}+\frac{2\eps}{5}\cdot \exs{l, U, (\bm{v},\bm{w}) \sim \rho}{\bid\{\mu_{\bm{v}}\ge 1-\eps/5 \}}
		\\&\quad + \exs{l, U, \bm{w}\sim \rho}{\sum_{\bm{v}}  \left|\tr{M_{\bm{v}}\rho^{A_1\cdots A_k}_{\bm{w}}} -\tr{M_{\bm{v}}\left(\rho^{A_N}_{\bm{w}}\right)^{\otimes k}}\right| }  
		\\&+ \exs{l, U, \bm{w}\sim \rho, \bm{v}\sim \left(\rho^{A_N}_{\bm{w}}\right)^{\otimes k}}{ \bid\{\mu_{\bm{v}}\ge 1-\eps/5 \}  \bid\{ \tr{ (\dI- \proj{\Psi}) \rho^{A_N}_{\bm{w}}}> 2\eps/5 \} }
		\\&\le  \sqrt{\frac{4 k^2 \log(d)}{N}}+ \frac{2\eps}{5}+\sqrt{\frac{4k^2 \log(d)}{N}}+ \exs{l, U, \bm{w}\sim \rho, \bm{v}\sim \left(\rho^{A_N}_{\bm{w}}\right)^{\otimes k}}{ \bid\{\mu_{\bm{v}}-\bra{\Psi}\rho^{A_N}_{\bm{w}}\ket{\Psi} \ge \eps/5\} } 
		\\&\le  \eps
	\end{align*}
	where we use Lemma~\ref{lemma:equ of past and future states},   $k=\frac{4e^2 5^2 \log(1/\eps)}{\eps^2}$ (\cite{huang2020predicting}) and $N= \frac{4\cdot5^2k^2\log(d)}{\eps^2}$.

\end{proof}

\end{document}